%% file: main.tex
\documentclass[11pt]{article}
\usepackage[T1]{fontenc}
\usepackage{graphicx}
\usepackage[margin=1in]{geometry}
\usepackage{tikz}
\usepackage{adjustbox} 
\usetikzlibrary{positioning, calc}

\usepackage[linesnumbered,ruled,vlined]{algorithm2e}

\usepackage{amsmath,amssymb,amsfonts,verbatim}
\usepackage{amsthm}
\usepackage[colorlinks,
            linkcolor=blue,
            anchorcolor=blue,
            citecolor=blue,pagebackref
            ]{hyperref}
\usepackage{array}
\usepackage{multirow}

\usepackage{todonotes}

\newtheorem{theorem}{Theorem}[section]
\newtheorem{lemma}[theorem]{Lemma}

\newtheorem{definition}[theorem]{Definition}
\newtheorem{corollary}[theorem]{Corollary}

\theoremstyle{remark} 
\newtheorem{remark}[theorem]{Remark}

\usepackage{algorithmic}
\usepackage{float}

\renewcommand{\epsilon}{\varepsilon}
\renewcommand{\le}{\leqslant}
\renewcommand{\ge}{\geqslant}

\def\CC{\mathbb{C}}

\def\F{\mathbb{F}}
\def \mC {\mathcal{C}}

\def \mC {\mathcal{C}}

\def \mF {\mathcal{F}}

\def \mL {\mathcal{L}}

\def\cL{{\mathcal L}}

\def \Xi {{X^{[i]}}}

\def \enc{\mathsf{Enc}}

\newcommand{\Ga}{\alpha}

\def \ba {{\bf a}}
\def \bb {{\bf b}}
\def \bc {{\bf c}}

\def \bx {{\bf x}}
\def\be {{\bf e}}
\def \by {{\bf y}}

\def \bu {{\bf u}}
\def \bv {{\bf v}}

\def\cC{\mathcal{C}}
\def\F{\mathbb{F}}
\def\bu{\mathbf{u}}
\def\bv{\mathbf{v}}
\def\bw{\mathbf{w}}
\def\bx{\mathbf{x}}
\def\bc{\mathbf{c}}

\def\LRM{\mathrm{Ligero_{rm}}}

\def\rk{\mathrm{rk}}
\def\isequal{\stackrel{?}{=}}
\def\Po{\mathcal{P}}
\def\V{\mathcal{V}}
\def\cA{\mathcal{A}}
\def\cR{\mathcal{R}}
\def\IOP{\mathrm{IOP}}
\def\IOPP{\mathrm{IOPP}}

\def\cC{\mathcal{C}}
\def\cG{\mathcal{G}}
\def\bc{\mathbf{c}}
\def\F{\mathbb{F}}

\input{sections/tcb}
\begin{document}
\title{Proximity Gaps for Gabidulin Codes and Applications}
%
%

\author{Songsong Li\thanks{Shanghai Jiao Tong University, Shanghai, China. \texttt{songsli@sjtu.edu.cn}.}
\and  Chaoping Xing \thanks{Shanghai Jiao Tong University, Shanghai, China. \texttt{xingcp@sjtu.edu.cn}.}
\and Chen Yuan\thanks{Shanghai Jiao Tong University, Shanghai, China. \texttt{chen\_yuan@sjtu.edu.cn}.}
\and Ruiqi Zhu\thanks{Shanghai Jiao Tong University, Shanghai, China. \texttt{sjtuzrq7777@sjtu.edu.cn}.}
}
\date{}
\maketitle

\input{sections/abstract}

%

%
%
%
\newpage
\tableofcontents
\newpage

\input{sections/intro}
\input{sections/preliminary}

\input{sections/Proximity-gap}
\input{sections/Ligero_rm}

\bibliographystyle{alpha}
\bibliography{reference}

\end{document}

%% file: sections/tcb.tex
\usepackage{tcolorbox}
\tcbuselibrary{skins,breakable}

\makeatletter
\newcommand{\DrawLine}{%
	\begin{tikzpicture}
		\path[use as bounding box] (0,0) -- (\linewidth,0);
		\draw[color=black!75,dashed,dash phase=2pt]
		(0-\kvtcb@leftlower-\kvtcb@boxsep,0)--
		(\linewidth+\kvtcb@rightlower+\kvtcb@boxsep,0);
	\end{tikzpicture}%
}
\makeatother

\newtcolorbox{mybox}[2][]{%
	enhanced,
	title        = {#2},
	attach boxed title to top left={xshift=+3mm,yshift*=-3mm},
	breakable    = true,
	colback      = black!4,
	colframe     = black!75,
	fonttitle    = \bfseries,
	colbacktitle = black!10!white,
	coltitle     = black,
	#1
}

\newtcolorbox[auto counter]{functionality}[2][]{%
	enhanced,
	title        = {Functionality~\thetcbcounter: #2},
	attach boxed title to top left={xshift=+3mm,yshift*=-3mm},
	breakable    = true,
	colback      = yellow!4,
	colframe     = black!75,
	fonttitle    = \bfseries,
	fontupper    = \small,
	fontlower    = \small,
	colbacktitle = yellow!10!white,
	coltitle     = black,
	#1
}

\newtcolorbox[use counter=pro]{protocol}[2][]{%
	enhanced,
	title        = {Protocol~\thetcbcounter: #2},
	attach boxed title to top left={xshift=+3mm,yshift*=-3mm},
	breakable    = true,
	colback      = black!4,
	colframe     = black!75,
	fonttitle    = \bfseries, 
	fontupper    = \small,
	fontlower    = \small,
	colbacktitle = black!10!white,
	coltitle     = black,
	#1
}

\newtcbox{\xmybox}[1][red]{on line,
	arc=7pt,colback=#1!10!white,colframe=#1!50!black,
	before upper={\rule[-3pt]{0pt}{10pt}},boxrule=1pt,
	boxsep=0pt,left=6pt,right=6pt,top=2pt,bottom=2pt}

%% file: sections/abstract.tex
\begin{abstract}
Proximity gaps are central to the soundness of interactive oracle proofs of proximity (IOPPs) and polynomial commitment schemes (PCSs). An $[n,k,d]$ linear code $\mathcal C\subseteq\mathbb F^n$ has a $\delta$-proximity gap with error $\epsilon$ if, for every $\mathbf u_0,\mathbf u_1\in\mathbb F^n$, either all points on $\ell_{\mathbf u_0,\mathbf u_1}=\{\mathbf u_0+\alpha \mathbf u_1:\alpha\in\mathbb F\}$ are $\delta$-close to $\mathcal C$, or at most an $\epsilon$ fraction are. Larger values of $\delta$ typically yield stronger soundness guarantees and improved efficiency in applications to IOPPs and PCSs. Although proximity gaps for Hamming-metric codes are well understood, their rank-metric counterparts remain largely unexplored despite their applications in coding theory and cryptography. In this work, we study proximity gaps for linear rank-metric codes and their cryptographic applications.

First, we show that every $[n,k,d]$ linear rank-metric code $\mathcal C$ over $\mathbb F_{q^m}$ admits a proximity gap for every $\delta\le(d-1)/(3n)$, with error at most $q^{e+1}/q^m$, where $e=\lfloor\delta n\rfloor$. For Gabidulin codes, we improve the gap to $(d-1)/(2n)$ with error $10q^{n-1}/q^m$. These two proximity gaps match those for general linear Hamming-metric codes and Reed--Solomon (RS) codes, respectively. We prove the $(d-1)/(2n)$ bound is tight by constructing an infinite family of constant-rate Gabidulin codes and affine lines $\ell_{\mathbf u_0,\mathbf u_1}$ on which a $1-o(1)$ fraction of points are $d/(2n)$-close to the code, while $\mathbf u_1$ is at least $3d/(4n)$-far from it. At the $d/(3n)$ gap, we also give a counterexample establishing a lower bound on $\epsilon$.

As applications, we construct an IOPP for interleaved Gabidulin codes by adapting the Ligero IOPP for interleaved RS codes. We then adapt the Ligero-based PCS for ordinary polynomials to obtain a $q$-linearized polynomial commitment scheme. To our knowledge, this is the first PCS framework based on rank-metric error-correcting codes.
\end{abstract}

%% file: sections/intro.tex
\section{Introduction}

A polynomial commitment scheme (PCS) enables a prover to succinctly commit to a polynomial and later prove its evaluations at chosen points, serving as a key building block in succinct non-interactive arguments of knowledge (SNARKs)~\cite{GabizonWC19,ChiesaHMMVW20,Setty20}, with later works such as Brakedown~\cite{brakedown} and BaseFold~\cite{basefold} further improving the efficiency and field
flexibility of polynomial commitments. Beyond SNARKs, PCSs have also found applications in verifiable secret sharing~\cite{kate2010constant,ZhangLGSCLD24}, blockchain data availability sampling~\cite{HallAndersenSW24}, and secure multiparty computation~\cite{bhadauria2023private}. An efficient class of PCS construction is based on interactive oracle proof of proximity (IOPP) for error-correction codes~\cite{ligero,fri}. IOPP is an interactive oracle proof, where the verifier queries the prover's messages as oracles to verify that a witness is close to a prescribed error-correction code. 


A key soundness ingredient in these cryptographic constructions is the \emph{proximity gap} of the underlying error-correction code. Let $\cC\subseteq \F^n$ be an $[n,k,d]$ linear code. For two vectors $\bu_0,\bu_1\in \F^n$, consider the affine line
\[
    \ell_{\bu_0,\bu_1}
    =
    \{\bu_0+\lambda \bu_1:\lambda\in \F\}.
\]
For real numbers $\delta,\epsilon\in (0,1)$, the code $\cC$ is said to have a $(\delta,\epsilon)$-proximity gap for lines if, for every such line, either all points on the line are $\delta$-close to $\cC$ under the Hamming distance, or at most an $\epsilon$-fraction of the points on the line are $\delta$-close to $\cC$. Here a vector is said to be $\delta$-close to $\cC$ if its relative Hamming distance to $\cC$ is at most $\delta$; otherwise it is said to be $\delta$-far from $\cC$. A stronger property called \emph{correlated agreement} guarantees that if more than an $\epsilon$-fraction of the points on the line are $\delta$-close to $\cC$, then there is a common agreement set $E\subseteq [n]$ of size at least $(1-\delta)n$ and two codewords $\bc_0$ and $\bc_1$ such that $\bu_0$ and $\bu_1$ agree with $\bc_0$ and $\bc_1$ at $E$, respectively, implying that all points on $\ell_{\bu_0,\bu_1}$ are $\delta$-close to $\cC$.

In the Hamming-metric setting, proximity gaps are well understood and underpin the soundness of IOPPs. Typically, larger values of the proximity parameter $\delta$  yield stronger soundness guarantees and smaller proof sizes of IOPPs, leading to efficient IOPP-based PCSs. Among various code families, Reed-Solomon (RS) codes serve as the canonical choice for IOPP-based constructions: their algebraic structure enables efficient proximity testing such as FRI~\cite{fri} and further refinements include STIR and WHIR~\cite{stir,whir}. Another influential line of work, beginning with Ligero~\cite{ligero}, uses IOPPs for interleaved RS codes and has led to several efficient variants and extensions, including Brakedown, Basefold, and related multivariate constructions~\cite{brakedown,basefold,blaze25}.
 
We briefly review the known results on proximity gaps. For general Hamming-metric linear codes of block length $n$ and minimum distance $d$, it was shown in~\cite{ligero} that they admit a proximity gap of $\delta\leq (d-1)/4n$ and improved to $\delta\leq (d-1)/3n$ in \cite{AmesHIV23}. A line of work initiated by~\cite{Ben-SassonKS18} and further developed
in~\cite{BSGKS20,GKL24} establishes proximity gaps using list-decoding techniques, achieving the ``$1.5$ Johnson bound,'' namely
$\delta<1-\sqrt[3]{1-d/n}$. For RS codes, unique-decoding and list-decoding techniques yield stronger proximity gaps, namely $\delta=(d-1)/2n$ and $\delta<1-\sqrt{1-d/n}$, respectively~\cite{BCIKS23}. 
Recent works have further developed the theory of RS proximity gaps by providing both improved positive results and counterexamples: Goyal and Guruswami obtained optimal proximity gaps for subspace-design codes and  RS codes with random evaluation points~\cite{GoyalG26}, while Ben-Sasson, Carmon, Haböck, Kopparty and Saraf gave refined positive results as well as counterexamples for RS proximity gaps~\cite{Ben-SassonCHKS26}; see also~\cite{CritesS25} for related limitations and conjectures. 

All the above results are for the Hamming metric. We seek to understand whether proximity gaps for Hamming-metric codes have analogues in other metrics. Unlike the Hamming metric, which counts the number of non-zero coordinates, the rank metric is defined as the dimension of the subspace spanned by the vector's coordinates. To the best of our knowledge, proximity gaps for rank-metric codes have not yet been studied. 

Rank-metric codes, particularly Gabidulin codes~\cite{G85}, have found wide applications in random network coding~\cite{KoetterK08,SilvaKK08}, distributed storage~\cite{BartzHLPRW22}, and post-quantum cryptography~\cite{gabidulin1991ideals}. In particular, rank-metric hardness assumptions, such as the Rank Syndrome Decoding problem~\cite{GabZem14} and MiniRank~\cite{Cou01}, underlie several cryptographic constructions, including RQC~\cite{aguilarrank} and Durandal~\cite{DBLP:conf/eurocrypt/AragonBGHZ19}. However, verifying rank-metric statements (e.g., proving that a vector over an extension field $\F_{q^m}$ or a matrix over $\F_q$ after fixing an $\F_q$-basis of $\F_{q^m}$ has low rank over $\F_q$) in existing proof systems typically requires reducing these statements to generic arithmetic circuits over the base field that will incur significant overhead, as encoding relevant matrix operations and linear dependencies over extension fields leads to large circuit blowups. A more natural and efficient approach would be to construct proof systems natively in the rank-metric setting. 

Motivated by the above theoretical and practical considerations, we study proximity gaps for linear rank-metric codes and leverage them to construct IOPPs for rank-metric codes and corresponding polynomial commitment schemes.


\subsection{Our Contributions}

We first formalize the definition of proximity gaps under the rank-metric. Throughout, we work over an extension field $\F_{q^m}$ and assume $n\leq m$. Fixing an $\F_q$-basis of $\F_{q^m}$ identifies each vector in $\F_{q^m}^n$ with an $m\times n$ matrix over $\F_q$. For any $\bx,\by\in\F_{q^m}^n$, we define their rank distance and relative rank distance as
\[
    d_{\mathrm{rk}}(x,y):={\operatorname{rank}_{\F_q}(X-Y)},\quad \Delta_{\mathrm{rk}}(\bx,\by)
    =d_{\mathrm{rk}}(x,y)/n,
\]
where $X,Y\in \F_q^{m\times n}$ are the matrix representations of $\bx,\by$. Let $\cC\subseteq \F_{q^m}^n$ be a rank-metric code. The (relative) rank-distance between $\bx\in\F_{q^m}^n$ and $\cC$ is defined by $d_{\mathrm{rk}}(\bx,\cC)=\min_{\bc\in \cC}d_{\mathrm{rk}}(\bx,\bc)$ (resp. $\Delta_{\mathrm{rk}}(\bx,\cC)=d_{\mathrm{rk}}(\bx,\cC)/n$). %

For any two vectors $\bu_0,\bu_1\in\F_{q^m}^n$, we define their order-$2$ interleaving as
\[
    [\bu_0,\bu_1]
    :=
    \begin{pmatrix}
        \bu_0\\
        \bu_1
    \end{pmatrix}
    \in \F_{q^m}^{2\times n}.
\]
The rank of $[\bu_0,\bu_1]$ is defined as the rank of the $2m\times n$ matrix obtained by expanding each entry in the fixed $\F_q$-basis of $\F_{q^m}$. For a code $\cC\subseteq\F_{q^m}^n$, its interleaved code of order $2$ is $\cC^2= \{[\bc_0,\bc_1] : \bc_0,\bc_1\in\cC\}$. 

\begin{definition}[Rank-metric proximity gap for lines]
Let $\cC \subseteq \F_{q^m}^n$ be a linear rank-metric code. We say that $\cC$ has a $(\delta,\epsilon)$-\emph{proximity gap for lines} if, for all $\bu_0,\bu_1 \in \F_{q^m}^n$, the condition
\[
    \Pr_{\lambda \leftarrow \F_{q^m}}
    \left[
        \Delta_{\mathrm{rk}}(\bu_0+\lambda\bu_1,\cC)\le \delta
    \right]
    >
    \epsilon
\]
implies that
\[
    \Delta_{\mathrm{rk}}([\bu_0,\bu_1],\cC^2)\le \delta.
\]
Equivalently, there exist codewords $\bc_0,\bc_1\in\cC$ such that the interleaved vector $[\bu_0-\bc_0,\bu_1-\bc_1]$ has rank at most $\delta n$. Typically, $\delta$ is called the proximity gap parameter, and the fraction $\epsilon$ is called the soundness error parameter.
\end{definition}

Our first result shows that the proximity-gap phenomenon up to $\frac{d-1}{3n}$ also holds for arbitrary linear rank-metric codes.

\begin{theorem}[Proximity gaps for linear rank-metric codes]\label{thm:main-metric-codes}
Let $\cC\subseteq \F_{q^m}^n$ be an $\F_{q^m}$-linear rank-metric code with minimum rank distance $d$. For $0<\delta\le (d-1)/3n$, let $e=\lfloor \delta n\rfloor$. If
\[
    \Pr_{\lambda\leftarrow\F_{q^m}}
    \left[
        \Delta_{\mathrm{rk}}(\bu_0+\lambda\bu_1,\cC)\le \delta
    \right] > q^{e+1}/q^m,
\]
then
\[
    \Delta_{\mathrm{rk}}([\bu_0,\bu_1],\cC^2)\le \delta.
\]
Consequently, either all points on the affine line are $\delta$-close to $\cC$, or only a small fraction $q^{e+1}/q^m$ of them are.
\end{theorem}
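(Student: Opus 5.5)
We adapt the Ames--Hazay--Ishai--Venkitasubramaniam argument for Hamming-metric codes at $\delta\le (d-1)/3n$. We replace supports by column spaces: ``$\bx$ agrees with a codeword outside a set of $e$ coordinates'' becomes ``$\bx-\bc$ lies in an $\F_q$-subspace of dimension $\le e$''. Let $S=\{\lambda\in\F_{q^m}: d_{\rk}(\bu_0+\lambda\bu_1,\cC)\le e\}$. For each $\lambda\in S$, since $e<d/2$, let $\bc_\lambda$ be the unique closest codeword and $\be_\lambda=\bu_0+\lambda\bu_1-\bc_\lambda$, so $\rk(\be_\lambda)\le e$. Let $V_\lambda\subseteq\F_q^n$ denote the row space of the matrix of $\be_\lambda$, i.e.\ the $\F_q$-span of the rows of the $m\times n$ matrix. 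Equivalently, $V_\lambda$ is the smallest subspace $V$ with $\be_\lambda\in \F_{q^m}\otimes V$; it plays the role of the error support. Then $\dim V_\lambda\le e$.

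Step 1 (pairs determine the line). For distinct $\lambda,\mu\in S$, we set
\[
\bu_1=\frac{(\bc_\lambda-\bc_\mu)+(\be_\lambda-\be_\mu)}{\lambda-\mu},\qquad
\bu_0=\frac{\lambda\bc_\mu-\mu\bc_\lambda}{\lambda-\mu}+\frac{\lambda\be_\mu-\mu\be_\lambda}{\lambda-\mu}.
\]
Both error parts lie in $\F_{q^m}\otimes(V_\lambda+V_\mu)$, because $\F_{q^m}$-scalars preserve this space. Hence $[\bu_0,\bu_1]$ is within interleaved rank $\dim(V_\lambda+V_\mu)\le 2e$ of $\cC^2$. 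Call this pair $(\bc_0,\bc_1)$. We need to get down to $e$.

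Step 2 (a third point forces consistency). Take a third $\nu\in S$. Now $\bu_0+\nu\bu_1-(\bc_0+\nu\bc_1)$ has rank $\le 2e$, and $\bu_0+\nu\bu_1-\bc_\nu$ has rank $\le e$. Their difference is a codeword of rank $\le 3e<d$, so it vanishes: $\bc_\nu=\bc_0+\nu\bc_1$ for every $\nu\in S$, and $(\bc_0,\bc_1)$ is independent of the chosen pair. Consequently, with $\bE_i=\bu_i-\bc_i$, we have $\be_\nu=\bE_0+\nu\bE_1$ for all $\nu\in S$, where $\rk(\be_\nu)\le e$. It remains to show $\rk[\bE_0,\bE_1]\le e$.

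Step 3 (main obstacle: counting $\lambda$ with low rank on the error line). This is where the bound $q^{e+1}/q^m$ enters. In the Hamming case one unions supports. Here we argue as follows. Write $W$ for the row space of $[\bE_0,\bE_1]$ ($2m\times n$ expansion), and suppose $\dim W=r\ge e+1$. Pick an $\F_q$-subspace, or equivalently project onto $r'=e+1$ coordinates of $W$, giving an $\F_q$-linear map $\pi:\F_q^n\to\F_q^{e+1}$ such that $[\pi\bE_0,\pi\bE_1]$ still has rank $e+1$. Rank does not increase under $\pi$, so every $\nu\in S$ gives $\pi\bE_0+\nu\pi\bE_1\in\F_{q^m}^{e+1}$ of rank $\le e$. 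Such vectors have coordinates that are $\F_q$-dependent, i.e.\ there is a nonzero $a\in\F_q^{e+1}$ with $\langle \pi\bE_0,a\rangle+\nu\langle\pi\bE_1,a\rangle=0$. For a fixed $a$, this holds for at most one $\nu$, unless both inner products vanish. If both vanish, then $a$ annihilates every row of $[\pi\bE_0,\pi\bE_1]$, contradicting full rank $e+1$. Up to scaling there are $(q^{e+1}-1)/(q-1)<q^{e+1}$ choices of $a$. Hence $|S|<q^{e+1}$, contradicting the hypothesis $|S|>q^{e+1}$. So $\rk[\bE_0,\bE_1]\le e$, which is exactly $\Delta_{\rk}([\bu_0,\bu_1],\cC^2)\le\delta$.

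Steps 1--2 need $|S|\ge 3$, which follows since $q^{e+1}\ge 2$. The delicate point to verify is the choice of $\pi$ in Step 3. I would take $\pi$ to be the map sending $x$ to $(\langle w_1,x\rangle,\dots,\langle w_{e+1},x\rangle)$ for $e+1$ independent rows $w_i\in W$. One must check that this preserves rank $e+1$ of the projected interleaved matrix. This holds because the Gram-type restriction can be chosen nondegenerate; alternatively, choose coordinates via a basis change in $\mathrm{GL}_n(\F_q)$, which preserves rank and maps $\cC$ to another linear code, so that $W$ projects injectively onto $e+1$ coordinates.
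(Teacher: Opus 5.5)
Your proof is correct, but it takes a different route from the paper.

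\textbf{The paper's route.} The paper splits the argument in two.
\begin{enumerate}
\item Lemma~\ref{lem:d/3gap} proves only the dichotomy. It assumes some far point exists and swaps $\bu_0,\bu_1$ to reduce to the case where $\bu_1$ is also far. It then reparametrizes the line through two close points as $\bu'+\alpha\bv'$ with $\rk_q(\bu'),\rk_q(\bv')\le e$. The bound $3e<d$ then forces $\rk_q(\bu'+\alpha\bv')\le e$ at every close point, and these $\alpha$ are counted via $e+1$ independent columns.
\item Theorem~\ref{thm:ca} then obtains correlated agreement through a separate claim: if $d([\bu_0,\bu_1],\cC^2)>e$, some point of the $\F_{q^m}$-span of $\bu_0,\bu_1$ is far. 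This claim is proved by a maximal-distance argument and followed by two more applications of the lemma.
\end{enumerate}

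\textbf{Your route.} You go directly to correlated agreement, in the style of the Hamming-metric argument of~\cite{AmesHIV23}. Interpolating two decodings gives $(\bc_0,\bc_1)$ with interleaved error rank at most $2e$. The same $3e<d$ argument forces $\bc_\nu=\bc_0+\nu\bc_1$ for every good $\nu$. The paper's column-counting trick, applied once to the error line $\mathbf{E}_0+\nu\mathbf{E}_1$, then finishes.

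\textbf{What each buys.} Your route avoids the swap reduction, the maximality claim and the side condition $q^{e+1}<q^m$, and it exhibits the agreeing codewords explicitly. You count $a$ up to $\F_q^*$-scaling, and Step~2 holds for every $\nu\in S$, including $\lambda$ and $\mu$, so no third point is needed. Hence it even works under the weaker hypothesis $|S|>(q^{e+1}-1)/(q-1)$. The paper's route, in exchange, isolates the ``all or few'' dichotomy as a standalone lemma.

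\textbf{On the point you flag as delicate.} Discard the Gram-type choice: with $w_i\in W$ it can fail over $\F_q$. The standard bilinear form has isotropic vectors; for example $(1,1)\in\F_2^2$ is orthogonal to itself, so $W=\mathrm{span}\{(1,1)\}$ projects to $0$. No change of basis is needed either. The $2m\times n$ matrix of $[\mathbf{E}_0,\mathbf{E}_1]$ has column rank $r\ge e+1$. So take $\pi$ to be the projection onto $e+1$ $\F_q$-linearly independent columns, exactly as the paper does; at this stage the code plays no role. In your basis-change phrasing, ``projects injectively'' should read ``projects onto''.
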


We then obtain a stronger proximity-gap for Gabidulin codes. A Gabidulin code $\cC=\mathcal G(\mathcal{A},k)$ is obtained by evaluating all $q$-linearized polynomials of $q$-degree less than $k$ on a set $\mathcal A=\{\alpha_1,\ldots,\alpha_n\}\subseteq \F_{q^m}$ that is $\F_q$-linearly independent, i.e.,
\[
\mathcal G(\mathcal{A},k):=\left\{f(\cA)=(f(\Ga_1),\dots,f(\Ga_n)):\ f=\sum_{i=0}^{k-1}a_iX^{q^i},\ a_i\in\F_{q^m}\right\}.
\]
It is viewed as the rank-metric analogue of Reed--Solomon codes. By using rank-metric Berlekamp--Welch decoding algorithm, we improve the proximity parameter $\delta$ from $\frac{d-1}{3n}$ to $\frac{d-1}{2n}$, matching the unique-decoding proximity bound known for Reed--Solomon codes in the Hamming metric.

\begin{theorem}[Proximity gaps for Gabidulin codes]\label{thm:main2}
Let $\cC\subseteq\F_{q^m}^n$ be an $[n,k,d]$ Gabidulin code. For every integer $r\ge0$, define $\sigma_r:=\sum_{i=0}^{r}q^i$. For $0< \delta\le \frac{d-1}{2n}$, let $e=\lfloor \delta n\rfloor$. Then $\cC$ admits $(\delta,\epsilon_e)$-proximity gap with $$\epsilon_e:=\frac{\tau_e}{q^m}\leq \frac{10q^{n-1}}{q^m},$$
    where $\tau_e:=(\sigma_e-1)+\sigma_e((\sigma_e-1)\sigma_{k-1}+1)\leq 10q^{n-1}.$
\end{theorem}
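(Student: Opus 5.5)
We adapt the Berlekamp--Welch-based proof of the unique-decoding proximity gap for Reed--Solomon codes from \cite{BCIKS23}, working in the ring of $q$-linearized polynomials. There composition replaces multiplication and the linearized Vandermonde structure replaces the usual interpolation arguments.

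Fix $\bu_0,\bu_1$. Let $S\subseteq\F_{q^m}$ be the set of $\lambda$ with $\Delta_{\mathrm{rk}}(\bu_0+\lambda\bu_1,\cC)\le\delta$, and assume $|S|>\tau_e$.

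\textbf{Step 1: a rank-metric key equation for each good $\lambda$.} For $\lambda\in S$, write $\bu_0+\lambda\bu_1=\bc_\lambda+\be_\lambda$ with $\operatorname{rank}_{\F_q}(\be_\lambda)\le e$. Let $U_0,U_1$ be the linearized polynomials of $q$-degree $<n$ interpolating $\bu_0,\bu_1$ on $\mathcal A$. The entries of $\be_\lambda$ span an $\F_q$-space of dimension at most $e$, and it is annihilated by a monic linearized polynomial $E_\lambda$ of $q$-degree $\le e$. Setting $N_\lambda=E_\lambda\circ f_\lambda$, where $\bc_\lambda=f_\lambda(\mathcal A)$, the Gabidulin Berlekamp--Welch equations become
\[
E_\lambda\circ(U_0+\lambda U_1)(\alpha_i)=N_\lambda(\alpha_i),\qquad i=1,\dots,n,
\]
with $q$-$\deg N_\lambda< e+k$.

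\textbf{Step 2: lift to a single bivariate solution.} As in BCIKS, we seek polynomials
\[
E(X,Z)=\sum_{j\le e}E_j(Z)X^{q^j},\qquad N(X,Z)=\sum_{j<e+k}N_j(Z)X^{q^j},
\]
with coefficients polynomial in $Z$ of controlled degree, such that $E\circ_X(U_0+ZU_1)=N$ on $\mathcal A$ and $E$ is monic in $X$. The linear system expressing these key equations has coefficients that are polynomials in $\lambda$. The twist is that the Frobenius powers $\lambda^{q^j}$ enter through $E_\lambda\circ(\lambda U_1)$. We therefore treat the system as linear over $\F_{q^m}(Z)$ with entries of degree up to $\sigma_e$ in $Z$. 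Its solution space is nonempty for every $\lambda\in S$. By Cramer's rule and rank considerations, and since $|S|$ exceeds the number of $\lambda$ at which the generic rank can drop (bounded via determinant degrees, roughly $\sigma_e-1$), a solution exists over $\F_{q^m}(Z)$. Its coefficients are rational functions whose degrees are bounded in terms of $\sigma_e$ and $\sigma_{k-1}$.

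\textbf{Step 3: divisibility and extraction of correlated codewords.} Uniqueness of Gabidulin decoding within radius $(d-1)/2$ gives, for each good $\lambda$, that $N(\cdot,\lambda)$ right-divided by $E(\cdot,\lambda)$ equals $f_\lambda$. So the right-division remainder of $N$ by $E$ in the skew ring over $\F_{q^m}(Z)$ vanishes at more than $\sigma_e((\sigma_e-1)\sigma_{k-1}+1)$ points. This forces the remainder to vanish identically, so $N=E\circ F$ with $F(X,Z)$ of $q$-degree $<k$. A further degree count shows that $F$ is affine in $Z$, namely $F=f_0+Zf_1$; this is where the term $\sigma_e\cdot\sigma_{k-1}$ in $\tau_e$ is consumed. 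We then take a good $\lambda$ outside the finitely many zeros of the denominators and leading terms. The polynomial $E(\cdot,\lambda)$ annihilates all entries of $\bu_0-f_0(\mathcal A)+\lambda(\bu_1-f_1(\mathcal A))$. The desired conclusion, however, is a common $e$-dimensional space containing the errors of both $\bu_0-f_0(\mathcal A)$ and $\bu_1-f_1(\mathcal A)$. To obtain it, we use the $Z$-independence of the root space of $E$ at many good $\lambda$, which follows by comparing coefficients in $Z$.

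\textbf{Main obstacle.} The main obstacle is Step 2/3. Frobenius twisting makes the dependence on $\lambda$ non-polynomial of degree $q^e$ rather than degree $1$, so the degree bookkeeping becomes delicate. It also makes it hard to prove that $F$ is linear in $Z$ and that the error space is independent of $Z$. I would first try working with the matrix $\F_q$-expansion: show that the column space of the $2m\times n$ error matrix has dimension $\le e$ by considering the kernel of $E$ over the field $\F_{q^m}(Z)$. Finally, I would check the numerical bound $\tau_e\le10q^{n-1}$ using $2e+k\le n$, which gives $\sigma_e^2\sigma_{k-1}\lesssim q^{2e+k-1}\cdot(q/(q-1))^3\le 8q^{n-1}$.
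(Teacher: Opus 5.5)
Steps 1--2 essentially reproduce Lemma~\ref{lem:bw-interpolation}. Step 3, however, has a genuine gap, and it is exactly where the rank metric departs from the Hamming metric.

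\textbf{The final step targets the wrong notion and relies on a false claim.} Rank-metric correlated agreement, i.e.\ $[\bu_0-\bc_0,\bu_1-\bc_1]$ having $\F_q$-rank at most $e$, is a \emph{domain-side} statement. It asks for a subspace of $V$ of dimension $\ge n-e$ on which both $\widehat{\bu}_0-P_0$ and $\widehat{\bu}_1-P_1$ vanish. It is not the existence of an $e$-dimensional subspace of $\F_{q^m}$ containing the error values of both words, and neither notion implies the other:
\begin{itemize}
\item With $\be_0=(a,0,\dots,0)$ and $\be_1=(0,a,0,\dots,0)$, both value sets lie in $\F_q a$, yet the interleaved rank is $2$.
\item With $\be_0=(a,0,\dots,0)$, $\be_1=(b,0,\dots,0)$ and $a,b$ independent over $\F_q$, the interleaved rank is $1$, but no line contains both $a$ and $b$.
\end{itemize}
The second example also refutes the claimed $Z$-independence of the root space of $E$. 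There $E(\cdot,\lambda)$ must vanish at $a+\lambda b$ for every $\lambda$, and a fixed $e$-dimensional $\F_q$-space contains $a+\lambda b$ for exactly $q^e$ values of $\lambda$. So no comparison of $Z$-coefficients can give this, even though the theorem's conclusion holds in that example.

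\textbf{Affineness in $Z$ is not justified.} Your sentence ``a further degree count shows $F$ is affine in $Z$'' is not an argument, and no $Z$-degree bound by itself forces it. In the paper affineness is a \emph{consequence} of the common kernel, via Lemma~\ref{lem:common-kernel}:
\begin{itemize}
\item For every $(e+1)$-dimensional $W\subseteq V$, the product of $H_x(Z)$ over the $\sigma_e$ one-dimensional subspaces of $W$ vanishes on $S_0$, hence identically.
\item Therefore $\{x: H_x\equiv 0\}$ meets every such $W$, so it has dimension $\ge n-e\ge k$.
\item Moore interpolation at $k$ independent points of this kernel gives $\widetilde P=a_0^{\sigma_{k-1}}(P_0+ZP_1)$ and the correlated agreement simultaneously.
\end{itemize}
That product is where the factor $\sigma_e$ in $\tau_e$ is spent; $(\sigma_e-1)\sigma_{k-1}+1$ is the $Z$-degree of $H_x$. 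If you genuinely had an exact identity $N=E\circ F$ over $\F_{q^m}(Z)$, your closing remark points to a valid shortcut. The map $x\mapsto U_0(x)+ZU_1(x)-F(x,Z)$ is $\F_q$-linear from $V$ into the root set of $E(\cdot,Z)$ in an algebraic closure of $\F_{q^m}(Z)$, which has dimension $\le e$, so its kernel has dimension $\ge n-e$. But that is not what Step 3 argues.

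\textbf{The divisibility step fails as written.} With $E$ monic at the top, solving $N=E\circ F$ for $F$ begins with $F_s^{q^e}=N_{s+e}$. This needs $q^e$-th roots in $\F_{q^m}(Z)$, which is not a perfect field. Dividing on the other side computes $N=F\circ E+R$, which is not the Berlekamp--Welch identity $N_\lambda=E_\lambda\circ f_\lambda$. You also give no $Z$-degree bound for the remainder, so ``vanishes at more than $\sigma_e((\sigma_e-1)\sigma_{k-1}+1)$ points, hence identically'' is unsupported.

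The paper never divides in $\F_{q^m}(Z)$. Instead it:
\begin{itemize}
\item normalizes the coefficient of $X$, so that $a_0=h(Z)\neq 0$;
\item interpolates the pointwise $P_z$;
\item discards the at most $\sigma_e-1$ roots of $a_0$;
\item solves the triangular recurrence from the \emph{bottom} modulo $S_0(Z)$, where $a_0$ is a unit, clearing denominators with $a_0^{\sigma_{k-1}}$ (Lemma~\ref{lem:clear-BW-denominators}).
\end{itemize}
Your top-monic normalization gives no control over the lowest coefficient, so this bottom-up route is not available in your setup as stated.
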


We complement the positive result above with two counterexamples. Our first counterexample is inspired by the counterexample in \cite{Ben-SassonCHKS26} which characterizes the proximity gaps of Reed--Solomon codes. It shows the tightness of the $\frac{d-1}{2n}$-gap for Gabidulin codes: at radius $\delta=\frac{d}{2n}$, there is a family of Gabidulin codes $\cC$ and affine lines for which although $1-o(1)$ fraction of points are $\delta$-close to $\cC$, the interleaved word $[\bu_0,\bu_1]$ remains $\delta$-far from $\cC^2$. This can be compared with the strong list-decoding lower bounds just beyond half-the-distance for certain Gabidulin codes~\cite{RavivW16,RavivW17}, though the construction and analysis underlying Theorem~\ref{thm:main-radius-counterexample} are different.

\begin{theorem}[Counterexample of Gabidulin codes at proximity gap $\frac{d}{2n}$]\label{thm:main-radius-counterexample}
For every prime power $q$, there is an infinite family of Gabidulin codes indexed by $g\ge 1$ as follows. Let $n=6g$, and let $\cC\subseteq \mathbb F_{q^n}^n$ be an $[n,2g+1,4g]$ Gabidulin code. Then, there exist vectors $\bu_0,\bu_1\in \mathbb F_{q^n}^n$ such that
\[
\Pr_{\lambda\in \mathbb F_{q^n}} \left[d_{\mathrm{rk}}(\bu_0+\lambda \bu_1,\cC)\le d/2\right]\ge 1-O(q^{-g}),
\]
but $d_{\rk}([\bu_0,\bu_1],\cC^2)\ge \frac{3d}{4}>d/2$.
\end{theorem}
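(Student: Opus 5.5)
The plan is to work entirely with $q$-linearized polynomials, taking $m=n$ and $\mathcal A$ an $\F_q$-basis of $\F_{q^n}$. Then evaluation on $\mathcal A$ is an isomorphism between $\F_{q^n}^n$ and the space of linearized polynomials of $q$-degree $<n$, that is, $\F_q$-linear maps $\F_{q^n}\to\F_{q^n}$. Under this identification the rank distance of a word $U$ to a codeword $f$ is $n-\dim_{\F_q}\ker(U-f)$. The parameters give $d=n-k+1=4g$, $d/2=2g$ and $3d/4=3g$. Hence a point $U_\lambda$ is $d/2$-close to $\cC$ iff $U_\lambda=P_\lambda\circ L_\lambda+f_\lambda$, where
\begin{itemize}
\item $L_\lambda$ is the subspace polynomial of some $4g$-dimensional $W_\lambda\subseteq\F_{q^n}$;
\item $P_\lambda$ has $q$-degree $\le 2g-1$;
\item $f_\lambda$ has $q$-degree $\le 2g$.
\end{itemize}
Similarly, $d_{\rk}([\bu_0,\bu_1],\cC^2)\ge 3g$ means that for all $c_0,c_1\in\cC$ we have $\dim(\ker(U_0-c_0)\cap\ker(U_1-c_1))\le 3g$.

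The construction will imitate the Reed--Solomon counterexample of \cite{Ben-SassonCHKS26}. There, the two words are chosen so that $u_0+\lambda u_1$ factors through a $\lambda$-dependent structure, and the agreement set moves with $\lambda$. In the rank setting I will fix an auxiliary linearized polynomial $T$ of $q$-degree $2g$ whose kernel $K$ is a $2g$-dimensional subspace. I will then look for $U_0,U_1$ of $q$-degree below $n$ such that, for each $\lambda$ outside an exceptional set, $U_0+\lambda U_1$ agrees modulo $\cC$ with $P_\lambda\circ L_\lambda$. The intended kernel is $W_\lambda=K\oplus K'_\lambda$, with $K'_\lambda$ a second $2g$-dimensional space depending on $\lambda$; the natural guess is $K'_\lambda=T^{-1}(\lambda^{-1}K)$ or a similar twist by $\lambda$.

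In practice I would try $U_1=T$ and $U_0=T\circ R$, or more generally $U_0=S\circ T$, so that $U_\lambda=(S+\lambda)\circ T$. The kernel of $U_\lambda$ then contains $K$, plus the preimage under $T$ of $\ker(S+\lambda)\cap\mathrm{im}\,T$. The remaining freedom, namely subtracting a $q$-degree $\le 2g$ codeword $f_\lambda$, is used to create a further $2g$ dimensions of kernel. This reduces to solving a linear system over $\F_{q^n}$ in the coefficients of $f_\lambda$ and $P_\lambda$. The system has $2g+1$ plus $2g$ unknowns against roughly $n=6g$ constraints. I will show it is solvable whenever a certain determinant, a polynomial in $\lambda$ of degree $O(q^{n-g})$, is nonzero. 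That yields the $1-O(q^{-g})$ probability bound.

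For the far side, I will argue by contradiction. Suppose $c_0,c_1$ exist with a common kernel $V$ of dimension $\ge 3g+1$. Then $U_1-c_1$ vanishes on $V$, so $U_1-c_1=Q\circ L_V$ with $\deg_q Q\le 6g-1-(3g+1)$. Comparing this with the rigid form $U_1=T$ and the $q$-degree bound $2g$ on $c_1$ will force $V$ to be essentially contained in a space of dimension at most $3g$ built from $K$, which is a contradiction. The same comparison is then applied to $U_0-c_0$ to rule out the remaining case.

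The main obstacle is the first step: choosing $T$ and $S$ (equivalently $U_0,U_1$) so that generic $\lambda$ really admits a $4g$-dimensional kernel after correction, while the pair stays $3g$-far. If the composition ansatz fails, my fallback is a dimension count. The variety of pairs $(W,P)$ has $\F_q$-dimension about $8g^2+2gn$. I would count solutions of $U_0+\lambda U_1-f\in\{P\circ L_W\}$ over $\lambda$, and use a second-moment or Schwartz--Zippel-type argument over $\F_{q^n}$ to show that only $O(q^{n-g})$ values of $\lambda$ fail.
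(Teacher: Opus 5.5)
Your reformulation of closeness (a $4g$-dimensional kernel for $U_\lambda-f_\lambda$) is correct, but the construction has a genuine gap, exactly at the step you flag as the main obstacle. Your concrete ansatz takes $U_1=T$ with $\deg_q T=2g$. Since $\cC=\mathcal G(\mathcal A,2g+1)$ contains the evaluations of \emph{all} linearized polynomials of $q$-degree at most $2g$, $\bu_1=T(\mathcal A)$ is itself a codeword. Hence $\lambda\bu_1\in\cC$ and $d(\bu_0+\lambda\bu_1,\cC)=d(\bu_0,\cC)$ for every $\lambda$. Either no point of the line is close, or $d(\bu_0,\bc_0)\le 2g$ for some $\bc_0\in\cC$, and then $d([\bu_0,\bu_1],\cC^2)\le\rk_q[\bu_0-\bc_0,\mathbf 0]\le 2g$. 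Either way the theorem fails; in your far-side argument, $c_1=T$ gives $U_1-c_1=0$. This rules out both $U_0=S\circ T$ and $U_0=T\circ R$.

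The later steps would not go through as stated either. For fixed $L_\lambda$, the system in $(f_\lambda,P_\lambda)$ has $4g+1$ unknowns against $6g$ coefficient equations. It is overdetermined, so its solvability is governed by the \emph{vanishing} of minors of the augmented matrix, not by a nonzero determinant. The fallback count cannot rescue this. With your $8g^2+2gn=20g^2$ for the pairs $(W,P)$ and $6g(2g+1)$ for $f$, the $2g$-close words form a fraction of about $q^{-4g^2+6g}$ of $\F_{q^{6g}}^{6g}$. So a typical line has essentially no close points. A $1-o(1)$-close line must be constructed explicitly before any moment or Schwartz--Zippel counting can apply, and the proposal never produces one.

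The missing idea is to exploit the subfield $\F_Q$ with $Q=q^g$. Then $F=\F_{Q^6}$ is a $6$-dimensional $\F_Q$-space, and annihilators of $\F_Q$-subspaces are $q$-linearized polynomials supported on $Q$-power monomials. Take $\widehat{\bu}_0=X^{Q^4}$ and $\widehat{\bu}_1=X^{Q^3}$. Let $W$ be a $4$-dimensional $\F_Q$-subspace with $A_W=X^{Q^4}+\Lambda(W)X^{Q^3}+c_1X^{Q^2}+c_2X^Q+c_3X$. For $\lambda=\Lambda(W)$ one has $\widehat{\bu}_0+\lambda\widehat{\bu}_1-P_\lambda=A_W$, where $P_\lambda=-(c_1X^{Q^2}+c_2X^Q+c_3X)$ has $q$-degree $\le 2g$. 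The image of $A_W$ has $\F_q$-dimension $6g-4g=2g=d/2$.

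Everything then reduces to showing that $W\mapsto\Lambda(W)$ takes $(1-O(Q^{-1}))Q^6$ values, which is plausible since there are about $Q^8$ such $W$. The paper proves this in three steps.
\begin{enumerate}
\item For $W=A_U(F)$ the duality $A_W\circ A_U=X^{Q^6}-X$ gives $\Lambda(W)=-\Lambda(U)^{Q^4}$, reducing to $2$-dimensional $U$.
\item One computes $\Lambda(\mathrm{span}_{\F_Q}\{x,xr\})=-x^{Q^2-Q}R(r)$ with $R(r)=(r^{Q^2}-r)/(r^Q-r)$. So for fixed $r$ the values fill a coset of $\ker N_{F/\F_Q}$.
\item The Weil bound for $\chi\circ N_{F/\F_Q}$ at $R(r)$, with character orthogonality and Cauchy--Schwarz, shows the norms $N_{F/\F_Q}(-R(r))$ cover $(1-O(Q^{-1}))(Q-1)$ elements of $\F_Q^*$.
\end{enumerate}
The far side is then immediate, precisely because $\bu_1$ is far from the code. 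For every $P$ of $q$-degree $\le 2g$, $X^{Q^3}-P$ has exact $q$-degree $3g$. Hence its kernel has $\F_q$-dimension at most $3g$, and its rank is at least $3g=3d/4$ already in the second row.
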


The second negative result shows that, even below the unique-decoding radius, the proximity error $\epsilon$ for Gabidulin codes has a lower bound. In particular, at radius $e=d/3$, the error parameter cannot in general be improved to $o(q^{-e})$.
\begin{theorem}[A lower bound on $\epsilon$ at proximity gap $\frac{d}{3n}$]\label{thm:main-threshold-counterexample}
For every prime power $q$, every integer $c\ge 2$, and every integer $r\ge 3$, let $n=(c+1)r,\,e=r$ and let $\cC\subseteq \mathbb{F}_{q^n}^n$ be an $[n,(c-2)r+1,3r]$ Gabidulin code.  Then there exist vectors $\bu_0,\bu_1\in \F_{q^n}^n$ such that
\[
\Pr_{\lambda\in \F_{q^n}}\left[d_{\mathrm{rk}}(\bu_0+\lambda \bu_1,\cC)\le e\right]\ge\frac{q^n-1}{(q^e-1)q^n}=\Theta(q^{-e}),
\]
but $d_{\mathrm{rk}}(\bu_0,\cC)>e$ and $d_{\mathrm{rk}}([\bu_0,\bu_1],\cC^2)>e.$
\end{theorem}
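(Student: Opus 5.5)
The plan is to build $\bu_0,\bu_1$ from two monomials, chosen so that many points on the line are, modulo $\cC$, a trace map with a small image. Since $m=n$, the evaluation set $\cA$ is an $\F_q$-basis of $\F_{q^n}$. A word $f(\cA)$, with $f$ a $q$-linearized polynomial reduced modulo $X^{q^n}-X$, therefore has rank weight $n-\dim_{\F_q}\ker f$. Write $n=(c+1)r$ and $k=(c-2)r+1$, and set
\[
\bu_0:=\bigl(\Ga_i^{q^{(c-1)r}}\bigr)_{i=1}^n,\qquad \bu_1:=\bigl(\Ga_i^{q^{cr}}\bigr)_{i=1}^n .
\]

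\emph{Many close points.} For $\gamma\in\F_{q^n}^*$, consider the relative trace $T_\gamma(X)=\mathrm{Tr}_{q^n/q^r}(\gamma X)=\sum_{j=0}^{c}\gamma^{q^{jr}}X^{q^{jr}}$. Its image is $\F_{q^r}$, so its kernel has dimension $cr$ and $T_\gamma(\cA)$ has rank $r=e$. The terms with $j\le c-2$ have $q$-degree $\le (c-2)r<k$, so they form a codeword $\bc_\gamma\in\cC$. Hence
\[
T_\gamma(\cA)-\bc_\gamma=\gamma^{q^{(c-1)r}}\bu_0+\gamma^{q^{cr}}\bu_1 .
\]
Dividing by $\gamma^{q^{(c-1)r}}$ (scaling preserves rank) gives $d_{\rk}(\bu_0+\lambda\bu_1,\cC)\le e$ for
\[
\lambda=\gamma^{q^{cr}-q^{(c-1)r}}=\bigl(\gamma^{q^{(c-1)r}}\bigr)^{q^r-1}.
\]
As $\gamma$ ranges over $\F_{q^n}^*$, these $\lambda$ range over all $(q^r-1)$-th powers in $\F_{q^n}^*$. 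Because $r\mid n$, we have $(q^r-1)\mid(q^n-1)$, so there are exactly $(q^n-1)/(q^r-1)$ such $\lambda$. This yields the claimed probability $\frac{q^n-1}{(q^e-1)q^n}$.

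\emph{Farness.} Take any $\bc_0=g_0(\cA)\in\cC$, where $g_0$ has $q$-degree $<k$. Then $\bu_0-\bc_0$ is the evaluation of $X^{q^{(c-1)r}}-g_0$. This is a nonzero linearized polynomial of $q$-degree exactly $(c-1)r$, since $k-1=(c-2)r<(c-1)r$. Its kernel in $\F_{q^n}$ therefore has $\F_q$-dimension at most $(c-1)r$, so its rank is at least $n-(c-1)r=2r>e$. Thus $d_{\rk}(\bu_0,\cC)>e$.

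For the interleaved word, the rank of $[\bu_0-\bc_0,\bu_1-\bc_1]$ equals $n$ minus the dimension of the common kernel of the two polynomials on $\F_{q^n}$. That common kernel lies inside $\ker(X^{q^{(c-1)r}}-g_0)$, so the rank is again at least $2r>e$, uniformly in $\bc_0,\bc_1$. Hence $d_{\rk}([\bu_0,\bu_1],\cC^2)>e$. Finally, $d=n-k+1=3r$, consistent with the stated parameters.

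The key step is finding the trace polynomial. Its kernel has codimension exactly $r$, and only its top two monomials survive modulo the code; this is what makes the good $\lambda$'s exactly the $(q^r-1)$-th powers. The remaining checks are routine: the degree reduction modulo $X^{q^n}-X$ is harmless since $cr<n$, and the kernel-dimension bound for linearized polynomials is standard.
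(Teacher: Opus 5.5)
Your proof is correct. It reaches the statement by a different and simpler route than the paper.

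\textbf{What is the same.} The source of close points matches the paper's. The paper also uses the relative traces $\mathrm{Tr}_{E/\mathbb F_Q}(aX)$, with $Q=q^r$, whose kernels are $\mathbb F_Q$-hyperplanes, and it gets the same good set $(E^*)^{Q-1}$. It packages this as Lemma~\ref{lem:hyperplane-second-coefficients}, an exact description of the second-highest coefficients of monic hyperplane annihilators, mirroring its $d/2$ counterexample. You use the trace polynomial directly and need only the inclusion, which is all the lower bound requires.

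\textbf{What differs.} You interchange the two monomials: the paper takes $\bu_0=(X^{q^{cr}})(\cA)$ and $\bu_1=(X^{q^{(c-1)r}})(\cA)$. On nonzero parameters the two lines agree up to scaling via $\lambda\mapsto\lambda^{-1}$, and $(E^*)^{Q-1}$ is closed under inversion, so the count $(q^n-1)/(q^r-1)$ is identical. What changes is which endpoint must be shown far.

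\textbf{Why the paper needs more work.} For the paper's $\bu_0$, the polynomial $X^{q^{cr}}-P$ has $q$-degree $cr$. The kernel bound then gives rank only at least $n-cr=r=e$, exactly at the threshold rather than beyond it. The paper therefore argues by contradiction: such an $L$ with image of dimension at most $r$ would be a hyperplane-type annihilator. Composing with the annihilator $M$ of $L(E)$ gives $M\circ L=X^{q^n}-X$. Comparing top coefficients forces $M=X^{q^r}$, and this yields a contradiction.

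\textbf{What your choice buys.} With $\bu_0=(X^{q^{(c-1)r}})(\cA)$, the same kernel count immediately gives $d_{\rk}(\bu_0,\cC)\ge 2r$, and hence the same bound for the interleaved word. This is quantitatively stronger than the paper's $>r$ and needs no composition argument. The paper's route proves the more delicate fact that the top monomial $X^{q^{cr}}$ itself is $e$-far. The theorem as stated only asks for some far $\bu_0$, so your version is a genuine simplification.
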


Recall that a succinct PCS allows a verifier check a committed polynomial has the claimed degree bound and evaluation with sublinear complexity measured by its input size. As an application of our rank-metric proximity-gap results, we construct a PCS for $q$-linearized polynomials. This differs from the Hamming-metric setting, where Reed--Solomon codes are defined by evaluating ordinary low-degree polynomials, and their proximity properties are used to construct efficient PCSs for ordinary polynomials. In the rank-metric setting, the corresponding error-correction codes are Gabidulin codes, which are defined by evaluating $q$-linearized polynomials $f(X)=\sum_{i=0}^{k-1}a_iX^{q^i}$ of a bounded $q$-degree. A new PCS tailored to $q$-linearized polynomials is needed because their ordinary degree can be much larger than their input size. In particular, the polynomial above has only $k$ coefficients but has ordinary degree as large as $q^{k-1}$. Therefore, directly applying a PCS for ordinary polynomials may give complexity that is sublinear in $q^{k-1}$, but not in the input size $k$. Our goal is instead to obtain a PCS whose complexity is sublinear in $k$ by exploiting the structure of $q$-linearized polynomials and the proximity gap of Gabidulin codes. 

In the Hamming-metric setting, there are two major families of IOPP-based PCS constructions. One is FRI-based schemes that rely on FFT-friendly evaluation domains, another is Ligero-based schemes that use IOPPs for interleaved RS codes. Since no comparable FFT is available for $q$-linearized polynomials, we follow the Ligero framework and construct a $q$-linearized PCS based on interleaved Gabidulin codes, which we call $\LRM$-PCS.


\begin{theorem}[$\LRM$-PCS, informal]\label{thm:main3}
$\LRM$-PCS~\ref{proc:lcom} is a transparent and extractable polynomial commitment scheme for $q$-linearized polynomials over $\F_{q^m}$ of $q$-degree less than $k$. For security parameter $\lambda$, it has prover time $\widetilde O(\lambda k)$, verifier time $\widetilde O(\lambda\sqrt{k})$, and proof size $\widetilde O(\lambda\sqrt{k})$. Here, running times are measured in
operations over $\F_{q^m}$, and $\widetilde O(\cdot)$ suppresses
polylogarithmic factors in $k$.
\end{theorem}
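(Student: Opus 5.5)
The plan is to assemble the Ligero-style IOPP for interleaved Gabidulin codes with a Merkle-tree commitment, and to prove the three claimed properties separately: completeness and efficiency, extractability (binding), and the evaluation-soundness argument.

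\emph{The scheme.} Write $f(X)=\sum_{i=0}^{k-1}a_iX^{q^i}$ with $k=k_1k_2$ and $k_1,k_2\approx\sqrt{k}$. Arrange the coefficients into a $k_1\times k_2$ matrix whose $j$-th row $(a_{jk_2},\dots,a_{jk_2+k_2-1})$ defines the block polynomial $f_j(X)=\sum_{t<k_2}a_{jk_2+t}X^{q^t}$, so that
\[
f(x)=\sum_{j<k_1} f_j\bigl(x^{q^{jk_2}}\bigr).
\]
This decomposition uses the additivity of the Frobenius map; it is the $q$-linearized analogue of the tensor structure of monomials. The prover encodes each row with a Gabidulin code $\mathcal G(\cA,k_2)$ of length $n=O(k_2)$ and Merkle-commits to the columns of the resulting interleaved codeword. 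The evaluation at a point $z$ works as follows.
\begin{enumerate}
\item The verifier sends a random combination vector $r\in\F_{q^m}^{k_1}$, and the prover replies with $r^{\top}U$ (the proximity test).
\item The prover sends $w=\sum_j r_j$-free row combination that yields $\sum_j f_j(z^{q^{jk_2}})$. Since $f_j$ is $\F_q$-linear in its argument rather than linear in its coefficients, the prover sends the vector $(f_j(z^{q^{jk_2}}))_{j}$, or equivalently a combined polynomial. The verifier checks consistency of this vector with its combination against random columns.
\item The verifier opens $t=O(\lambda)$ random columns and checks them against both messages.
\end{enumerate}
Prover time is then $k_1$ Gabidulin encodings of length $n$. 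Using fast linearized multipoint evaluation this costs $\widetilde O(k)$ field operations, and each hash adds a $\lambda$ factor. Verifier time and proof size are $O(\sqrt{k})$ field elements plus $t$ column openings, i.e.\ $\widetilde O(\lambda\sqrt k)$.

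\emph{Soundness and extraction.} Here I would invoke Theorem~\ref{thm:main2}, extended from lines to random linear combinations of $k_1$ rows. This extension goes by the standard reduction, losing a factor of $k_1$ or using affine-space gaps derived from the line gap. It gives the following dichotomy. Either the committed interleaved word $U$ is within rank distance $e\le (d-1)/2$ of $\cC^{k_1}$ via a common column space, or the combination $r^\top U$ is $e$-far except with probability $k_1\epsilon_e$.

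\textbf{This transfer is the main obstacle}, because the opened columns are individual coordinates, while closeness is measured in rank. Rank distance $e$ means that the error matrix has column space of dimension at most $e$ in $\F_q^n$, but it can be supported on all $n$ coordinates. So Ligero's ``errors lie in a small coordinate set'' argument does not apply directly. My first attempt would be to choose the evaluation set $\cA$ and then apply an $\F_q$-change of basis so that the error is concentrated on at most $e$ coordinates. The justification is that a rank-$e$ error $E$ can be written $E=E'B$ with $B\in\F_q^{e\times n}$, and after a suitable invertible $\F_q$-transform of the codeword columns, $E$ is supported on $e$ columns. Gabidulin codes are invariant under $\F_q$-linear transforms of the evaluation points: $\mathcal G(\cA,k)T=\mathcal G(\cA T,k)$. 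So I would let the verifier sample a uniformly random $T\in GL_n(\F_q)$ after the commitment phase, or more simply query random $\F_q$-linear combinations of columns. Such a combination is itself a linear function of Merkle-opened data only if all columns are opened, so instead I would bound the probability that $t$ random columns all avoid the error. Then, with $E$ fixed, a random column index $i$ has $E_i=0$ only if column $i$ is in the kernel. I would therefore show that the probability that a random combination check passes while $E\neq0$ is at most $1-\Omega(1)$ per query, by using a random $\F_q$-combination query realized through the code's $\F_q$-linearity. I expect that arranging this step requires committing to a transformed codeword or invoking the interleaved error-support structure, and I would resolve it by first proving a lemma that a rank-$\le e$ error matrix with $e<n/2$ is nonzero on at least a constant fraction of any fixed basis's columns after a random public basis change fixed before commitment.

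Given that lemma, extraction goes as in Ligero. Because $e<(d-1)/2$ and $\mathcal G$ admits Berlekamp--Welch decoding, the extractor rewinds to obtain the committed rows and decodes each row uniquely. The consistency checks then force the claimed evaluation to equal $f(z)$ except with probability $k_1\epsilon_e+(1-\Omega(1))^{t}+\mathrm{negl}(\lambda)$ from Merkle binding in the random oracle model. Taking $t=O(\lambda)$ and $m$ large enough that $10k_1q^{n-1}/q^m\le 2^{-\lambda}$ completes the argument.
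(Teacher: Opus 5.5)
\textbf{There is a genuine gap in the evaluation step.} Your identity $f(x)=\sum_j f_j(x^{q^{jk_2}})$ is correct, but it is not a tensor factorization. Block $j$ is paired with $\bigl(z^{q^{jk_2+t}}\bigr)_{t<k_2}$, the entrywise Frobenius power $\mathbf{x}_r(z)^{q^{jk_2}}$, and this vector depends on $j$. Since $f(z)$ is $\F_{q^m}$-linear in the raw coefficient matrix $M$, any function of a single combination $M\mathbf c$ (or $\mathbf c^{T}M$) that returns $f(z)$ must itself be $\F_{q^m}$-linear. That forces the weight matrix $\bigl(z^{q^{jk_2+t}}\bigr)_{j,t}$ to have rank one, which fails for generic $z$. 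So no Ligero-style check on your committed raw rows yields $f(z)$, and your Step~2 never says how the $k_1$ claimed values $f_j(z^{q^{jk_2}})$ are tied to the opened columns. Your diagnosis is also off: $f_j(y)$ \emph{is} linear in its coefficients, and the real obstruction is the $j$-dependent evaluation vector.

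The missing idea is the Frobenius twist of the coefficients. Setting $a_{i\kappa+j}=f_{i\kappa+j}^{q^{m-i\kappa}}$ gives $\sum_j f_{i\kappa+j}x^{q^{i\kappa+j}}=\bigl(\sum_j a_{i\kappa+j}x^{q^j}\bigr)^{q^{i\kappa}}$. Hence $f(\alpha)=\sum_i (v'_i)^{q^{i\kappa}}$ for $\bv'=\bb\,\mF^{T}$ with $\bb=(\alpha,\alpha^q,\dots,\alpha^{q^{\kappa-1}})$ (Lemma~\ref{lem:comp}). Now one common vector $\bb$ multiplies every row of $\mF$, so encoding the columns of $\mF$ makes $\bb U^{T}=\enc(\bv')$ a single codeword. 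The evaluation phase is then literally the proximity test with $\boldsymbol\lambda$ replaced by $\bb$. The verifier finishes with the twisted sum $\odot$, which is only $\F_q$-linear and is computed in the clear.

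\textbf{The soundness obstacle you single out is not real.} Your proposed lemma is unnecessary, and it also fails against the adversary you need to handle. A basis change that is public and fixed before the commitment does not stop the adversary from placing a rank-one error on a single transformed coordinate. Soundness never needs the committed error to be spread out or localized. It only needs the difference between the prover's claimed codeword and the true combination to have large Hamming weight, and that follows from $d_H\ge d_{\rk}$:
\begin{itemize}
\item In the test, $d_{\rk}(\boldsymbol\lambda U^{T},\cC)>e$ forces the sent codeword to disagree with $\boldsymbol\lambda U^{T}$ in more than $e$ coordinates.
\item In the evaluation, if $\bu'\neq\bb C$, then $\rk_q(\bb(U^{T}-C))\le\rk_q(U^{T}-C)\le e$ together with subadditivity gives $\rk_q(\bu'-\bb U^{T})\ge d-e>d/2$.
\end{itemize}
Plain coordinate queries therefore yield the $(1-e/n)^t$ term, with no basis change and no $\F_q$-combination queries.

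Finally, the passage from lines to $k_1$-fold combinations is not a formality in the rank metric. You need that some vector in the column span is $e$-far whenever the interleaved word is (Lemma~\ref{lem:IGd/2}). With that lemma, applying the line gap along $\F_{q^m}\bu\oplus W$ gives error $\epsilon_e$ without your factor $k_1$.
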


\subsection{Technical Overview}\label{subsec:technical-overview}

\paragraph{Proximity gaps for Gabidulin codes.} We mainly explain the proof idea of the $\frac{d-1}{2n}$-proximity gap for Gabidulin codes. The starting point is similar to the Reed--Solomon case: if many points on the affine line $\ell_{\bu_0,\bu_1}:=\{\bu_0+z\bu_1:z\in\mathbb F_{q^m}\}$ are close to an Gabidulin code, then for each such good point $z$ one can write down a Berlekamp--Welch (BW) reconstruction system. Since different values of $z$ may give different decoded $q$-linearized polynomials, different error spaces, and different error-locator polynomials, we need to show that, once there are sufficiently many good points, these pointwise systems can all be viewed as one linear system over the rational function field $\mathbb F_{q^m}(Z)$, leading to a global solution. In the RS case, this step is guaranteed by the Polishchuk--Spielman (PS) divisibility lemma~\cite{PS94,BCIKS23} over the ordinary bivariate polynomial ring. However, in the Gabidulin case, the objects are $q$-linearized polynomials, whose multiplication is symbolic product $``\circ"$ (i.e., composition, see~Section~\ref{sec:linearizedpoly} for details) rather than ordinary multiplication. The PS lemma cannot be applied in this setting. This is where the argument for Gabidulin code departs from its Reed--Solomon counterpart.

Let $n\leq m$ be positive integers and let $\cA=\{\Ga_1,\dots,\Ga_n\}$ be a set of $n$ $\F_q$-linearly independent elements in the finite field $\F_{q^m}$. For every $q$-linearized polynomial $f(X)$, we denote its evaluations on $\cA$ by $f(\cA)=(f(\Ga_1),\dots,f(\Ga_n))$. Let $\cC=\mathcal{G}(\mathcal{A},k)$ be an $[n,k]$ Gabidulin code with minimum distance $d=n-k+1$ (see~Section~\ref{sec:2}). Let $e\leq (d-1)/2$ and $S:=\{z\in\mathbb F_{q^m}:d_{rk}(\bu_0+z\bu_1,\cC)\le e\}$ be the set of good points on the line. In the following, we show that once $|S|>\tau_e$, here $\tau_e$ is defined as in  {Theorem~\ref{thm:main2}}, there exist $\bc_0,\bc_1\in \cC$ such that $d_{\mathrm{rk}}\left([\bu_0-\bc_0],[\bu_1-\bc_1]\right)\leq e$.

For every $z\in S$, since we are within the unique decoding radius, there is a unique $q$-linearized polynomial $P_z$ of $q$-degree less than $k$ such that $d_{\mathrm{rk}}\bigl(\bu_0+z\bu_1,P_z(\mathcal{A})\bigr)\le e$. Moreover this $P_z$ can be solved by the BW algorithm for Gabidulin codes. Precisely, given $\bu_0+z\bu_1$, let $W_z$ denote the $\mathbb F_q$-span of the error values $\mathbf{u}_{0,r}+z\mathbf{u}_{1,r}-P_z(\alpha_r),\,r\in [n]$ and let $\Lambda_z(X)$ be the subspace-annihilator polynomial of $W_z$. Since $\Lambda_z(X)$ has no multiple roots, the coefficient of $X$ is nonzero. We may therefore normalize this coefficient to be $1$ and write $
\Lambda_z(X)=X+\sum_{i=1}^e \lambda_i(z)X^{q^i}.$ Define $\Omega_z(X):=-\Lambda_z(P_z(X))=\sum_{j=0}^{k+e-1}\omega_j(z)X^{q^j}.$ Then we have the pointwise BW equation $$\Lambda_z(u_{0,r}+zu_{1,r})+\Omega_z(\alpha_r)=0,\ r\in[n].$$ Equivalently, given $M(z)$ and $c(z)$ defined as follows, the linear system~\eqref{eq:bweq} in the unknown coefficients of the
linearized polynomials $A_z=X+\sum_{i=1}^ea_i(z)X^{q^i}$ and $B_z=\sum_{j=0}^{e+k-1}b_j(z)X^{q^j}$ is solvable, since $(A_z=\Lambda_z,B_z=\Omega_z)$ gives one solution:
\begin{equation}\label{eq:bweq}
\small\underbrace{
\begin{pmatrix}
(u_{0,1}+zu_{1,1})^q
& \cdots
& (u_{0,1}+zu_{1,1})^{q^e}
& \alpha_1
& \cdots
& \alpha_1^{q^{k+e-1}}
\\
(u_{0,2}+zu_{1,2})^q
& \cdots
& (u_{0,2}+zu_{1,2})^{q^e}
& \alpha_2
& \cdots
& \alpha_2^{q^{k+e-1}}
\\
\vdots
& & \vdots
& \vdots
& & \vdots
\\
(u_{0,n}+zu_{1,n})^q
& \cdots
& (u_{0,n}+zu_{1,n})^{q^e}
& \alpha_n
& \cdots
& \alpha_n^{q^{k+e-1}}
\end{pmatrix}}_{:=M(z)}
\begin{pmatrix}
a_1(z)\\
\vdots\\
a_e(z)\\
b_0(z)\\
\vdots\\
b_{k+e-1}(z)
\end{pmatrix}
=
-\underbrace{
\begin{pmatrix}
u_{0,1}+zu_{1,1}\\
u_{0,2}+zu_{1,2}\\
\vdots\\
u_{0,n}+zu_{1,n}
\end{pmatrix}}_{:=c(z)}.
\end{equation}
Indeed, for any pair of solutions $(A_z,B_z)$, we have $A_z \circ P_z = -B_z.$ Thus each $P_z$ can be obtained by right-dividing $B_z$ by $A_z$ with respect to the symbolic product. 

To give a global solution, instead we view these pointwise systems as specializations of a single linear system over the rational function field $\mathbb F_{q^m}(Z)$, i.e., solving for unknowns $U:=(a_1(Z),\dots,b_{k+e-1}(Z))\in\F_{q^m}(Z)^{2e+k}$ such that
\[
M(Z)U=-c(Z).
\]
We next show that this system is still solvable over $\mathbb F_{q^m}(Z)$, equivalently,
\[
\operatorname{rank}_{\mathbb F_{q^m}(Z)}\left(M(Z)\mid c(Z)\right)
=
\operatorname{rank}_{\mathbb F_{q^m}(Z)}M(Z).
\]
This follows from two observations. First, for each $z\in S$, the specialized system $M(z)U=-c(z)$ is solvable over $\F_{q^m}$, and hence the corresponding two ranks agree after specialization at $z$. Second, if $\operatorname{rank}_{\mathbb F_{q^m}(Z)}M(Z)=r$, any $r+1$ minor of the augmented matrix $(M(Z)\mid c(Z))$ has degree at most $\sigma_e$ in $Z$. Therefore, if $|S|>\sigma_e$, any such minor that vanishes on all points of $S$ must vanish identically. Hence $M(Z)U=-c(Z)$ has a solution over $\mathbb F_{q^m}(Z)$ (see~Lemma~\ref{lem:bw-interpolation} for details). Clearing denominators in such a rational solution, we assume $a_i(Z), b_i(Z)\in\F_{q^m}[Z]$ and $a_0(Z)\not\equiv 0$. Moreover, Lemma~\ref{lem:bw-interpolation} gives the degree bounds $\deg_Z a_i\le \sigma_e-q^i$ for $0\leq i\leq e$ and $\deg_Z b_j\le \sigma_e$ for $0\leq j\le k+e-1$. Set
\begin{equation}\label{eq:solus}
A(X,Z)=\sum_{i=0}^e a_i(Z)X^{q^i},
\qquad
B(X,Z)=\sum_{j=0}^{k+e-1}b_j(Z)X^{q^j},
\end{equation}
then $$A(u_{0,r}+Zu_{1,r},Z)+B(\alpha_r,Z)=0,\ r\in[n].$$

The next step is where the Gabidulin argument departs from the RS case. In the RS setting, the analogous construction produces two polynomials $A(X,Z)$ and $B(X,Z)$ in the bivariate polynomial ring. We then have that $A(X,z)$ divides $ B(X,z)$ for sufficiently many values of $z$,  and $A(x,Z)$ divides $B(x,Z)$ for all points $x$ in the evaluation domain. Then, by the Polishchuk--Spielman bivariate divisibility lemma, $A(X,Z)$ divides $ B(X,Z)$ under the ordinary bivariate polynomial multiplication. Writing $P(X,Z)=-B(X,Z)/A(X,Z)$, one then uses the degree bounds on $A$ and $B$ to show that $P=v_0(X)+Zv_1(X)$, where $v_0,v_1$ both have degree at most $k-1$. Thus, the quotient $P$ packages the pointwise decoded polynomials into a single low-degree polynomial, from which the desired correlated agreement follows. For Gabidulin codes, however, the identity $A_z\circ P_z=-B_z$ involves symbolic product rather than ordinary multiplication, so the PS lemma does not apply. We therefore use a different approach, replacing the divisibility argument with a coefficient-by-coefficient reconstruction of $P(X,Z)$ and showing that it is linear in $Z$.

Let $A(X,Z)$ and $B(X,Z)$ be the solution given in Equation~\eqref{eq:solus}. For each $z\in S$, the pair $A(X,z)$ and $B(X,z)$ is also a solution of Equation~\eqref{eq:bweq}. Then there exists $q$-linearized polynomial $P_z(x)$ of $q$-degree less than $k$ satisfying
\[A(P_z(X),z)=-B(X,z).\] 
From all $\{P_z(X):\ z\in S\}$, one can interpolate a bivariate polynomial $P(X,Z)\in\F_{q^m}[X,Z]$ such that $P(X,z)=P_z(X)$ for every $z\in S$. Let $S(Z):=\prod_{z\in S}(Z-z)$. Then 
\begin{equation}\label{eq:compositionmodS}
    A(P(X,Z),Z)+B(X,Z)\equiv 0 \pmod{S(Z)}.
\end{equation}
Next, we will show that the $Z$-degree of $P(X,Z)$ is actually one. 

Assume $P(X,Z)=\sum_{j=0}^{k-1}p_j(Z)X^{q^j}$. Since $\deg_Z a_0\le \sigma_e-1$, we define $S_0:=\{z\in S:a_0(z)\neq 0\}$ by removing at most $\sigma_e-1$ roots of $a_0(Z)$ from $S$ and $S_0(Z):=\prod_{z\in S_0}(Z-z)$. Reducing Equation~\eqref{eq:compositionmodS} modulo $S_0(Z)$ and comparing the coefficient of $X^{q^j}$ for $0\le j\le k-1$, we have a triangular recurrence
\[
a_0(Z)p_j(Z) + \sum_{r=1}^{\min\{e,j\}} a_r(Z)p_{j-r}(Z)^{q^r} + b_j(Z) =0 \pmod{ S_0(Z)}.
\]
Since $a_0(Z)$ is invertible modulo $S_0(Z)$, the coefficients $p_j(Z)$ can be recursively reconstructed. We clear all denominators by multiplying $a_0(Z)^{\sigma_{k-1}}$ in this recurrence and define \[\tilde P(X,Z):=a_0(Z)^{\sigma_{k-1}}P(X,Z) \pmod{S_0(Z)}.\] 
By Moore interpolation (see Lemma~\ref{lem:int}), let $\widehat{\bu}_s(X)$ be the $q$-linearized polynomial interpolated from $\bu_s$ over the domain $\cA$ for $s\in\{0,1\}$, and define, for $x\in V:=\mathrm{span}_{\mathbb F_q}(\mathcal{A})$,
\[
  H_x(Z):=\widetilde P(x,Z) -a_0(Z)^{\sigma_{k-1}}\bigl(\widehat{\bu}_0(x)+Z\widehat{\bu}_1(x)\bigr) \pmod{S_0(Z)}.
\]
By Lemma~\ref{lem:bw-interpolation}, we have $\deg_Z H_x\le (\sigma_e-1)\sigma_{k-1}+1$.  For each $z\in S_0$, since $P(X,z)-(\widehat{\bu}_0(x)+z\widehat{\bu}_1(x))$ vanishes over a subspace of $V$ of $\F_q$-dimension at least $n-e$, the kernel of the $\mathbb F_q$-linear map $x\mapsto H_x(z)$ also has dimension at least $n-e$. Hence, it intersects with every $(e+1)$-dimensional subspace $W$ of $V$ nontrivially, i.e., $\exists\ 0\neq x\in W$ such that $H_x(z)=0$. Moreover, $H_{cx}(z)=0$ for any $c\in\F_q$ since ${P}(X,z)$, $\hat{u}_0(X)$, and $\hat{u}_1(X)$ are all $\F_q$-linear. Let $\{\F_q x_j: j=1,\dots,\sigma_e\}$ be all one-dimensional subspaces of $W$. Considering the product $\prod_{j=1}^{\sigma_e}H_{x_j}(Z)$, for every $z\in S_0$, at least one factor $H_{x_j}(z)$ is zero, hence it vanishes on $S_0$. Since $|S_0|>\tau_e-(\sigma_e-1)=\sigma_e((\sigma_e-1)\sigma_{k-1}+1)$, root counting forces one factor to vanish identically. Thus, the set $E^*:=\{x\in V:H_x(Z)=0\}$ meets every $(e+1)$-dimensional subspace of $V$. As $E^*$ is an $\F_q$-subspace, this gives $\dim_{\mathbb F_q}E^*\ge n-e$.

Choose an $(n-e)$-dimensional subspace $E\subseteq E^*$. On the one hand, for all $x\in E$, $\widetilde P(x,Z)=a_0(Z)^{\sigma_{k-1}}\bigl(\widehat{\bu}_0(x)+Z\widehat{\bu}_1(x)\bigr)$. On the other hand, since $n-e\ge k$, Moore interpolation produces two $q$-linearized polynomials $P_0,P_1$ of degree $<k$ that agree with $\widehat{\bu}_0,\widehat{\bu}_1$ on $k$ independent points of $E$. The defining identity of $E$ then implies, as an identity of $q$-linearized polynomials over $\mathbb F_{q^m}(Z)$,
\[
  \widetilde P(X,Z)
  =a_0(Z)^{\sigma_{k-1}}\bigl(P_0(X)+ZP_1(X)\bigr).
\]
Canceling the nonzero factor $a_0(Z)^{\sigma_{k-1}}$ and comparing the coefficients of $1$ and $Z$ shows that $P_s$ agrees with $\widehat{\bu}_s$ on the subspace $E$ for $s=0,1$. Setting $\bc_s=P_s(\mathcal{A})$ gives the desired correlated rank bound.

\medskip
\noindent\textbf{Counterexample at proximity $\delta=\frac{d}{2n}$.} We outline the construction. Fix $g\ge 1$, set $Q=q^g$, and let $F=\mathbb F_{Q^6}=\mathbb F_{q^{6g}}, n=6g.$ Let $\mathcal A=\{\alpha_1,\ldots,\alpha_n\}$ be an $\mathbb F_q$-basis of $F$, and take $\cC=\mathcal G(\mathcal A,2g+1)\subseteq F^n.$ Then $\cC$ has minimum distance $d=n-(2g+1)+1=4g.$ 

Define $\widehat{\bu}_0(X)=X^{Q^4}, \widehat{\bu}_1(X)=X^{Q^3},$
and let $\bu_s=\widehat{\bu}_s(\mathcal A)$ for $s\in\{0,1\}$. We show that for at least $(1-O(Q^{-1}))\cdot |F|$ values of $\lambda\in F$, the word $\bu_0+\lambda \bu_1$ is $\frac{d}{2n}$-close to $\cC$, while $\bu_1$ itself is at distance at least $\frac{3d}{4n}$-far from $\cC$.

For a four-dimensional $\mathbb F_Q$-subspace $W\subseteq F$, write its annihilator polynomial as
\[
A_W(X):=\prod_{w\in W}(X-w)=X^{Q^4}+\Lambda(W)X^{Q^3}+c_1X^{Q^2}+c_2X^Q+c_3X.
\]
Let $\lambda=\Lambda(W)$ be the second-highest coefficient and $P_\lambda(X)=-c_1X^{Q^2}-c_2X^Q-c_3X$. Then we have $P_\lambda(\mathcal A)\in \cC$ and
\[
\widehat{\bu}_0(X)+\lambda\widehat{\bu}_1(X)-P_\lambda(X)=A_W(X).
\]
Since $\ker A_W=W$ has $\mathbb F_q$-dimension $4g$, the image of $A_W:F\to F$ has $\mathbb F_q$-dimension $2g=d/2$. Thus $\bu_0+\lambda\bu_1$ is $\delta$-close to $\cC$. It remains to show that the second-highest coefficient $\Lambda(W)$, as $W$ ranges over all four-dimensional $\mathbb F_Q$-subspaces of $F$, cover almost all of $F$.

This counting problem reduces to dimension two. Let $U\subseteq F$ be a two-dimensional $\mathbb F_Q$-subspace. Assume $A_U(X)=X^{Q^2}+\alpha X^Q+\mu X$ and set $W=A_U(F)$. Then $W$ is four-dimensional over $\mathbb F_Q$, and
\[
A_W(A_U(X))=X^{Q^6}-X.
\]
Comparing both coefficients of $X^{Q^5}$ gives $\Lambda(W)=-\alpha^{Q^4}$. Hence the second highest coefficient $\Lambda(W)$ of a four-dimensional vector space $W$ is determined by the second-highest coefficient of the corresponding two-dimensional vector space $U$. By Lemma~\ref{lem:lift-to-four}, it suffices to count the size of $L_2=\{\Lambda(U):\dim_{\mathbb F_Q}U=2\}.$

For this count, write $U_{x,r}=\operatorname{span}_{\mathbb F_Q}\{x,xr\},$
where $x\in F^*=F\setminus\{0\}$ and $r\in F\setminus \mathbb F_Q$. Since both $x$ and $xr$ are roots of $A_{U_{x,r}}$, one obtains
\[
\Lambda(U_{x,r})=-x^{Q^2-Q}R(r),
\quad
R(r)=\frac{r^{Q^2}-r}{r^Q-r}.
\]
For fixed $r$, as $x$ ranges over $F^*$, the factor $x^{Q^2-Q}$ ranges exactly over the kernel of norm $\mathrm N_{F/\mathbb F_Q}$. Therefore the possible values of $\Lambda(U_{x,r})$ form a single norm coset:
\[
\{\Lambda(U_{x,r}):x\in F^*\}
=
\{a\in F^*:\mathrm N_{F/\mathbb F_Q}(a)
=
\mathrm N_{F/\mathbb F_Q}(-R(r))\}.
\]
Thus it remains to show that the norm values $\mathrm N_{F/\mathbb F_Q}(-R(r))$
cover almost all of $\mathbb F_Q^*$. Lemma~\ref{lem:two-dim-diversity} proves this using Weil bounds for the multiplicative character sums associated with $R(X)=\frac{X^{Q^2}-X}{X^Q-X},$ together with character orthogonality and Cauchy--Schwarz. 

Consequently, $|L_2|\ge (1-O(Q^{-1}))Q^6$. By Lemma~\ref{lem:lift-to-four}, the same lower bound holds for $L_4=\{\Lambda(W):\dim_{\mathbb F_Q}W=4\}.$
Hence all but an $O(Q^{-1})$ fraction of $\lambda\in F$ arise as $\Lambda(W)$ for some four-dimensional $W$, and each such $\lambda$ gives a point $\bu_0+\lambda\bu_1$ that is $\delta$-close to $\cC$.

Finally, this affine line has no correlated-agreement explanation. Looking only at the second row, for every $q$-linearized polynomial $P$ of degree less than $k=2g+1$, 
the polynomial $X^{Q^3}-P(X)$ is a nonzero $q$-linearized polynomial of $q$-degree at most $3g$. Its image on $F$ has $\mathbb F_q$-dimension at least $3g$, and therefore $d(\bu_1,\cC)\ge 3g=\frac{3d}{4}$ and $d([\bu_0,\bu_1],\cC^2)\ge \frac{3d}{4}$.

\medskip
\noindent\textbf{Counterexample at proximity $\delta=\frac{d}{3n}$ with small soundness error $\epsilon$.} The construction uses a similar coefficient-counting idea. Let $E:=\mathbb F_{q^n}=\mathbb F_{Q^{c+1}}$. At radius $e=d/3$, Theorem~\ref{thm:below-unique-decoding-counterexample} replaces four-dimensional subspaces by codimension-one $\mathbb F_Q$-subspaces. Lemma~\ref{lem:hyperplane-second-coefficients} identifies the possible second-highest coefficients of their annihilator polynomials by using the trace description of $\mathbb F_Q$-hyperplanes, Showing that these coefficients are exactly $H=(E^*)^{Q-1}.$ 
Thus, for every $\lambda\in H$, there is a codimension-one $\mathbb F_Q$-subspace $W\subseteq E$ whose annihilator has the form
\[
A_W(X)=X^{Q^c}+\lambda X^{Q^{c-1}}+\sum_{i=0}^{c-2}b_iX^{Q^i}.
\]
Taking $P_\lambda(X)=-\sum_{i=0}^{c-2}b_iX^{Q^i},$ we have $P_\lambda(\mathcal A)\in \cC$ and
\[
\bu_0+\lambda\bu_1-P_\lambda(\mathcal A)=A_W(\mathcal A).
\]
Since $\ker A_W=W$ has codimension one over $\mathbb F_Q$, this difference has rank $e$. Hence every $\lambda\in H$ gives an $e$-close point on the affine line.

The point $\lambda=0$, however, is still $e$-far from $\cC$. Theorem~\ref{thm:below-unique-decoding-counterexample} proves this by comparing the top $q$-degree coefficients in a composition identity for $q$-linearized polynomials. Finally, $|H|=\frac{|E^*|}{|\mathbb F_Q^*|}$
has density $\Theta(q^{-e})$ in $E$. Therefore the line contains a $\Theta(q^{-e})$ fraction of $e$-close points, but not all points are $e$-close. This shows that the proximity error in Theorem~\ref{thm:gab-half-correlated} cannot, in general, be improved to $o(q^{-e})$.

\medskip
\noindent\textbf{The construction of rank metric $\LRM$-PCS.} The central ingredient of the original Ligero \cite{ligero} is a tensor-product decomposition of univariate polynomials. To realize such decomposition for a $q$-linearized polynomial, we lift the coefficient vector by a power of $q$. Assume $k = \kappa^2$ is a square satisfying $\kappa \mid m$. Let $f(x) = \sum_{\ell=0}^{k-1} f_\ell x^{q^\ell}$ be a $q$-linearized polynomial over $\mathbb{F}_{q^{m}}$ of $q$-degree less than $k$. After lifting each coefficient $f_{\ell}$ to $f^{q^{m-{\lfloor\ell/\kappa \rfloor}\kappa}}_{\ell}$, the resulting vector is reshaped into a $\kappa \times \kappa$ matrix $\mF$. In this view, $f(x)$ can be expressed as a tensor product under symbolic product:
\begin{equation}\label{eq:tensordecom}
    f(x) =\left(x, x^{q^{\kappa}}, \dots, x^{q^{\kappa(\kappa-1)}}\right) \odot \left(\, {\mF} \cdot \left(x, x^q, \dots, x^{q^{\kappa-1}}\right)^{T}\right),
\end{equation}
where ``$\cdot$'' denotes the standard matrix vector multiplication, and ``$\odot$'' denotes an inner symbolic product defined before Lemma~\ref{lem:comp}.

To commit, we first encode the columns of $\mF$ using an efficient Gabidulin encoder with quasi-linear-time, which is obtained by choosing the evaluation points $\{\alpha_1,\dots,\alpha_n\}$ to be a normal basis of $\mathbb F_{q^n}$~\cite{PW18}. This produces a matrix $U \in \mathbb{F}_{q^m}^{n \times \kappa}$, where each column $U[j]$ is a Gabidulin codeword. The resulting scheme takes the form of an IOPP for interleaved Gabidulin codes. Specifically, the prover commits to $U$ by applying a Merkle tree hash to each of its columns. The verifier then queries a small subset of matrix rows and checks their local consistency. In the evaluation phase, for any $\alpha\in \F_{q^m}$, from the Equation~\eqref{eq:tensordecom}, we have
\[  f(\alpha) =\left(x, x^{q^{\kappa}}, \dots, x^{q^{\kappa(\kappa-1)}}\right) \odot \left(\, {\mF} \cdot \left(\alpha, \alpha^q, \dots, \alpha^{q^{\kappa-1}}\right)^{T}\right),\]
after using same proximity test for the correctness of $\bv={\mF} \cdot \left(\alpha, \alpha^q, \dots, \alpha^{q^{\kappa-1}}\right)^{T}$, the verifier then computes and obtains $f(\alpha)$. The soundness of $\LRM$-PCS rests on two components: the soundness of the proximity test of the Gabidulin codes and the collision resistance of the Merkle hash.

\paragraph{Acknowledgment of AI assistance.}
The counterexample for the proximity gap at $d/3$ (Theorem~\ref{thm:main-threshold-counterexample}) was developed with assistance from an AI agent system (ChatGPT 5.5 Pro). Apart from limited AI assistance for language editing, no AI assistance was used in obtaining the other results of this paper.

%% file: sections/preliminary.tex
\section{Preliminaries}\label{sec:2}
In this section, we briefly introduce some definitions and basic results about linearized polynomials, Gabidulin codes and polynomial commitments.

Throughout this paper, $q$ is always a power of a prime and $\F_q$ denotes the finite field of $q$ elements. Let $\F_{q^m}$ be an extension field of $\F_q$ of degree $m$. For an integer $n\geq 1$, we denote by $[n]$ the set $\{1,2,\dots,n\}$. 

\subsection{Linearized Polynomials}\label{sec:linearizedpoly}
Let $1 \leq k \leq m$. A $q$-linearized polynomial over $\mathbb{F}_{q^m}$ is a polynomial of the form $f(x) = \sum_{i=0}^{k-1} f_i x^{q^i}\in \mathbb{F}_{q^m}[x]$. If the highest coefficient $f_{k-1}\neq 0$, we say that the $q$-degree of $f(x)$ is $k-1$, denoted by $\deg_q f(x)$. For the zero polynomial, we set $\deg_q(0)=-\infty$.

A $q$-linearized polynomial has the property of being $\mathbb{F}_q$-linear functions on $\mathbb{F}_{q^m}$. That is, for all $a,b \in \mathbb{F}_q$ and $\alpha,\beta \in \mathbb{F}_{q^m}$, one has
\[
f(a\alpha + b\beta) = af(\alpha) + bf(\beta).
\]
We denote by $\mathcal{L}(q,m)$ the set of $q$-linearized polynomials over $\F_{q^m}$, i.e.,
\[\mathcal{L}(q,m):=\left\{f(x)=\sum_{i=0}^{\ell}f_ix^{q^i}:f_i\in\F_{q^m},\ell\geq 0\right\}.\]
Furthermore, denote by $\mathcal{L}(q,m,k)$ the set of $q$-linearized polynomials of $q$-degree at most $k-1$ over $\F_{q^m}$, i.e.,
\[\mathcal{L}(q,m,k):=\{f(x)\in\mathcal{L}(q,m):\deg_q f(x)< k\}.\]
It is obvious that $\mathcal{L}(q,m,k)$ is an $\F_{q^m}$-vector space with dimension $k$. 

For any two $q$-linearized polynomials $f(x)=\sum_{i=0}^{\ell_1}f_ix^{q^i}$ and $h(x)=\sum_{j=0}^{\ell_2}h_jx^{q^j}$, their symbolic product is defined as the composition
\[f(x)\circ h(x):=f(h(x))=\sum_{s=0}^{\ell_1+\ell_2}\sum_{i+j=s}f_ih_j^{q^i}x^{q^s}.\] 
It is clear that if $\deg_q f(x)=\ell_1$ and $\deg_q h(x)=\ell_2$, then $\deg_q (f(x)\circ h(x))=\deg_q(f(h(x)))=\ell_1+\ell_2$. The symbolic product is associative and distributive with respect to usual polynomial addition, but is non-commutative, i.e.,
\[f\circ(g+h)=f\circ g+f\circ h; \quad (g+h)\circ f=g\circ f+h\circ f\]
and
\[f\circ(g\circ h)=(f\circ g)\circ h\]
for any $f,g,h\in\mL(q,m)$. Therefore, $\mL(q,m)$ is a non-commutative ring associated with polynomial addition and symbolic product.

For ordinary polynomials, we have the long division. In the case of linearized polynomials, we have also ``long division'' by replacing the usual polynomial product with symbolic product. 
\begin{lemma}[\cite{ore1933theory}]~\label{lem:2.1}
Let $f(x), g(x)\in \mathcal{L}(q,m)$ with $g\neq 0.$ Then, there exists a unique pair $(h,r)$ with $h, r\in \mathcal{L}(q,m)$ and $\deg_q(r)<\deg_q(g)$ satisfying
\begin{equation}\label{eq:1_}
    f=h\circ g+r.
\end{equation}
In particular, if $r=0$, then we say that $g$ right divides $f$.
\end{lemma}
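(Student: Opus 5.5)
We prove right division for linearized polynomials the way one proves ordinary long division, by strong induction on $\deg_q f$. The one twist is that symbolic multiplication on the left by a monomial $a x^{q^t}$ raises the coefficients of the right-hand factor to the $q^t$-th power. Write $\ell_2=\deg_q g$ with leading coefficient $g_{\ell_2}\neq 0$.

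\emph{Existence.} If $f=0$ or $\deg_q f<\ell_2$, take $h=0$ and $r=f$. Otherwise let $\ell_1=\deg_q f\ge \ell_2$ with leading coefficient $f_{\ell_1}\neq0$, and set $t=\ell_1-\ell_2$. By the formula for the symbolic product,
\[
c\,x^{q^t}\circ g(x)=\sum_{j} c\,g_j^{q^t}x^{q^{j+t}},
\]
whose top coefficient is $c\,g_{\ell_2}^{q^t}$. Since $g_{\ell_2}\neq 0$ and Frobenius is injective, we may choose $c=f_{\ell_1}\,g_{\ell_2}^{-q^t}$. Then $f_1:=f-c\,x^{q^t}\circ g$ has $\deg_q f_1<\ell_1$. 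By the induction hypothesis, $f_1=h_1\circ g+r$ with $\deg_q r<\ell_2$. Right distributivity, $(g+h)\circ f=g\circ f+h\circ f$, gives
\[
f=(h_1+c\,x^{q^t})\circ g+r,
\]
which is the required decomposition.

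\emph{Uniqueness.} Suppose $h\circ g+r=h'\circ g+r'$ with both remainders of $q$-degree $<\ell_2$. Then $(h-h')\circ g=r'-r$, again by right distributivity. If $h\neq h'$, the degree additivity $\deg_q(f\circ h)=\deg_q f+\deg_q h$ stated just before the lemma gives
\[
\deg_q\bigl((h-h')\circ g\bigr)=\deg_q(h-h')+\ell_2\ge \ell_2>\deg_q(r'-r),
\]
a contradiction. Hence $h=h'$, and then $r=r'$.

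There is no real obstacle. The only point needing care is the choice of the coefficient $c$ at the induction step: it must involve $g_{\ell_2}^{q^t}$ rather than $g_{\ell_2}$, because of the twist in the symbolic product. This is why we divide on the right, where the unknown quotient multiplies $g$ from the left.
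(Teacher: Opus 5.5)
Your proof is correct. The paper gives no proof of this lemma and simply cites Ore. Your strong induction on $\deg_q f$ is exactly the standard Euclidean-division argument in Ore's skew-polynomial ring: you cancel the leading term with $c\,x^{q^t}\circ g$, $c=f_{\ell_1}g_{\ell_2}^{-q^t}$, and then get uniqueness from degree additivity. It therefore supplies a self-contained justification of the cited result.

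One small correction to your closing remark. Over $\F_{q^m}$ the Frobenius map is bijective, so left division $f=g\circ h+r$ also exists. There the cancelling coefficient must satisfy $c^{q^{\ell_2}}=f_{\ell_1}g_{\ell_2}^{-1}$, which is solvable in a finite field. Right division is the version that avoids extracting $q^{\ell_2}$-th roots, not the only one that works.
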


The following lemma presents interpolation of $q$-linearized polynomials over $\F_{q^m}$.
\begin{lemma}[Interpolation]\label{lem:int}
Assume $\Ga_1,\ldots,\Ga_n\in \mathbb F_{q^m}$ are $\F_q$-linearly independent. Then, for any $y_1,\ldots,y_n\in \mathbb F_{q^m}$, there exists a unique polynomial $f\in \mL(q,m,n)$ such that $ f(\Ga_j)=y_j,\ j=1,\ldots,n.$
\end{lemma}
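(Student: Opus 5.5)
We prove existence and uniqueness separately, treating interpolation as a linear-algebra statement over $\F_{q^m}$. Every $f\in\mL(q,m,n)$ is written $f(x)=\sum_{i=0}^{n-1}f_ix^{q^i}$. The interpolation conditions $f(\Ga_j)=y_j$, $j\in[n]$, are then the square linear system
\[
M\,(f_0,\dots,f_{n-1})^{T}=(y_1,\dots,y_n)^{T},\qquad M=\bigl(\Ga_j^{q^i}\bigr)_{j\in[n],\,0\le i\le n-1},
\]
whose matrix $M$ is the Moore matrix of $\Ga_1,\dots,\Ga_n$. It therefore suffices to show that $M$ is invertible; existence and uniqueness then follow at once.

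To show $M$ is nonsingular, we suppose instead that some nonzero vector $(f_0,\dots,f_{n-1})$ lies in its kernel. Then the nonzero polynomial $f=\sum_i f_ix^{q^i}\in\mL(q,m,n)$ vanishes at every $\Ga_j$.

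\begin{itemize}
\item Since $f$ is $\F_q$-linear (the property recalled in Section~\ref{sec:linearizedpoly}), it vanishes on the whole $\F_q$-span $\mathrm{span}_{\F_q}\{\Ga_1,\dots,\Ga_n\}$.
\item Because the $\Ga_j$ are $\F_q$-linearly independent, this span has exactly $q^n$ elements.
\item On the other hand, $f$ is a nonzero ordinary polynomial of degree at most $q^{n-1}$, so it has at most $q^{n-1}<q^n$ roots in $\F_{q^m}$.
\end{itemize}

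This contradiction shows $\ker M=0$, so $M$ is invertible. Equivalently, the Moore determinant of $\F_q$-independent elements is nonzero.

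With $M$ invertible, the unique coefficient vector is $M^{-1}y$, which gives both existence and uniqueness of $f$. The only step needing care is the root-count bound: it uses the ordinary degree $q^{n-1}$ of $f$, not its $q$-degree. No real obstacle remains beyond this.
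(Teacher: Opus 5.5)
Your proof is correct and follows the paper's route: it reduces interpolation to the square Moore-matrix system and concludes from nonsingularity of that matrix. The one difference is that you justify the nonsingularity by counting roots ($q^n$ zeros on the $\F_q$-span versus ordinary degree at most $q^{n-1}$), whereas the paper simply asserts it as a standard fact.
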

\begin{proof}
The interpolation conditions are equivalent to the linear system
\[
\begin{pmatrix}
\Ga_1 & \Ga_1^q & \cdots & \Ga_1^{q^{n-1}} \\
\Ga_2 & \Ga_2^q & \cdots & \Ga_2^{q^{n-1}} \\
\vdots & \vdots & & \vdots \\
\Ga_n & \Ga_n^q & \cdots & \Ga_n^{q^{n-1}}
\end{pmatrix}
\begin{pmatrix}
a_0\\
a_1\\
\vdots\\
a_{n-1}
\end{pmatrix}
=
\begin{pmatrix}
y_1\\
y_2\\
\vdots\\
y_n
\end{pmatrix}.
\]
The coefficient matrix is a Moore matrix, which is nonsingular whenever
$\Ga_1,\ldots,\Ga_n$ are linearly independent over $\mathbb F_q$. Hence the
system has a unique solution $(a_0,\ldots,a_{n-1})$, giving the desired
unique interpolating $q$-linearized polynomial.
\end{proof}

\subsection{Rank-Metric Codes and Gabidulin Codes}
 Let $\F_q^{m\times n}$ denote the set of all $m\times n$ matrices over $\F_q$. For any matrix $M\in \F_q^{m\times n}$, let $\mathrm{rk}_q(M)$ denote its rank over $\F_q$. If we fix an $\F_q$-basis of $\F_{q^m}$, then an element of $\F_{q^m}$ can be identified with a column vector in $\F_q^m$ and hence a vector $\bc=(c_1,c_2,\dots,c_n)\in\F_{q^m}^n$ can be viewed as a matrix in $\F_q^{m\times n}$. For any two vectors $\bu,\bv\in\F_{q^m}^n$, their rank distance and relative rank distance are defined as
 \[d_{\rk}(\bu,\bv)=\mathrm{rk}_q(\bu-\bv),\ \Delta_{\rk}(\bu,\bv)=d(\bu,\bv)/n.\]
In the following, unless otherwise specified, we simply use $d(\cdot,\cdot)$ and $\Delta(\cdot,\cdot)$ to denote the rank distance $d_{\rk}(\cdot,\cdot)$ and relative rank distance $\Delta_{\rk}(\cdot,\cdot)$, respectively,

Assume $k\leq n\leq m$. A $[n,k]$ linear rank-metric code $\mC$ over $\F_{q^m}$ is an $\F_{q^m}$-subspace of $\F_{q^m}^n$ of dimension $k$. Its rank distance $d(\mC)$ is defined by
 \[d(\mC)=\min\{d(\bc_1,\bc_2):{\bc_1}\neq\bc_2\in \mC\}.\]
If $d(\mC)=d$, then $\mC$ is called an $[n,k,d]$ linear rank-metric code.

\begin{definition}
Let $\mC\subseteq \F_{q^m}^n$ be a rank-metric code. For any word $\bv\in\F_{q^m}^n$, its rank distance and relative rank distance to $\cC$ is defined as
\[d(\bv,\cC)=\min\{\rk_q(\bv-\bc)\mid \bc\in \cC\},\ \Delta(\bv,\cC):=d(\bv,\cC)/n.\]
If $\Delta(\bv,\cC)\le \delta$, we say $\bv$ is $\delta$-close to $\cC$. Otherwise, we say $\bv$ is $\delta$-far from $\cC$.
\end{definition}

An important class of rank-metric codes is the Gabidulin code~\cite{G85}, which is widely regarded as the rank-metric analogues of Reed--Solomon codes. 
\begin{definition}\label{def:Gabidulin-codes}
    Let $\cA=\{\Ga_1, \Ga_2,\cdots, \Ga_n\}\subset\F_{q^m}$ be an $\F_q$-linearly independent set. For $k\leq n\leq m$,
The Gabidulin code of length $n$ and dimension $k$ associated with the evaluation set $\cA$ is defined as
\[\mathcal{G}(\cA,k):=\{(f(\Ga_1),f(\Ga_2),\cdots,f(\Ga_n)): f(x)\in\mL(q,m,k)\}.\]
It has rank distance $d=n-k+1$.
\end{definition}

\begin{definition}{(Interleaved rank-metric code \cite{BartzHLPRW22}}\label{def:Interleaved-code}
    Let $\cC$ be an $[n,k,d]$ $\F_{q^m}$-linear rank-metric code over $\F_{q^m}$ and $s$ be a positive integer. We let $\cC^s$ denote the interleaved code over $(\F_{q^m})^s$ whose codewords are all $s\times n$ matrices $U$ such that every row of $U$ is a codeword in $\cC$. For a matrix $U$, we denote its $i$th row and $j$th column by $U(i)$ and $U[j]$, respectively. The rank of such a matrix is always taken after viewing $U$ as an $ms\times n$ matrix over $\F_q$. 
\end{definition}
In particular, the interleaved Gabidulin code $\cG(\mathcal{A},k)^s$ attains the same minimum distance $d = n - k + 1$ as the underlying Gabidulin code $\cG(\mathcal{A},k)$.


\subsection{The Berlekamp--Welch Unique Decoding Algorithm for Gabidulin codes}\label{sec:BW}
We recall the Berlekamp--Welch decoding method for Gabidulin codes introduced in~\cite{Loidreau2006Welch}. The main idea is to reduce unique decoding to constructing a polynomial in $\mL(q,m,k)$ by solving a linear system.

In the following, let $\cC=\mathcal G(\mathcal{A},k)\subseteq\mathbb F_{q^m}^n$ be a Gabidulin code. For a $q$-linearized polynomial $P$, we write the evaluation of $P$ over the set $\cA$ by 
$$P(\mathcal{A}):=(P(\alpha_1),\ldots,P(\alpha_n)).$$ We also fix a decoding radius $t$ satisfying $2t<d=n-k+1.$

\begin{definition}\label{def:linearized-reconstruction}
Given a vector $\by\in F_{q^m}^n$ such that $d(\by,\cC)\leq t$, the linearized polynomial reconstruction problem associated with $(\by,t;\cA,k)$ is to find a nonzero pair of $q$-linearized polynomials
\[
\Lambda(X)=\sum_{r=0}^{t}\lambda_rX^{q^r},\quad
N(X)=\sum_{j=0}^{k+t-1}\nu_jX^{q^j},
\]
such that
\begin{equation}\label{eq:bw-reconstruction}
\Lambda(y_i)=N(\alpha_i),\quad i\in[n].
\end{equation}
\end{definition}

The reconstruction system is homogeneous. Hence the pair $(\Lambda,N)$ is not meant to be unique: if $(\Lambda,N)$ is a solution, then so is $(c\Lambda,cN)$ for every $c\in\mathbb F_{q^m}^{*}$. This nonuniqueness does not affect decoding, because the decoder relies only on the compositional relation between $\Lambda$ and $N$. Specifically, when the received word is close to a unique codeword $P(\mathcal{A})$, the desired solution satisfies $N=\Lambda\circ P.$ Thus, the reconstruction problem can be viewed as implicitly encoding the unknown message polynomial $P$ in the composition $N=\Lambda\circ P$.

The following standard result explains why such a reconstruction pair exists whenever the received word is within the unique-decoding radius.

\begin{lemma}[\cite{Loidreau2006Welch}]\label{lem:bw-existence}
Suppose that $\by=P(\mathcal{A})+\mathbf{e}$ for some $P\in\mathcal L(q,m,k)$ and $\be\in\F_{q^m}^n$ with $\mathrm{rk}_q(\mathbf{e})\le t.$ Then the reconstruction Equation \eqref{eq:bw-reconstruction} is solvable. More specifically, one may take $\Lambda$ to be the subspace annihilator polynomial of the error space Span$_q\langle e_1,\dots,e_n\rangle$ and $N=\Lambda\circ P$. Moreover, any nonzero solution $(A, B)$ to Equation~\eqref{eq:bw-reconstruction} satisfying $\deg_q A \le t$ and $\deg_q B \le k + t - 1$ must satisfy the composition identity $B = A \circ P$. Consequently, the message polynomial $P$ can be uniquely recovered from $A$ and $B$.
\end{lemma}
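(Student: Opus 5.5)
The plan is to prove the two claims of Lemma~\ref{lem:bw-existence} separately: existence of a solution by an explicit choice, and the composition identity by a root-counting argument for $q$-linearized polynomials. Throughout, I use the standard fact that a nonzero $q$-linearized polynomial of $q$-degree $s$ has an $\F_q$-linear root space of dimension at most $s$, since it has at most $q^s$ roots.

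\emph{Existence.} Let $W=\mathrm{Span}_q\langle e_1,\dots,e_n\rangle$, so $\dim_{\F_q}W=\mathrm{rk}_q(\be)\le t$. Let $\Lambda(X)=\prod_{w\in W}(X-w)$ be its subspace annihilator. This is a nonzero $q$-linearized polynomial of $q$-degree $\dim W\le t$. Set $N=\Lambda\circ P$, which has $q$-degree at most $t+k-1$. Since $\Lambda$ is $\F_q$-linear and $\Lambda(e_i)=0$, we get
\[
\Lambda(y_i)=\Lambda(P(\Ga_i))+\Lambda(e_i)=N(\Ga_i)
\]
for all $i\in[n]$. Hence $(\Lambda,N)$ is a nonzero solution of \eqref{eq:bw-reconstruction}.

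\emph{Composition identity.} Let $(A,B)$ be any nonzero solution with $\deg_q A\le t$ and $\deg_q B\le k+t-1$.
\begin{enumerate}
\item First, $A\neq 0$. Otherwise $B(\Ga_i)=0$ for all $i$, so $B$ vanishes on $V=\mathrm{span}_{\F_q}(\cA)$, which has dimension $n$. Since $2t<d$ gives $n>k+t-1$, this forces $B=0$, a contradiction.
\item Define $R=A\circ P-B$, which has $q$-degree at most $k+t-1$. By linearity of $A$,
\[
R(\Ga_i)=A(P(\Ga_i))-A(y_i)=-A(e_i).
\]
\item Since the $\Ga_i$ are $\F_q$-independent, the map $\varphi:V\to\F_{q^m}$ given by $\sum c_i\Ga_i\mapsto\sum c_ie_i$ is well defined and $\F_q$-linear. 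Its image is $W$, so $\dim\ker\varphi\ge n-t$.
\item Both $R$ and $-A\circ\varphi$ are $\F_q$-linear on $V$ and agree on the basis $\cA$, so $R=-A\circ\varphi$ on $V$. In particular, $R$ vanishes on $\ker\varphi$.
\item From $2t<n-k+1$ we get $n-t\ge k+t>k+t-1\ge\deg_q R$. By the root-space bound, $R=0$, i.e.\ $B=A\circ P$.
\end{enumerate}

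\emph{Recovery.} Since $A\neq0$, Lemma~\ref{lem:2.1} gives a unique right-division pair for $B$ by $A$. Thus $P$ is recovered as the quotient in $B=P'\circ A+r$; more precisely, in the left-division sense $B=A\circ P$, the uniqueness follows by the same degree argument applied to $A\circ(P-P')=0$, since $A\neq0$ and $q$-degrees add.

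The main obstacle is step~4, identifying $R$ on all of $V$ with $-A\circ\varphi$. It needs $\F_q$-linearity of every map involved and the independence of $\cA$, so that enough roots of $R$ are found to beat its $q$-degree.
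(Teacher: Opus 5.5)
Your proof is correct. The paper does not prove this lemma itself; it quotes it from Loidreau. Your argument is the standard one.

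The usual presentation changes the $\F_q$-basis of $\mathrm{span}_{\F_q}(\mathcal{A})$ so that the error vector is supported on at most $t$ coordinates. It then observes that $A\circ P-B$ vanishes on the remaining $n-t$ independent evaluation points. Your $\ker\varphi$ is exactly the span of those points, and your degree comparison $n-t\ge k+t>\deg_q(A\circ P-B)$ is the same one. Existence via the annihilator of the error space, and $A\neq 0$ via $\deg_q B\le k+t-1<n$, are also as usual.

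There is one small slip in your recovery paragraph. Lemma~\ref{lem:2.1} is a \emph{right} division, $B=h\circ A+r$. Because $\circ$ is non-commutative, its quotient $h$ is in general not $P$; the relation $B=A\circ P$ needs division on the other side.

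Your fallback argument is correct and is all that uniqueness requires. If $A\circ P=A\circ P'$, then additivity of $A$ gives $A\circ(P-P')=0$, and since $A\neq 0$ and $q$-degrees add, $P=P'$.

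If you also want an explicit recovery procedure, compare coefficients in $B=A\circ P$ from the top. With $s=\deg_q A$, the coefficient of $X^{q^{s+j}}$ is $a_s p_j^{q^s}$ plus terms involving only $p_{j'}$ with $j'>j$. Since $q^s$-th roots exist uniquely in $\F_{q^m}$, this determines $p_{k-1},p_{k-2},\dots,p_0$ in turn.
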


Therefore, the Berlekamp--Welch decoder can be summarized as follows. First, solve the homogeneous reconstruction system to obtain a nonzero pair $(\Lambda,N)$. Second, recover the message polynomial $P$ from the composition identity $N=\Lambda\circ P$. Finally, verify that the recovered codeword is indeed within rank distance $t$ of the received word.

\subsection{Interactive Oracle Proofs} \label{sec:iop}
We briefly recall the definitions of interactive oracle proofs (IOPs) and interactive oracle proofs of proximity for error-correction codes \cite{basefold}.
\paragraph{Interactive oracle proofs (IOPs).} An $r$-round public coin $\IOP=(\Po, \V)$, for a relation $\mathcal{R}$ runs as follows: Initially, $\Po$ sends an oracle string $\pi_0$. In each round $i\in [1, r]$, the verifier samples and sends a random challenge $z_i$, and the prover replies with an oracle string $\pi_i$. After $r$ rounds of communications, the verifier $\V$ queries some entries of the oracle strings $\pi_0, \pi_1,\dots, \pi_r$ and outputs a bit $b$.  
\begin{definition}
   Let $\IOP=(\Po,\V)$ be a $r$-round public coin IOP protocol for a relation $\cR$. We say that $\IOP$ is complete if for every $(x, w)\in\cR$,  
   \[\Pr_{z_1,\dots,z_r} \left[\V^{\pi_0,\pi_1,\dots,\pi_r}(x,z_1,\dots,z_r)=1 \,\middle|\, \begin{array}{l}
        \pi_0 \leftarrow \Po(x,w)\\
   \pi_1 \leftarrow \Po(x,w,z_1)\\
   ~~\cdots \\
   \pi_r\leftarrow \Po(x,w,z_1,\dots,z_{r})  
   \end{array}\right]=1\]
We say that $\IOP$ has soundness error $\epsilon$ if for any $x\notin L(\cR)$ and any adversary $\cA$,
   \[\Pr_{z_1,\dots,z_r} \left[\V^{\pi_0,\pi_1,\dots,\pi_r}(x,z_1,\dots,z_r)=1 \,\,\middle|\, \begin{array}{l}
        \pi_0 \leftarrow \cA(x)\\
   \pi_1 \leftarrow \cA(x,z_1)\\
  ~~ \cdots \\
   \pi_r\leftarrow \cA(x,z_1,\dots,z_{r})  
   \end{array}\right]<\epsilon\]
\end{definition}

\paragraph{IOPs of proximity.} An IOP of proximity (IOPP) is similar to IOP with the specialty that the witness $w$ is also sent as an oracle string. The verifier can query $w$ as an oracle but will only query $q_w\ll |w|$ entries of $w$. The soundness states that if $w$ is far from any valid witness, then the verifier rejects with high probability.
\begin{definition}
A public coin $\IOPP=(\Po, \V)$ for relation $\cR$ is an IOP of proximity if it satisfies $\IOP$ completeness, and the $\epsilon(\cdot)$-IOPP soundness holds for some function $\epsilon(\cdot)$: for every $(x, w)$ where $w$ is $\delta$-far (in relative Rank distance) from any $w'$ such that $(x, w')\in \cR$, it holds that for any adversary ${\cA}$,
   \[\Pr_{z_1,\dots,z_r} \left[\V^{w,\pi_0,\dots,\pi_r}(x,z_1,\dots,z_r)=1 \,\,\middle|\, \begin{array}{l}
        \pi_0 \leftarrow \cA(x,w)\\
   \pi_1 \leftarrow \cA(x,w,z_1)\\
   ~~\cdots \\
   \pi_r\leftarrow \cA(x,w,z_1,\dots,z_{r})  
   \end{array}\right]<\epsilon(\delta)\]
\end{definition}

\subsection{Polynomial Commitment Schemes}\label{sec:pcs}
We adapt the definition from~\cite{BFS20,basefold} to define a rank-metric polynomial commitment scheme for $q$-linearized polynomials.

\begin{definition}\label{def:pc}
Let $\mathcal L(q,m,k)$ be the space of $q$-linearized polynomials of $q$-degree less than $k$. A tuple of algorithms $(\mathsf{Gen},\mathsf{Commit},\mathsf{Open},\mathsf{Eval})$ is an \emph{extractable polynomial commitment scheme} for $\mathcal L(q,m,k)$ if the following properties hold.
\begin{description}
\item[--] $\mathsf{Gen}(1^\lambda,q,m,k)$ takes security parameter $\lambda$, prime power $q$, positive integers $k\leq m$ and outputs public parameters $\mathsf{pp}$.
\item[--] $\mathsf{Commit}(\mathsf{pp},f)$ takes a $q$-linearized polynomial $f\in\mathcal L(q,m,k)$ and outputs a commitment $\mathsf{com}$. 
\item[--] $\mathsf{Open}(\mathsf{pp},\mathsf{com},f)$ takes a commitment $\mathsf{com}$ and a $q$-linearized polynomial $f$ and outputs a bit $b\in\{0,1\}$. 
\item[--] $\mathsf{Eval}(\mathsf{pp},\mathsf{com},\alpha,\beta)$ is an interactive public-coin protocol between a probabilistic polynomial time (PPT) prover $\Po$ and verifier $\V$ both with input $\mathsf{pp}, \mathsf{com},$ an evaluation point $\alpha\in\F_{q^m}$ and a claimed evaluation $\beta\in\F_{q^m}$. $\Po$ additionally knows the opening of $\mathsf{com}$ to a $q$-linearized polynomial $f\in\mL(q,m,k)$. The protocol convinces the verifier that $\beta=f(\alpha)$. At the end of the protocol, $\V$ outputs $b\in\{0,1\}$. 
\item[Completeness.] An honest prover can successfully convince the verifier of any evaluation. Specifically, for every \(f\in\mathcal L(q,m,k)\) and every $\alpha\in\mathbb F_{q^m}$, if $\mathsf{com}\leftarrow \mathsf{Commit}(\mathsf{pp},f)$ and $\beta=f(\alpha)$, then 
\[\mathsf{Eval}(\mathsf{pp},\mathsf{com},\alpha,\beta)=1.\]
\item[Binding.] The PCS is binding if for every PPT adversary $\mathcal A$, 
\[\Pr\left[
   \begin{array}{l}
   \mathsf{pp}\leftarrow \mathsf{Gen}(1^\lambda,q,m,k);\ (\mathsf{com},f_0,f_1)
   \leftarrow \mathcal A(\mathsf{pp});f_0\neq f_1\\
   \mathsf{Open}(\mathsf{pp},f_0,\mathsf{com})=1;\ \mathsf{Open}(\mathsf{pp},f_1,\mathsf{com})=1;\ 
   \end{array}
\right]
\leq \mathrm{negl}(\lambda).
\]
\item[Knowledge Soundness.]  {Any successful prover in the $\mathsf{Eval}$ protocol must know a polynomial $f\in\mL(q,m,k)$ such that $f(\alpha)=\beta$ and $\mathsf{com}$ is a commitment to $f(x)$. More specifically, $\mathsf{Eval}$ is a succinct argument of knowledge for the relation given $\mathsf{pp}\leftarrow \mathsf{Gen}(1^{\lambda},q,m,k)$
\[
\mathcal{R}_{\mathsf{Eval}}(\mathsf{pp})
:=
\left\{
\left(
(\mathsf{com},\alpha,\beta),f
\right)
:
\begin{array}{l}
f\in\mathcal{L}(q,m,k),\\
\mathsf{Open}(\mathsf{pp},\mathsf{com},f)=1,\\
f(\alpha)=\beta
\end{array}
\right\}.
\]
That is for any PPT algorithms $\cA$ and $\Po^*$, there exists a PPT extractor $\mathsf{Ext}$ such that,
for any randomness $\rho$,
\[
\left|\Pr\!\left[
\begin{array}{l}
\mathsf{pp}\leftarrow\mathsf{Gen}(1^\lambda,q,m,k),\\
(\mathsf{com},\alpha,\beta)\leftarrow\mathcal{A}(\mathsf{pp};\rho),\\
\mathsf{Eval}(\mathsf{pp},((\mathsf{com},\alpha,\beta)))
\langle\mathcal{P}^*(\rho),\V\rangle=1
\end{array}
\right]-
\Pr\!\left[
\begin{array}{l}
\mathsf{pp}\leftarrow\mathsf{Gen}(1^\lambda,q,m,k),\\
(\mathsf{com},\alpha,\beta)\leftarrow\mathcal{A}(\mathsf{pp};\rho),\\
f\leftarrow\mathsf{Ext}(\mathsf{pp},(\mathsf{com},r))\\
((\mathsf{com},\alpha,\beta),f)\in \mathcal{R}_{\mathsf{Eval}}(\mathsf{pp})
\end{array}
\right]\right|
\leq\operatorname{negl}(\lambda).
\]}
\end{description}
\end{definition}

%% file: sections/proximity-gap.tex
\section{Rank-Metric Proximity Gaps}
In this section, we prove proximity gaps for linear rank-metric codes, which will serve as a key ingredient in our $q$-linearized PCS.

\subsection{A $\frac{d-1}{3n}$-Proximity Gap for General Linear Rank-Metric Codes}

\begin{lemma}\label{lem:d/3gap}
    Let $\cC\subseteq \F_{q^m}^n$ be an $\F_{q^m}$-linear rank metric code with minimum distance $d$. For any two vectors $\bu, \bv$ in $\F_{q^m}^n$, let $\ell_{\bu,\bv}:=\{\bu+\lambda \bv\mid \lambda\in\F_{q^m}\}$. Let $1\le e \le \lfloor (d-1)/3\rfloor$ be a positive integer and $q^{e+1}<q^m$. Then either all points $\bx$ in $\ell_{\bu,\bv}$ satisfy the rank distance $d(\bx,\cC)\leq e$ or 
    \[\Pr_{\lambda\in\F_{q^m}}[d(\bu+\lambda \bv,\cC)\leq e]\leq q^{e+1}/q^m\]
\end{lemma}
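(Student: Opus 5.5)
We adapt the Hamming-metric $(d-1)/3n$ argument of \cite{AmesHIV23} to the rank metric. The rank metric supplies the needed subadditivity, $\rk_q(\bx+\by)\le \rk_q(\bx)+\rk_q(\by)$, and it is preserved under scaling by $\F_{q^m}^*$ because $\F_{q^m}$-scaling acts $\F_q$-linearly and invertibly on each coordinate. Throughout, we write $\bu_\lambda:=\bu+\lambda\bv$ and $S:=\{\lambda: d(\bu_\lambda,\cC)\le e\}$. Assume $|S|>q^{e+1}$; the goal is to show that every point of the line is $e$-close.

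\textbf{Step 1: coherent decoding.} Fix distinct $\lambda_1,\lambda_2\in S$ with closest codewords $\bc_1,\bc_2$ and errors $\be_i=\bu_{\lambda_i}-\bc_i$, each of rank at most $e$. By linearity,
\[
\bv=(\lambda_1-\lambda_2)^{-1}\bigl((\bc_1-\bc_2)+(\be_1-\be_2)\bigr),
\]
so $\bv$ lies within rank $2e$ of a codeword $\bc_v$, and similarly $\bu$ lies within rank $2e$ of a codeword $\bc_u$. Now take any $\lambda\in S$ with codeword $\bc_\lambda$. Then $\bc_\lambda-(\bc_u+\lambda\bc_v)$ has rank at most $e+2e+2e$. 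This exceeds $d-1$, so a naive bound does not suffice, and the argument must be sharpened as in the Hamming case.

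The better route is to work with subspaces. For each $\lambda\in S$, let $V_\lambda\subseteq\F_q^n$ be the column-space (support) of the error: the row space of the $m\times n$ error matrix, of dimension at most $e$. For $\lambda,\lambda'\in S$, the codeword $(\lambda'-\lambda)^{-1}(\bc_{\lambda'}-\bc_\lambda)$ differs from $\bv$ by an error supported in $V_\lambda+V_{\lambda'}$, a subspace of dimension at most $2e$.

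\textbf{Step 2: consistency via distance $d>3e$.} For three points $\lambda_1,\lambda_2,\lambda_3\in S$, the two candidate decodings of $\bv$, from the pairs $(\lambda_1,\lambda_2)$ and $(\lambda_1,\lambda_3)$, differ by a codeword whose support is contained in $V_{\lambda_1}+V_{\lambda_2}+V_{\lambda_3}$. That subspace has dimension at most $3e<d$, so the two decodings coincide. Hence all pairs in $S$ yield the same $\bc_v$, and by the same argument the same $\bc_u=\bc_\lambda-\lambda\bc_v$. Consequently $\bc_\lambda=\bc_u+\lambda\bc_v$ for every $\lambda\in S$. Writing $\be_u=\bu-\bc_u$ and $\be_v=\bv-\bc_v$, we get $\rk_q(\be_u+\lambda\be_v)\le e$ for all $\lambda\in S$.

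\textbf{Step 3: counting (main obstacle).} It remains to show that $\rk_q(\be_u+\lambda\be_v)\le e$ for more than $q^{e+1}$ values of $\lambda$ forces $\rk_q([\be_u,\be_v])\le e$; the line then consists entirely of $e$-close points. View $\be_u,\be_v$ as $m\times n$ matrices $E_u,E_v$ over $\F_q$, and let $T$ be their joint row space, of dimension $t$. Suppose $t\ge e+1$ and choose an $(e+1)$-dimensional set of column functionals that is injective on $T$. Restricted to these coordinates, the condition asks that the $e+1$ coordinates of $\be_u+\lambda\be_v\in\F_{q^m}^{e+1}$ be $\F_q$-dependent. Equivalently, there is a nonzero $\mathbf a\in\F_q^{e+1}$, up to scaling, with $\langle\be_u,\mathbf a\rangle+\lambda\langle\be_v,\mathbf a\rangle=0$. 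For each of the at most $\sigma_e<q^{e+1}$ projective classes of $\mathbf a$, this equation has at most one solution $\lambda$, unless $\langle\be_u,\mathbf a\rangle=\langle\be_v,\mathbf a\rangle=0$. That exceptional case would make $\mathbf a$ annihilate the restricted joint row space, contradicting injectivity. Hence $|S|\le\sigma_e\le q^{e+1}$, a contradiction. So $t\le e$, every point of $\ell_{\bu,\bv}$ is within rank $e$ of $\cC$, and the conclusion follows.

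The delicate part is choosing the coordinate restriction so that it preserves $T$ while keeping the dimension count at $e+1$. If this fails, the fallback is to pick $e+1$ columns of $[E_u;E_v]$, regarded as a $2m\times n$ matrix, that are linearly independent.
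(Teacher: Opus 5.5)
Your proof is correct. Its core matches the paper's: fix two good points, use $3e<d$ to force consistent decoding across all good points, then count the $\lambda$'s via $e+1$ $\F_q$-independent columns of the interleaved error. The organization differs, however.

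\emph{The paper's route.} It proves the dichotomy directly. It moves the base point to a far point, and swaps $\bu$ and $\bv$ when $\bv$ is close. It then reparametrizes the line through the two error vectors $\bu'=\bu+\lambda_1\bv-\bc_1$ and $\bv'=\bu+\lambda_2\bv-\bc_2$. Via $\rk_q(\bc)\le 3e<d$ it shows that every good point of $\bu'+\alpha\bv'$ decodes to $0$. Finally, it gets $\rk_q[\bu';\bv']\ge e+1$ from the farness of $\bv$.

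\emph{Your route.} You argue the contrapositive and phrase the consistency step through row supports in $\F_q^n$. These supports are subadditive and invariant under $\F_{q^m}$-scaling, because multiplication by $\gamma$ left-multiplies the $m\times n$ matrix by a fixed invertible matrix. You should state this invariance explicitly, since you only justified invariance of rank. You obtain $\bc_\lambda=\bc_u+\lambda\bc_v$ on all of $S$, and the count then gives $\rk_q[\be_u;\be_v]\le e$ outright.

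\emph{What each buys.} Your version yields correlated agreement (Theorem~\ref{thm:ca}) in the same stroke, already for $|S|>\sigma_e$. It needs neither the WLOG swaps nor the paper's separate maximality argument for that theorem, and it does not use $q^{e+1}<q^m$. The paper's version never introduces supports and works only with ranks of single and interleaved words, at the cost of the case reductions.

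\emph{One wording fix in Step 3.} A set of $e+1$ functionals cannot be injective on $T$ when $t>e+1$. What you need is that the functionals map $T$ onto $\F_q^{e+1}$, i.e.\ that $[E_u;E_v]B$ has full column rank $e+1$, where $B\in\F_q^{n\times(e+1)}$ holds the chosen functionals. Your fallback does exactly this: choose $e+1$ linearly independent columns of the $2m\times n$ matrix $[E_u;E_v]$, which exist since $t\ge e+1$. This is also the choice the paper makes.
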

\begin{proof}
    If $\bv = 0$, it is clearly true. Now assume that $\bv \neq 0$. If $\bu,\bv$ are $\F_{q^m}$-linearly dependent, then $\bu+\lambda \bv=(\alpha+ \lambda)\bv$ for some $\alpha\in\F_{q^m}$. Thus it is also clearly true.
Now we assume that $\bu, \bv$ are $\F_{q^m}$-linearly independent. 

Suppose there exists at least one point $\bx = \bu + \alpha \bv \in \ell_{\bu,\bv}$ such that $d(\bx,\cC) > e$. Since $\ell_{\bu,\bv} = \ell_{\bx,\bv}$, we may equivalently replace $\bu$ with $\bx$ and assume that $d(\bu,\cC) > e$. 

Next, for any nonzero $\lambda \in \mathbb{F}_{q^m}^*$, the vector $\lambda^{-1}(\bu+\lambda \bv) = \lambda^{-1}\bu + \bv$. Note that if $d(\bv,\cC) \le e$, we have the following inequality
\begin{equation}\label{eq:step1}
\begin{split}
\Pr_{\lambda \in \mathbb{F}_{q^m}} [ d(\bu+\lambda \bv , \cC) \le e ]
&=\frac{q^m-1}{q^m}\cdot\Pr_{\lambda \in \mathbb{F}_{q^m}^*} [ d(\bu+\lambda \bv , \cC) \le e ] \\[6pt]
&= \frac{q^m-1}{q^m}\cdot\Pr_{\lambda \in \mathbb{F}_{q^m}^*} [ d(\lambda^{-1}\bu + \bv , \cC) \le e ] \\[6pt]
&= \Pr_{\lambda \in \mathbb{F}_{q^m}}  [d(\bv + \lambda \bu , \cC) \le e ] - \frac{1}{q^m}.
\end{split}
\end{equation}
Thus, if $d(\bv,\cC) \le e$, then the probability that $\bu+\lambda \bv$ lies within distance $e$ of $\cC$ would be strictly smaller than the corresponding probability for $\bv+\lambda \bu$, where $d(\bu,\cC)>e$. Thus, it suffices to show the result under the assumption that $d(\bv,\cC) > e$.

If there is only one $\lambda \in \mathbb{F}_{q^m}$ such that $d(\bu+\lambda \bv, \cC) \leq e$, we are done.  

Now assume that there are at least two distinct $\lambda_1, \lambda_2 \in \mathbb{F}_{q^m}$ such that 
\[
d(\bu+\lambda_i \bv, \cC) \leq e \quad \text{for } i=1,2.
\] 
For $i=1,2$, let $\bc_i \in \cC$ such that $d(\bu+\lambda_i \bv, \bc_i) \le e$, put
\[
\bu' = \bu+\lambda_1 \bv - \bc_1, \qquad \bv' = \bu+\lambda_2 \bv - \bc_2.
\]
Then $\rk_q(\bu'),\rk_q(\bv') \le e$. 

For any $\alpha\in\F_{q^m}$,
\[
\bu' + \alpha \bv' 
= (1+\alpha) \bu + (\lambda_1+\alpha \lambda_2)\bv - (\bc_1+\alpha \bc_2),
\]
If $\alpha \neq -1$, then
\[
d(\bu'+\alpha \bv', \cC) = d\left( \bu + \frac{\lambda_1+\alpha \lambda_2}{1+\alpha} \bv, \cC \right).
\]
Define a set $S$ as
\[
S := \left\{ \frac{\lambda_1 + \alpha \lambda_2}{1+\alpha} : \alpha \in \mathbb{F}_{q^m} \setminus \{-1\} \right\}.
\]
As $\frac{\lambda_1 + \alpha_1 \lambda_2}{1+\alpha_1} \neq \frac{\lambda_1 + \alpha_2 \lambda_2}{1+\alpha_2}$ for $\alpha_1 \neq \alpha_2$, we have $|S| = q^m - 1$. Thus,
\begin{equation}\label{eq:step2}
\begin{split}
    \Pr_{\lambda\in \mathbb{F}_{q^m}} [d(\bu+\lambda \bv, \cC) \leq e ] - \tfrac{1}{q^m}
    &\leq \frac{q^m-1}{q^m}\cdot\Pr_{\alpha \in \F_{q^m}\setminus\{-1\}} \left[ d\left(\bu + \tfrac{\lambda_1+\alpha \lambda_2}{1+\alpha} \bv, \cC \right) \leq e \right]\\[6pt]
   &\leq \Pr_{\alpha \in \mathbb{F}_{q^m}} [ d(\bu'+\alpha \bv', \cC) \leq e ].
\end{split} 
\end{equation}
Thus, it suffices to prove that there exist at most $q^{e+1}-1$ elements $\alpha\in\F_{q^m}$ such that $d(\bu'+\alpha \bv', \cC) \leq e$. This gives
\[
\Pr_{\lambda \in \mathbb{F}_{q^m}} \{ d(\bu+\lambda \bv, \cC) \leq e \}
\leq \Pr_{\alpha \in \mathbb{F}_{q^m}} \{ d(\bu'+\alpha \bv', \cC) \leq e \} + \frac{1}{q^m}
\leq \frac{q^{e+1}}{q^m}.
\] 

To prove this, we first claim that
$
\rk_q\begin{pmatrix} \bu' \\ \bv' \end{pmatrix} \geq e+1.
$
Otherwise, suppose that $\rk_q\begin{pmatrix} \bu' \\ \bv' \end{pmatrix} \leq e$. Then $d(\bu'+\lambda \bv', \cC) \le e$ for all $\lambda$.  
As $\bv =\frac{1}{\lambda_1-\lambda_2} (\bu' -\bv'+\bc_1-\bc_2)$, we would have $d(\bv,\cC) =d(\bu'-\bv', \cC)\leq e$, a contradiction.  Hence $\rk_q\begin{pmatrix} \bu' \\ \bv' \end{pmatrix} \geq e+1$.  

Let $\bu'=(u'_1,\dots,u'_n)$, $\bv'=(v'_1,\dots,v'_n)$. By the claim, we may assume that the first $e+1$ columns of $\begin{pmatrix} \bu' \\ \bv' \end{pmatrix}$ are $\mathbb{F}_q$-linearly independent. This implies that
$
\sum_{i=1}^{e+1} \lambda_i v'_i \quad \text{and} \quad \sum_{i=1}^{e+1} \lambda_i u'_i
$
cannot both be zero for $\lambda_1,\dots,\lambda_{e+1} \in \mathbb{F}_q$ not all zero. 

Note that if for some $\alpha$ such that $d(\bu'+\alpha \bv', \cC) \leq e$, then there exists $\bc \in \cC$ such that
\[
\rk_q(\bu'+\alpha \bv'-\bc) \leq e.
\]
Thus
\[
\rk_q(\bc) \le \rk_q(\bu'+\alpha \bv') + e \leq \rk_q(\bu')+\rk_q(\bv')+e \leq 3e< d,
\]
forcing $\bc=0$. Hence $\rk_q(\bu'+\alpha \bv') \le e$. Then there exist $\lambda_1,\dots,\lambda_{e+1} \in \mathbb{F}_q$ not all zero such that
\[
\sum_{i=1}^{e+1} \lambda_i (u'_i + \alpha v'_i) = 0,\ i.e.,\
\alpha \sum_{i=1}^{e+1} \lambda_i v'_i = - \sum_{i=1}^{e+1} \lambda_i u'_i.
\]
Thus $\sum_{i=1}^{e+1} \lambda_i v'_i \neq 0$, otherwise both sides vanish, a contradiction. Hence
\[
\alpha = - \frac{\sum_{i=1}^{e+1} \lambda_i u'_i}{\sum_{i=1}^{e+1} \lambda_i v'_i}.
\]
We have only $q^{e+1}-1$ choices for $(\lambda_1,\dots,\lambda_{e+1})$, and therefore at most $q^{e+1}-1$ elements $\alpha\in \F_{q^m}$ such that $d(\bu'+\alpha \bv', \cC) \leq e$. This completes the proof.

\end{proof}

Analogously to the Hamming metric setting, the above proximity gap in rank metric also implies the existence of a correlated agreement set. 

\begin{theorem}[Correlated agreement]\label{thm:ca}
Let $\cC,q,m,n,d,e$ be as defined as in Lemma~\ref{lem:d/3gap}. For $\bu_0,\bu_1\in \F_{q^m}^{n}$, if $\Pr_{z\in\F_{q^m}}[d(\bu_0+z\bu_1, \cC)\leq e]>q^{e+1}/q^m$, then
\[d\left( [\bu_0, \bu_1],\cC^2\right)\leq e.\]
\end{theorem}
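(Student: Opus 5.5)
Since the hypothesis gives more than $q^{e+1}/q^m$ good points, Lemma~\ref{lem:d/3gap} implies that every point $\bu_0+z\bu_1$ is within rank distance $e$ of $\cC$. The remaining task is to upgrade this pointwise closeness to a \emph{common} error space. The plan is to reuse the structure of the proof of Lemma~\ref{lem:d/3gap}. Pick two distinct values $z_1,z_2$, which exist because now every $z$ is good. Choose codewords $\bc_1,\bc_2$ with $\bu':=\bu_0+z_1\bu_1-\bc_1$ and $\bv':=\bu_0+z_2\bu_1-\bc_2$, so that $\rk_q(\bu'),\rk_q(\bv')\le e$. The goal is to show $\rk_q\begin{pmatrix}\bu'\\ \bv'\end{pmatrix}\le e$. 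Given this, the $\F_{q^m}$-linear change of coordinates
\[
\bu_1=\tfrac{1}{z_1-z_2}(\bu'-\bv'+\bc_1-\bc_2),\qquad \bu_0=\bu'+\bc_1-z_1\bu_1,
\]
expresses $[\bu_0-\bc_0',\bu_1-\bc_1']$, for suitable $\bc_0',\bc_1'\in\cC$, as an $\F_{q^m}$-combination of the rows $\bu',\bv'$. Its column space over $\F_q$ is then contained in the column space of $\begin{pmatrix}\bu'\\ \bv'\end{pmatrix}$, because $\F_{q^m}$-linear row operations preserve the $\F_q$-linear relations among columns. Hence $d([\bu_0,\bu_1],\cC^2)\le e$.

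To prove the rank bound, argue by contradiction and suppose $\rk_q\begin{pmatrix}\bu'\\ \bv'\end{pmatrix}\ge e+1$. For every $\alpha\ne -1$, the point $\bu'+\alpha\bv'$ is a nonzero scalar multiple of a point of the line $\ell_{\bu_0,\bu_1}$ minus a codeword, so $d(\bu'+\alpha\bv',\cC)\le e$ (for $\alpha=-1$, $\bu'-\bv'$ is a multiple of $\bu_1$ plus a codeword, so it is close as well). As in the lemma, the triangle inequality gives $\rk_q(\bc)\le 3e<d$ for the nearest codeword $\bc$, which forces $\bc=0$ and hence $\rk_q(\bu'+\alpha\bv')\le e$ for all $\alpha\in\F_{q^m}$. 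The counting argument from the lemma then shows that at most $q^{e+1}-1$ values $\alpha$ can satisfy this. That contradicts the fact that all $q^m$ values do, since $q^{e+1}<q^m$.

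The main subtlety is that the counting argument in the lemma covers all $\alpha$ uniformly: each good $\alpha$ is determined by a nonzero tuple $(\lambda_i)\in\F_q^{e+1}$ through $\alpha=-\sum\lambda_iu'_i/\sum\lambda_iv'_i$. This yields at most $q^{e+1}-1$ good $\alpha$, which is fewer than $q^m$. The other point to check carefully is the column-space containment in the final step, which I expect to be routine.
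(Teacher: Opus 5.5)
Your proof is correct, and it takes a more direct route than the paper's. You apply Lemma~\ref{lem:d/3gap} once, to conclude that every point of $\ell_{\bu_0,\bu_1}$ is $e$-close. You then rerun the lemma's final counting step on the two error vectors $\bu',\bv'$ to force $\rk_q[\bu',\bv']\le e$. Finally, an invertible $\F_{q^m}$-row transformation turns this into explicit codewords $\bc_0',\bc_1'$.

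The paper instead argues by contradiction. Assuming $d([\bu_0,\bu_1],\cC^2)>e$, it first shows by an extremal argument that some nonzero vector in the $\F_{q^m}$-span of $\bu_0,\bu_1$ is $e$-far:
\begin{itemize}
\item take $\hat\bu$ in the span maximizing $t=d(\hat\bu,\cC)\le e$;
\item find $j$ with $\rk_q[\chi,\chi_j]\ge t+1$;
\item use $3e<d$ to pin down the closest codeword of $\hat\bu+\alpha\bu_j$;
\item count to find an $\alpha$ with distance at least $t+1$.
\end{itemize}
It then applies the lemma twice: to $\ell_{\bu_0,\bu_1}$, forcing that far vector to be a multiple of $\bu_1$, and to the reversed line $\ell_{\bu_1,\bu_0}$, showing $\bu_1$ is close.

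Your version is shorter and constructive: any two decoded points of the line already yield the correlated pair. The paper's intermediate claim, that an interleaved-far word has some far vector in its span, never uses the line hypothesis. Its extremal proof therefore carries over verbatim to spans of $s$ vectors, i.e.\ to $\cC^s$. Your two-point argument is specific to lines.

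One small point: your aside that $\alpha=-1$ is also good presupposes that $\bu_1$ is $e$-close, which you do not justify. It does follow from the lemma applied to $\ell_{\bu_1,\bu_0}$, as in the paper. You do not need it, though, since the $q^m-1$ values $\alpha\neq-1$ already exceed the at most $q^{e+1}-1$ values the counting permits.
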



\begin{proof}
    Suppose that $d\left( [\bu_0, \bu_1],\cC^2\right)> e.$ Let $U=\{\lambda_0\bu_0+\lambda_1\bu_1\}$ be the $\F_{q^m}$-linear span of $\bu_0$ and $\bu_1$. We first {claim} that
    
    \textbf{Claim}: \textit{there exists at least one $0\neq\bu^*\in U$ such that $d(\bu^*,\cC)>e$}. 

We defer the proof of this claim at the end of this proof. Then according to Lemma~\ref{lem:d/3gap}, we have $\bu^*=\beta \bu_1$ for some $\beta\in\F_{q^m}^*$ since all points in $\ell_{\bu_0,\bu_1}$ are $e$-close to $\cC$. Since 
    \[\Pr_{\lambda\in\F_{q^m}}[d(\bu_1+\lambda\bu_0,\cC)\le e]=\frac{q^m-1}{q^m}\Pr_{\lambda\in\F^*_{q^m}}[d(\lambda^{-1}\bu_1+\bu_0,\cC)\le e]=\frac{q^m-1}{q^m}\ge q^{e+1}/q^{m},\]
by Lemma~\ref{lem:d/3gap}, we also have $e\ge d(\bu_1,\cC)=d(\bu^*,\cC)$, which is a contradiction.

\textit{The proof of claim.} Assume towards contradiction that $d(\bu,\cC)\leq e$ for all $\bu\in U$. Choose $\hat{\bu}\in U$ such that $t:=d(\hat{\bu},\cC)$ is maximal. Then $t\leq e$, and we can write 
\[
\hat{\bu} = \bc + \chi,\quad\text{where}\ c\in \cC,\ \text{and}\ \chi\in\F_{q^m}^n\ \text{with}\ \rk_q(\chi)=t.
\]
Similarly, for $j=0,1$, 
\[
\bu_j = \bc_j + \chi_j,\quad\text{where}\ \bc_j\in \cC\ \text{and}\ \chi_j\in\F_{q^m}^n\ \text{with}\ \rk_q(\chi_j)\leq e.
\]
Since $d([\bu_0,\bu_1],\cC^{2}) > e$, there exists $j$ such that the $\F_q$-row span of $\chi_j$ is not contained in the $\F_q$-row span of $\chi$ when we view $\chi$ and $\chi_j$ as matrices in $\F_q^{m\times n}$. Otherwise, 
$$d([\bu_0,\bu_1],\cC^2)\leq \rk_q\left(
   [\chi_0,\chi_1]\right) \leq \rk_q(\chi)\leq e.$$
Therefore, we have \[\rk_q\begin{pmatrix}
    \chi \\ \chi_j
\end{pmatrix} \geq \rk_q(\chi)+1=t+1.\]
Assume $\chi=(x_{0,1},x_{0,2},\dots,x_{0,n})$ and $\chi_j=(x_{j,1},x_{j,2},\dots,x_{j,n})$. Without loss of generality, we may assume the first $t+1$ columns of $\begin{pmatrix}
    \chi\\ \chi_j
\end{pmatrix}$ are $\F_q$-linearly independent. This implies that
$
\sum_{i=1}^{t+1} \lambda_i x_{0,i}$ and $ \sum_{i=1}^{t+1} \lambda_i x_{j,i}
$
cannot both be zero for $\lambda_1,\dots,\lambda_{t+1} \in \mathbb{F}_q$ not all zero.  

For any $\alpha \in \mathbb{F}_{q^m}$, define $\hat{\bv}(\alpha) := \hat{\bu} + \alpha \bu_j 
= (\bc + \alpha \bc_j) + (\chi + \alpha \chi_j)$. We first claim that $\bc + \alpha \bc_j \in \cC$ is the closest codeword in $\cC$ to $\widehat{\bv}(\alpha)$. 

Indeed, let $\bc' \in \cC$ be any codeword such that $\bc'$ is closest to $\widehat{\bv}(\alpha)$. By assumption $
d(\widehat{\bv}(\alpha), \cC) \leq e$, hence we have $d(\hat{\bv}(\alpha), \bc') \leq e$. On the other hand, by construction, $d(\hat{\bv}(\alpha), \bc + \alpha \bc_j)=\rk_q(\chi + \alpha \chi_j) \leq 2e$. Suppose $\bc + \alpha \bc_j\neq \bc'$. Then, by the triangle inequality, $
d(\bc + \alpha \bc_j, \bc') \leq 2e + e <d$, this contradicts the minimum distance of $\cC$. 

Next, we show that there exists some $\alpha \in \mathbb{F}_{q^m}$ such that $
\rk_q(\chi + \alpha \chi_j) \geq t+1$. Then, we obtain $d(\widehat{\bv}(\alpha), \cC) \geq t+1,$
which contradicts the maximality of $t$.

Define a ``bad set''
\[\mathcal B := \{\alpha\in\mathbb F_{q^m}:\ \rk_q(\chi+\alpha \chi_j)\le t\}.\]
For any $\alpha\in \mathcal B$, there exist $\lambda_1,\lambda_2,\dots,\lambda_{t+1}\in \F_q$ not all zero such that \[\sum_{i=1}^{t+1}\lambda_i(x_{0,i}+\alpha x_{j,i})=0,\ \text{i.e,}\ \sum_{i=1}^{t+1}\lambda_ix_{0,i}=-\alpha\sum_{i=1}^{t+1}\lambda_ix_{j,i}.\]
Then $\sum_{i=1}^{t+1}\lambda_ix_{j,i}\neq 0$, otherwise both sides vanish, a contradiction.
Thus,
\[\alpha=-\frac{\sum_{i=1}^{t+1}\lambda_ix_{0,i}}{\sum_{i=1}^{t+1}\lambda_ix_{j,i}}.\]
Since there are at most $q^{t+1}-1\leq q^{e+1}-1$ such choices of $\alpha$, we have $|\mathcal{B}|\le q^{e+1}-1<q^m$ by assumption. Thus, there exists some $\alpha\in\F_{q^m}\setminus \mathcal B$ such that $\rk_q(\chi+\alpha\chi_j)\ge t+1$. 

\end{proof}

\subsection{A $\frac{d-1}{2n}$-Proximity Gap for Gabidulin Codes}\label{sec:d/2-gap}
Similar to the proximity gap for Reed-Solomon codes, it is possible to prove a better proximity gap for Gabidulin codes by utilizing the algebraic decoding algorithm. In this subsection, we show that the proximity gaps for Gabidulin codes can attain the unique decoding radius. 

Let $\cC=\mathcal{G}(\mathcal{A},k)$ be the Gabidulin code in Definition~\ref{def:Gabidulin-codes}, where $\mathcal{A}=\{\alpha_1,\ldots,\alpha_n\}\subseteq \mathbb F_{q^m}$ is linearly independent over $\F_q$, and let $V:=\mathrm{span}_{\mathbb F_q}\{\alpha_1,\ldots,\alpha_n\}$. Recall that $\cC$ has minimum rank distance $d=n-k+1.$ 

For integers $r\ge 0$ and $e>0$ , define 

\[\begin{split}
\sigma_r:=\sum_{i=0}^{r}q^i=\frac{q^{r+1}-1}{q-1},\qquad\tau_e:=(\sigma_e-1)+\sigma_e\bigl((\sigma_e-1)\sigma_{k-1}+1\bigr).
\end{split}
\]


\begin{theorem}[Correlated agreement]\label{thm:gab-half-correlated}
Let $\cC=\mathcal{G}(\mathcal{A},k)\subseteq\mathbb F_{q^m}^n$ be a Gabidulin code. For $\bu_0,\bu_1\in\mathbb F_{q^m}^n$, let $\ell_{\bu_0,\bu_1}:=\{\bu_0+\lambda \bu_1:\lambda\in\mathbb F_{q^m}\}.$ Let $e$ be a positive integer with $1\le e<d/2$. If $\Pr_{\lambda\in\mathbb F_{q^m}}\bigl[d(\bu_0+\lambda \bu_1,\cC)\le e\bigr]>\epsilon_e:=\frac{\tau_e}{q^m},$ then
\[
d\left(
\begin{pmatrix}
\bu_0\\
\bu_1
\end{pmatrix},
\cC^2\right)\le e.
\]
Consequently, every point $x\in\ell_{\bu_0,\bu_1}$ satisfies $d(x,\cC)\le e$.
\end{theorem}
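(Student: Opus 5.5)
We follow the route sketched in the technical overview. Let $S:=\{z\in\F_{q^m}: d(\bu_0+z\bu_1,\cC)\le e\}$, so the hypothesis gives $|S|>\tau_e$. For each $z\in S$, Lemma~\ref{lem:bw-existence} gives a unique $P_z\in\mL(q,m,k)$ and a normalized annihilator $\Lambda_z=X+\sum_{i\ge1}\lambda_i X^{q^i}$ of the error space. Together with $\Omega_z=-\Lambda_z\circ P_z$, these solve the inhomogeneous system $M(z)U=-c(z)$ of \eqref{eq:bweq}.

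\textbf{Step 1: a global solution over $\F_{q^m}(Z)$.} We lift the pointwise systems to $M(Z)U=-c(Z)$. Let $r=\operatorname{rank}_{\F_{q^m}(Z)}M(Z)$. Each $(r+1)$-minor of $(M(Z)\mid c(Z))$ has $Z$-degree at most $\sigma_e$, since the columns have degrees $q,\dots,q^e$ and $1$. For $z\in S$ the specialized system is solvable; we also use an $r$-minor of $M(Z)$ that is nonzero at generic $z$. Together these force every such minor to vanish on more than $\sigma_e$ points, hence identically. So a rational solution exists. By Cramer's rule applied to a nonsingular $r\times r$ block, we obtain polynomial $a_i(Z),b_j(Z)$ with $\deg a_i\le\sigma_e-q^i$ (with $a_0$ being the common determinant, $\deg a_0\le\sigma_e-1$, $a_0\not\equiv0$) and $\deg b_j\le\sigma_e$. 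This yields
\[
A(u_{0,r}+Zu_{1,r},Z)+B(\alpha_r,Z)=0
\]
for all $r$. Specializing at $z\in S$ with $a_0(z)\ne0$, Lemma~\ref{lem:bw-existence} gives $A(\cdot,z)\circ P_z=-B(\cdot,z)$. Here we need a nonzero pair; $a_0(z)\neq 0$ guarantees this.

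\textbf{Step 2: reconstruct $P(X,Z)$ coefficientwise.} Let $S_0=\{z\in S:a_0(z)\ne0\}$, so $|S_0|>\tau_e-(\sigma_e-1)$. Comparing the coefficients of $X^{q^j}$ in $A\circ P_z=-B$ gives the triangular recurrence
\[
a_0p_j+\sum_{r\ge1}a_r p_{j-r}^{q^r}+b_j=0 .
\]
Solving it formally over $\F_{q^m}(Z)$ and clearing denominators by induction yields polynomials $\tilde p_j(Z)=a_0^{\sigma_{k-1}}p_j$. Their degree bound is tracked by induction and gives $\deg \tilde p_j\le(\sigma_e-1)\sigma_{k-1}+1$; this bookkeeping is routine. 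At each $z\in S_0$ they specialize to $a_0(z)^{\sigma_{k-1}}$ times the coefficients of $P_z$.

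Define
\[
H_x(Z)=\tilde P(x,Z)-a_0(Z)^{\sigma_{k-1}}\bigl(\widehat\bu_0(x)+Z\widehat\bu_1(x)\bigr)
\]
for $x\in V$, using the Moore interpolants of Lemma~\ref{lem:int}. Then $\deg H_x\le(\sigma_e-1)\sigma_{k-1}+1$.

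\textbf{Step 3: the subspace argument, which is the main obstacle.} It replaces Polishchuk--Spielman. For $z\in S_0$, the map $x\mapsto H_x(z)$ is $\F_q$-linear with kernel of dimension at least $n-e$, because $P_z$ agrees with $\widehat\bu_0+z\widehat\bu_1$ on a subspace of codimension at most $e$. Fix any $(e+1)$-dimensional $W\subseteq V$ and representatives $x_1,\dots,x_{\sigma_e}$ of its lines. The product $\prod_j H_{x_j}(Z)$ has degree at most $\sigma_e((\sigma_e-1)\sigma_{k-1}+1)<|S_0|$ and vanishes on $S_0$. Hence some $H_{x_j}\equiv0$.

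Thus $E^*=\{x:H_x\equiv0\}$ is an $\F_q$-subspace meeting every $(e+1)$-dimensional subspace nontrivially, so $\dim E^*\ge n-e\ge k$. Interpolate $P_0,P_1\in\mL(q,m,k)$ matching $\widehat\bu_0,\widehat\bu_1$ on $k$ independent points of $E^*$. Then $\tilde P-a_0^{\sigma_{k-1}}(P_0+ZP_1)$ has $q$-degree less than $k$ and vanishes on those $k$ points, so it is zero. The identity $H_x\equiv0$ on $E^*$ then reads
\[
a_0^{\sigma_{k-1}}\bigl(P_0(x)+ZP_1(x)\bigr)=a_0^{\sigma_{k-1}}\bigl(\widehat\bu_0(x)+Z\widehat\bu_1(x)\bigr).
\]
Cancelling $a_0^{\sigma_{k-1}}$ and comparing $Z$-coefficients gives $P_s=\widehat\bu_s$ on $E^*$.

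Finally set $\bc_s=P_s(\cA)$. The columns of $[\bu_0-\bc_0,\bu_1-\bc_1]$ are $(\widehat\bu_s-P_s)(\alpha_r)$, and the $\F_q$-linear map $x\mapsto((\widehat\bu_0-P_0)(x),(\widehat\bu_1-P_1)(x))$ has kernel containing $E^*$. Hence the rank is at most $n-\dim E^*\le e$. The last claim then follows since $\bc_0+\lambda\bc_1\in\cC$.
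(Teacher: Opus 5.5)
Your proposal is correct and follows the paper's proof essentially step for step. The shared steps are the global Berlekamp--Welch solution over $\mathbb F_{q^m}(Z)$ with degree bounds $\deg a_i\le\sigma_e-q^i$ and $\deg b_j\le\sigma_e$, the triangular recurrence cleared by powers of $a_0$, the product over the lines of an $(e+1)$-dimensional subspace giving $\dim E^*\ge n-e$, and Moore interpolation on $k$ independent points of $E^*$. The only differences are cosmetic: you solve the recurrence formally in $\mathbb F_{q^m}(Z)$ rather than interpolating $P(X,Z)$ from $\{P_z\}_{z\in S}$ and working modulo $S_0(Z)$, which yields the same polynomials $\widetilde p_j=a_0^{\sigma_{k-1}-\sigma_j}N_j$ and is slightly cleaner; and in Step~1 the $r$-minor you invoke is unnecessary, since $\operatorname{rank} M(z)\le r$ already holds for every $z$.
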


We split the proof into several lemmas. For every $\by=(y_1,\ldots,y_n)\in\mathbb F_{q^m}^n$, let $\widehat{\by}\in\mathcal L(q,m,n)$ be the unique $q$-linearized polynomial with $\widehat{\by}(\alpha_i)=y_i$ for all $i\in [n]$. Since $\alpha_1,\ldots,\alpha_n$ form an $\mathbb F_q$-basis of $V$, we have $\mathrm{rk}_q(\by)=\dim_{\mathbb F_q}\widehat{\by}(V).$ For $\by_0,\by_1\in\mathbb F_{q^m}^n$,
\[
\mathrm{rk}_q
\begin{pmatrix}
\by_0\\
\by_1
\end{pmatrix}
=
\dim_{\mathbb F_q}\left\{\bigl(\widehat{\by}_0(x),\widehat{\by}_1(x)\bigr):x\in V\right\}.
\]

Define $S:=\{z\in\mathbb F_{q^m}:d(\bu_0+z\bu_1,\cC)\le e\}$ and we have $|S|>\tau_e>\sigma_e$ given by the condition of our theorem. Our  step is to globalize the Berlekamp--Welch reconstruction systems for the received words $\bu_0+z\bu_1$, $z\in S$.

\begin{lemma}\label{lem:bw-interpolation}
If $|S|>\sigma_e$, then there exist polynomials
$
A(X,Z)=\sum_{i=0}^{e}a_i(Z)X^{q^i},
B(X,Z)=\sum_{j=0}^{k+e-1}b_j(Z)X^{q^j},
$
such that for every $r\in[n]$,
\[
A(u_{0,r}+Zu_{1,r},Z)+B(\alpha_r,Z)=0.
\]
Moreover,  
$
    a_0(Z)\ne0$, $\deg_Z a_i\le \sigma_e-q^i$, and $\deg_Z b_j(Z)\le \sigma_e$. 
\end{lemma}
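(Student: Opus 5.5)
We will treat the Berlekamp--Welch system \eqref{eq:bweq} as a single linear system over $\F_{q^m}(Z)$ and solve it by Cramer's rule.

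Let $M(Z)$ be the $n\times(k+2e)$ matrix of \eqref{eq:bweq}. Its first $e$ columns have entries $(u_{0,r}+Zu_{1,r})^{q^i}$, which are polynomials in $Z$ of degree at most $q^i$. Its last $k+e$ columns are constant in $Z$. Let $c(Z)$ be the right-hand side, whose entries have degree at most $1=q^0$. Let $\rho=\operatorname{rank}_{\F_{q^m}(Z)}M(Z)$.

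\textbf{Step 1: solvability over $\F_{q^m}(Z)$.} For each $z\in S$, Lemma~\ref{lem:bw-existence} says the specialized system $M(z)U=-c(z)$ is solvable. The normalization $\lambda_0=1$ is available because the subspace annihilator of the error space has no repeated roots, so its $X$-coefficient is nonzero. Hence $\operatorname{rank}(M(z)\mid c(z))=\operatorname{rank}M(z)\le\rho$ for every $z\in S$.

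Take any $(\rho+1)$-minor of $(M(Z)\mid c(Z))$. If it avoids the column $c$, it is a $(\rho+1)$-minor of $M(Z)$ and so vanishes identically. Otherwise it uses $c$ together with a subset $I\subseteq[e]$ of the $X^{q^i}$-columns plus some constant columns. Its $Z$-degree is then at most $1+\sum_{i\in I}q^i\le\sigma_e$. This minor vanishes at every $z\in S$, and $|S|>\sigma_e$, so it vanishes identically. Therefore $\operatorname{rank}(M(Z)\mid c(Z))=\rho$, and the system is solvable over $\F_{q^m}(Z)$.

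\textbf{Step 2: Cramer's rule and degree bounds.} Choose $\rho$ rows and $\rho$ columns $J$ of $M(Z)$ whose $\rho\times\rho$ minor $D(Z)$ is nonzero. Set the unknowns outside $J$ to zero. Since the ranks agree, the remaining rows are consequences of the chosen ones, so Cramer's rule gives a genuine solution $U_j=D_j(Z)/D(Z)$. Here $D_j$ is $D$ with column $j$ replaced by the corresponding entries of $-c(Z)$.

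Now define $a_0:=D$, $a_i:=D_{a_i}$ for $1\le i\le e$, and $b_j:=D_{b_j}$. These satisfy the homogeneous identity $a_0(Z)(u_{0,r}+Zu_{1,r})+\sum_{i\ge1}a_i(Z)(u_{0,r}+Zu_{1,r})^{q^i}+\sum_j b_j(Z)\alpha_r^{q^j}=0$ for all $r\in[n]$. This is exactly $A(u_{0,r}+Zu_{1,r},Z)+B(\alpha_r,Z)=0$.

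\textbf{Step 3: reading off the bounds.} First, $a_0=D\neq0$ by the choice of $J$. Next, $D$ uses the $X^{q^i}$-columns indexed by $J$, so $\deg D\le\sum_{i\in J\cap[e]}q^i\le\sigma_e-1$, which is the bound for $i=0$. For $a_i$ with $i\in J$, column $i$ (degree $q^i$) is swapped for $c$ (degree $1$), giving $\deg a_i\le \sigma_e-1-q^i+1=\sigma_e-q^i$. Finally, $b_j$ replaces a constant column by $c$, so $\deg b_j\le\sigma_e-1+1=\sigma_e$.

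The main obstacle is the bookkeeping in Step 1. We must make sure the minor degree bound really is $\sigma_e$, which is what makes $|S|>\sigma_e$ sufficient. The point is that $c$ contributes degree $1=q^0$, playing the role of the missing $X^{q^0}$ column, and each of the other columns contributes its own $q^i$ at most once.
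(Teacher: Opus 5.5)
Your proposal is correct and follows essentially the same argument as the paper. You use pointwise solvability on $S$ together with the observation that every $(\rho+1)$-minor of $(M(Z)\mid c(Z))$ has $Z$-degree at most $\sigma_e$, to get solvability over $\F_{q^m}(Z)$; then Cramer's rule on a nonzero pivot minor taken as $a_0$, with degree bounds read off column by column. Setting the non-pivot unknowns to zero already gives $a_i=0$ for non-pivot $i$, as in the paper.
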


\begin{proof}
We fix $z\in S$. By the definition of $S$, there exists a unique $P_z\in\mathcal L(q,m,k)$ such that $\mathrm{rk}_q\bigl(\bu_0+z\bu_1-P_z(\mathcal{A})\bigr)\le e$ where $P_z(\mathcal{A}):=\bigl(P_z(\alpha_1),\ldots,P_z(\alpha_n)\bigr).$  Define
\[
W_z:=\mathrm{span}_{\mathbb F_q}\left\{u_{0,r}+zu_{1,r}-P_z(\alpha_r):r\in[n]\right\}.
\]
It is clear $\dim_{\mathbb F_q}W_z\le e.$ Let $\Lambda_z(X)=\gamma\prod_{\alpha\in W_z}(X-\alpha)$ be the subspace annihilator polynomial of $W_z$. Since the coefficient of $X$ in $\Lambda_z(X)$ is nonzero, we can choose a nonzero $\gamma\in \F_{q^m}$ to set the coefficient of $X$ in $\Lambda_z(X)$ to be $1$. Define $\Omega_z(X):=-\Lambda_z(P_z(X)).$ Then $\deg_q\Omega_z\le k+e-1$ and
\begin{equation}\label{eq:BWid}
    \Lambda_z(u_{0,r}+zu_{1,r})+\Omega_z(\alpha_r)=0,\qquad r\in[n].
\end{equation}
Assume
\[
\Lambda_z(X)=X+\sum_{i=1}^{e}\lambda_i(z)X^{q^i},\qquad\Omega_z(X)=\sum_{j=0}^{k+e-1}\omega_j(z)X^{q^j},
\]
Then, by the equations \eqref{eq:BWid} we obtain the following linear system 
\begin{equation}\label{eq:linear-system}
\underbrace{\begin{pmatrix}
(u_{0,1}+zu_{1,1})^q
& \cdots
& (u_{0,1}+zu_{1,1})^{q^e}
& \alpha_1
& \cdots
& \alpha_1^{q^{k+e-1}}
\\
(u_{0,2}+zu_{1,2})^q
& \cdots
& (u_{0,2}+zu_{1,2})^{q^e}
& \alpha_2
& \cdots
& \alpha_2^{q^{k+e-1}}
\\
\vdots
& & \vdots
& \vdots
& & \vdots
\\
(u_{0,n}+zu_{1,n})^q
& \cdots
& (u_{0,n}+zu_{1,n})^{q^e}
& \alpha_n
& \cdots
& \alpha_n^{q^{k+e-1}}
\end{pmatrix}}_{M(z)}
\begin{pmatrix}
\lambda_1(z)\\
\vdots\\
\lambda_e(z)\\
\omega_0(z)\\
\vdots\\
\omega_{k+e-1}(z)
\end{pmatrix}
=
-
\underbrace{\begin{pmatrix}
u_{0,1}+zu_{1,1}\\
u_{0,2}+zu_{1,2}\\
\vdots\\
u_{0,n}+zu_{1,n}
\end{pmatrix}}_{:=c(z)}.
\end{equation}
Thus, for every $z\in S$, this linear system  
\begin{equation}\label{eq:linear-system2}
M(z)\begin{pmatrix}
a_1(z)\\
\vdots\\
a_e(z)\\
b_0(z)\\
\vdots\\
b_{k+e-1}(z)
\end{pmatrix}=-c(z)
\end{equation}
in the unknowns $(a_1(z),\dots,b_{k+e-1}(z))$ is solvable over $\mathbb F_{q^m}$. Let $ M(Z)$ and $ c(Z)$ denote, respectively, the coefficient matrix and the right-hand-side vector obtained from~\eqref{eq:linear-system2} by replacing $z$ with the formal variable $Z$. Next we show that $M(Z)(a_1(Z),\dots,b_{k+e-1}(Z))^T=-c(Z)$ is still solvable over the rational function field $\F_{q^m}(Z)$.

Let $r_0:=\mathrm{rank}_{\mathbb F_{q^m}(Z)}M(Z).$ If $r_0=n$, then the ${\mathbb F_{q^m}(Z)}$-rank of the augmented matrix satisfies
$$\mathrm{rank}_{\mathbb F_{q^m}(Z)}\left(M(Z)\mid c(Z)\right)=\mathrm{rank}_{\mathbb F_{q^m}(Z)}M(Z)=n,$$
so $M(Z)x=-c(Z)$ is solvable over $\mathbb F_{q^m}(Z)$. Suppose now that $r_0<n$. Since every $(r_0+1)\times(r_0+1)$ minor of $M(Z)$ vanishes identically, we have $\mathrm{rank}_{\mathbb F_{q^m}}M(z)\le r_0$ for every $z\in S$. Moreover, by~\eqref{eq:linear-system}, $M(z)x=-c(z)$ is solvable and $\mathrm{rank}_{\mathbb F_{q^m}}\left(M(z)\mid c(z)\right)=\mathrm{rank}_{\mathbb F_{q^m}}M(z)\le r_0.$ It follows that every $(r_0+1)\times(r_0+1)$ minor of $\left(M(Z)\mid c(Z)\right)$ vanishes at every point of $S$.

We now bound the $Z$-degree of such a minor. In the augmented matrix $\left(M(Z)\mid c(Z)\right)$, the column $c(Z)$ has $Z$-degree at most $1$, the columns $\left((u_{0,r}+Zu_{1,r})^{q^i}\right)_{r=1}^{n}\,(1\leq i\leq e)$ have $Z$-degree at most $q^i$, and the columns $\left(\alpha_r^{q^j}\right)_{r=1}^{n}\,(0\leq j\leq k+e-1)$ do not depend on $Z$. Hence every $(r_0+1)\times(r_0+1)$ minor of $\left(M(Z)\mid c(Z)\right)$ has $Z$-degree at most $1+\sum_{i=1}^{e}q^i=\sigma_e.$ Since $|S|>\sigma_e$, all these minors vanish identically. Therefore, $\mathrm{rank}_{\mathbb F_{q^m}(Z)}\left(M(Z)\mid c(Z)\right)=\mathrm{rank}_{\mathbb F_{q^m}(Z)}M(Z)$ and hence $M(Z)x=-c(Z)$ is solvable over $\mathbb F_{q^m}(Z)$.

Choose a nonzero $r_0\times r_0$ minor of $M(Z)$, and denote its determinant by $h(Z)$. If $r_0=0$, we use the convention $h(Z)=1$. Applying Cramer's rule to the corresponding pivot columns and setting all non-pivot variables equal to zero gives a rational solution with common denominator $h(Z)$. Multiplying this solution by $h(Z)$ and letting $a_0(Z):=h(Z)$ produces polynomials $a_i(Z)$ and $b_j(Z)$ satisfying
\[
A(u_{0,r}+Zu_{1,r},Z)+B(\alpha_r,Z)=0,\qquad r\in[n].
\]
In particular, $a_0(Z)\ne 0$.

It remains to verify the degree bounds. Since the only $Z$-dependent columns of $M(Z)$ have degrees $q,q^2,\ldots,q^e,$ we have $\deg_Z a_0\le\sum_{i=1}^{e}q^i=\sigma_e-1$. For $1\le i\le e$, if the column $\left((u_{0,r}+Zu_{1,r})^{q^i}\right)_{r=1}^{n}$ is a pivot column, then the Cramer numerator defining $a_i$ is obtained by replacing this column with $c(Z)$, and hence $\deg_Z a_i\le1+\sum_{\substack{1\le h\le e\\h\ne i}}q^h=\sigma_e-q^i.$ If this column is not a pivot column, then $a_i=0$, and the same bound is trivial. Similarly, for every $0\le j\le k+e-1$, $\deg_Z b_j\le1+\sum_{i=1}^{e}q^i=\sigma_e.$ This completes the proof. 

\end{proof}

Fix the pair $(A,B)$ obtained in Lemma~\ref{lem:bw-interpolation}. For every $z\in S$, write $A_z(X):=A(X,z)$ and $B_z(X):=B(X,z).$ We first record that the global Berlekamp--Welch solution specializes to a valid composition identity for every good point.

\begin{lemma}\label{lem:specialized-BW-root}
For every $z\in S$, let $P_z\in \mathcal{L}(q,m,k)$ be the unique polynomial satisfying $ \mathrm{rk}_q\bigl(\bu_0+z\bu_1-P_z(\mathcal{A})\bigr)\le e.$ Then $A_z\circ P_z+B_z=0.$
\end{lemma}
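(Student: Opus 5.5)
The plan is to reduce the statement to the uniqueness part of Lemma~\ref{lem:bw-existence}. We apply it to the received word $\by=\bu_0+z\bu_1$ with decoding radius $t=e$, where $2e<d$. Fix $z\in S$. Specializing the identity of Lemma~\ref{lem:bw-interpolation} at $Z=z$ gives
\[
A_z(u_{0,r}+zu_{1,r})+B_z(\alpha_r)=0,\qquad r\in[n],
\]
with $\deg_q A_z\le e$ and $\deg_q B_z\le k+e-1$. Hence $(A_z,-B_z)$ is a candidate solution of the reconstruction system \eqref{eq:bw-reconstruction} associated with $(\by,e;\cA,k)$. If this pair is nonzero, Lemma~\ref{lem:bw-existence} gives $-B_z=A_z\circ P_z$, which is the claim. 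The work is therefore to justify that lemma's uniqueness clause and to handle the degenerate case.

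To make the argument self-contained rather than relying on the cited statement, I would prove the composition identity directly.

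1. Set $R:=A_z\circ P_z+B_z$. It is a $q$-linearized polynomial of $q$-degree at most $k+e-1$, since $\deg_q(A_z\circ P_z)\le e+k-1$.
2. Write $\by=P_z(\cA)+\be$ with $\mathrm{rk}_q(\be)\le e$, and let $\widehat{\be}$ be the Moore interpolant of $\be$ given by Lemma~\ref{lem:int}.
3. Using $\F_q$-linearity of $A_z$, for every $r$ we get
\[
0=A_z(P_z(\alpha_r)+e_r)+B_z(\alpha_r)=R(\alpha_r)+A_z(e_r).
\]
Hence $R=-A_z\circ\widehat{\be}$ on all of $V$.
4. The image of $\widehat{\be}$ on $V$ has $\F_q$-dimension at most $e$, so $\ker(\widehat{\be}|_V)$ has dimension at least $n-e$. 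Therefore $R$ vanishes on a subspace of $V$ of dimension at least $n-e$.
5. A nonzero $q$-linearized polynomial of $q$-degree at most $k+e-1$ has a root space of $\F_q$-dimension at most $k+e-1$. Since $n-e>k+e-1$, which is equivalent to $2e<d=n-k+1$, this forces $R=0$.

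The step needing care is the degenerate case where $A_z=0$, i.e.\ where $z$ happens to be a root of all the $a_i(Z)$. The direct argument in steps 1--5 does not actually need nontriviality. If $A_z=0$, then $R=B_z$, and step 3 gives $B_z(\alpha_r)=0$ for all $r$, so $B_z$ vanishes on all of $V$. Since $\dim V=n>k+e-1$, step 5 gives $B_z=0$. Thus the identity $A_z\circ P_z+B_z=0$ holds trivially in that case too.

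So the proof covers every $z\in S$ and does not require removing roots of $a_0$. That removal only becomes relevant later, when $P_z$ is reconstructed from $A_z,B_z$ and $a_0(z)$ must be invertible. The only genuine obstacle is checking the root-space bound in step 5 against $2e<d$, and that is routine.
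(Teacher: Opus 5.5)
Your proposal is correct and matches the paper's proof. The paper also observes that $(A_z,-B_z)$ solves the Berlekamp--Welch reconstruction system for $\bu_0+z\bu_1$ with $t=e$, treats $(A_z,B_z)=(0,0)$ as trivial, and otherwise invokes the uniqueness clause of Lemma~\ref{lem:bw-existence}; your steps 1--5 simply prove that cited clause inline via the root-space bound $n-e>k+e-1$, which, as you note, also makes the case split unnecessary.
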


\begin{proof}
For the received word $\by=\bu_0+z\bu_1$, the pair $(A_z,-B_z)$ satisfies the Berlekamp--Welch reconstruction equations $A_z(y_r)=(-B_z)(\alpha_r)$ for $r\in [n]$. If $(A_z,-B_z)=(0,0)$, the desired identity is trivial. Otherwise, Lemma~\ref{lem:bw-existence} applied with $t=e$ gives $-B_z=A_z\circ P_z.$ Hence $A_z\circ P_z+B_z=0$. 

\end{proof}

For every $z\in S$, write $ P_z(X)=\sum_{j=0}^{k-1} c_j(z)X^{q^j}.$ For each $0\le j\le k-1$, let $p_j(Z)\in\mathbb F_{q^m}[Z]$ be the unique polynomial of degree less than $|S|$ satisfying $p_j(z)=c_j(z)$ for $z\in S$. Define $P(X,Z):=\sum_{j=0}^{k-1}p_j(Z)X^{q^j}$ and $S(Z):=\prod_{z\in S}(Z-z).$

By Lemma~\ref{lem:specialized-BW-root}, for every $z\in S$ we have $A(P(X,z),z)+B(X,z)=0.$ Thus, every coefficient, as a polynomial in $Z$, of $A(P(X,Z),Z)+B(X,Z)$ vanishes on all points of $S$. Hence we obtain the coefficientwise congruence
\begin{equation}\label{eq:mod-composition}
        A(P(X,Z),Z)+B(X,Z)\equiv 0\pmod{S(Z)}.
\end{equation}

We next remove the exceptional parameters at which $a_0$ vanishes. Define $S_0:=\{z\in S:a_0(z)\ne 0\}$ and $S_0(Z):=\prod_{z\in S_0}(Z-z).$ Since $\deg_Z a_0\le \sigma_e-1$, we have $|S_0|\ge |S|-(\sigma_e-1).$ Moreover, $a_0(Z)$ is invertible modulo $S_0(Z)$. The next lemma replaces the Reed--Solomon divisibility step. Instead of taking a quotient in a commutative bivariate polynomial ring, we reconstruct the coefficients of the quotient recursively in the residue ring $\mathbb F_{q^m}[Z]/(S_0(Z))$ and then choose a low-degree representative.

\begin{lemma}\label{lem:clear-BW-denominators}
There exists a $q$-linearized polynomial $\widetilde P(X,Z)= \sum_{j=0}^{k-1}\widetilde p_j(Z)X^{q^j} \in \mathbb F_{q^m}[X,Z]$ such that $\deg_Z \widetilde p_j\le(\sigma_e-1)\sigma_{k-1}+1$ for $0\leq j\leq k-1$ and $\widetilde P(X,Z)\equiv a_0(Z)^{\sigma_{k-1}}P(X,Z) \pmod{S_0(Z)}.$ Consequently, for every $z\in S_0$, $\widetilde P(X,z)=a_0(z)^{\sigma_{k-1}}P_z(X).$
\end{lemma}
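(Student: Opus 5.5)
We will construct the coefficients $\widetilde p_j$ recursively from the triangular structure of the congruence \eqref{eq:mod-composition}. Write $A(P(X,Z),Z)=\sum_{i=0}^e a_i(Z)\bigl(\sum_{j} p_j(Z)X^{q^j}\bigr)^{q^i}$ and expand using the Frobenius. The coefficient of $X^{q^s}$ for $0\le s\le k-1$ is $\sum_{r=0}^{\min\{e,s\}} a_r(Z)p_{s-r}(Z)^{q^r}$. Comparing coefficients in \eqref{eq:mod-composition} (valid mod $S(Z)$, hence mod $S_0(Z)$) gives
\[
a_0(Z)p_s(Z)\equiv -b_s(Z)-\sum_{r=1}^{\min\{e,s\}}a_r(Z)p_{s-r}(Z)^{q^r}\pmod{S_0(Z)} .
\]

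\textbf{Defining the lifted coefficients.} We define polynomials $\widetilde p_s(Z)$ by the same recurrence, but multiplied through by powers of $a_0$ so that no inverse is needed:
\[
\widetilde p_s:=-a_0^{\sigma_{k-1}-1}b_s-\sum_{r=1}^{\min\{e,s\}}a_r\,a_0^{\sigma_{k-1}-1-q^r\sigma_{k-1}'}\cdots
\]
This naive form is awkward because the powers do not match. Instead, set $N_s:=\sigma_s$ and define $\widehat p_s:=a_0^{\sigma_s}p_s$ in the residue ring. Multiplying the recurrence by $a_0^{\sigma_s-1}$ and using $\sigma_s-1=q\sigma_{s-1}\ge q^r\sigma_{s-r}$ for $r\ge1$, we get
\[
\widehat p_s\equiv -a_0^{\sigma_s-1}b_s-\sum_{r=1}^{\min\{e,s\}} a_r\,a_0^{\sigma_s-1-q^r\sigma_{s-r}}\,\widehat p_{s-r}^{\,q^r}.
\]
Here $\sigma_s-1-q^r\sigma_{s-r}=\sigma_{r-1}-1\ge 0$. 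Taking this as the definition of genuine polynomials $\widehat p_s\in\F_{q^m}[Z]$ (with $\widehat p_0:=-b_0$), induction gives $\widehat p_s\equiv a_0^{\sigma_s}p_s \pmod{S_0}$. We then put $\widetilde p_s:=a_0^{\sigma_{k-1}-\sigma_s}\widehat p_s$.

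\textbf{Degree bound.} This is the main obstacle. A direct induction on $\deg\widehat p_s$ using the recurrence is problematic because $\widehat p_{s-r}^{\,q^r}$ multiplies degrees by $q^r$. Only the bounds $\deg a_0\le\sigma_e-1$, $\deg a_r\le\sigma_e-q^r$ and $\deg b_s\le\sigma_e$ are available, so the target $(\sigma_e-1)\sigma_s+1$ for $\deg\widehat p_s$ must be checked directly. For the $b_s$ term the bound is $(\sigma_e-1)(\sigma_s-1)+\sigma_e=(\sigma_e-1)\sigma_s+1$, which is exact. For the $r$-th term, the bound is
\[
(\sigma_e-q^r)+(\sigma_e-1)(\sigma_{r-1}-1)+q^r\bigl((\sigma_e-1)\sigma_{s-r}+1\bigr).
\]
Using $\sigma_s=\sigma_{r-1}+q^r\sigma_{s-r}$, this equals $(\sigma_e-1)\sigma_s+1$. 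The induction therefore closes, which is why the hypothesis $\deg a_r\le\sigma_e-q^r$ from Lemma~\ref{lem:bw-interpolation} was recorded.

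The extra factor $a_0^{\sigma_{k-1}-\sigma_s}$ would spoil the uniform bound $(\sigma_e-1)\sigma_{k-1}+1$ only by at most $(\sigma_e-1)(\sigma_{k-1}-\sigma_s)$. That amount is exactly the slack, so $\deg\widetilde p_s\le(\sigma_e-1)\sigma_{k-1}+1$ for all $s$.

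\textbf{Conclusion.} By construction, $\widetilde p_s\equiv a_0^{\sigma_{k-1}}p_s\pmod{S_0(Z)}$, which gives the stated congruence. Evaluating at $z\in S_0$ and using $p_s(z)=c_s(z)$ yields $\widetilde P(X,z)=a_0(z)^{\sigma_{k-1}}P_z(X)$.
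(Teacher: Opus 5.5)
Your proposal is correct and follows the paper's proof essentially line for line. Your $\widehat p_s$ is the paper's $N_j$ (the same recurrence, with exponent $\sigma_{r-1}-1$ coming from $\sigma_s=\sigma_{r-1}+q^r\sigma_{s-r}$), and you use the same inductive bound $\deg\widehat p_s\le(\sigma_e-1)\sigma_s+1$ and the same final padding $\widetilde p_s=a_0^{\sigma_{k-1}-\sigma_s}\widehat p_s$; just delete the abandoned ``naive form'' and the stray ``set $N_s:=\sigma_s$'' when you write it up.
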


\begin{proof}
Reducing \eqref{eq:mod-composition} modulo $S_0(Z)$ and comparing the coefficient of $X^{q^j}$ for $0\le j\le k-1$, gives
\begin{equation}\label{eq:triangular-recurrence-mod}
    a_0(Z)p_j(Z)+\sum_{r=1}^{\min\{e,j\}}a_r(Z)p_{j-r}(Z)^{q^r} + b_j(Z)\equiv 0\pmod{S_0(Z)}.
\end{equation}
Since $a_0(Z)$ is a unit modulo $S_0(Z)$, this triangular recurrence determines the residue class of each $p_j(Z)$ recursively. We now clear denominators in a way that preserves explicit degree bounds.

Define $N_j(Z)\in\mathbb F_{q^m}[Z]$ by $N_j:=-b_j a_0^{\sigma_j-1} -\sum_{r=1}^{\min\{e,j\}} a_r a_0^{\sigma_{r-1}-1}N_{j-r}^{q^r}$ for $0\le j\le k-1$, where the empty sum is interpreted as zero. We claim that, for every $0\le j\le k-1$, 
$$ N_j(Z) \equiv a_0(Z)^{\sigma_j}p_j(Z)\pmod{S_0(Z)}.$$ The proof is by induction on $j$. For $j=0$, the recurrence \eqref{eq:triangular-recurrence-mod} gives $a_0p_0+b_0\equiv 0\pmod{S_0(Z)},$ so $N_0=-b_0\equiv a_0p_0=a_0^{\sigma_0}p_0\pmod{S_0(Z)}.$ Assume the claim holds for all smaller indices. Using $\sigma_{r-1}+q^r\sigma_{j-r}=\sigma_j,$ we obtain, modulo $S_0(Z)$,
\[
\begin{aligned}
        N_j &\equiv-b_ja_0^{\sigma_j-1}- \sum_{r=1}^{\min\{e,j\}} a_r a_0^{\sigma_{r-1}-1} \bigl(a_0^{\sigma_{j-r}}p_{j-r}\bigr)^{q^r}=-a_0^{\sigma_j-1}\left(b_j+\sum_{r=1}^{\min\{e,j\}}a_r p_{j-r}^{q^r} \right) \equiv a_0^{\sigma_j}p_j,
\end{aligned}
\]
where the last congruence follows from \eqref{eq:triangular-recurrence-mod}. This proves the claim.

Next, we prove the degree bound $\deg_Z N_j\le (\sigma_e-1)\sigma_j+1.$ For the first term in the definition of $N_j$, Lemma~\ref{lem:bw-interpolation} gives $\deg_Z\bigl(b_ja_0^{\sigma_j-1}\bigr)\le (\sigma_e-1)\sigma_j+1.$ Assume the bound is known for all smaller indices. For any $1\le r\le \min\{e,j\}$, the corresponding summand has degree at most $(\sigma_e-q^r)+(\sigma_{r-1}-1)(\sigma_e-1)+q^r\bigl((\sigma_e-1)\sigma_{j-r}+1\bigr)=(\sigma_e-1)\sigma_j+1.$ Thus, $\deg_Z N_j\le (\sigma_e-1)\sigma_j+1$ for all $0\le j\le k-1$.

Finally, define $\widetilde p_j(Z):= a_0(Z)^{\sigma_{k-1}-\sigma_j}N_j(Z)$ for $ 0\le j\le k-1,$ and $\widetilde P(X,Z) :=\sum_{j=0}^{k-1}\widetilde p_j(Z)X^{q^j}.$ Then $\deg_Z\widetilde p_j\le (\sigma_{k-1}-\sigma_j)(\sigma_e-1)+\bigl((\sigma_e-1)\sigma_j+1\bigr) = (\sigma_e-1)\sigma_{k-1}+1.$ Moreover, by $N_j(Z)\equiv a_0^{\sigma_j}(Z)p_j(Z) \pmod{S_0(Z)}$, we have $\widetilde p_j(Z) \equiv a_0(Z)^{\sigma_{k-1}}p_j(Z)  \pmod{S_0(Z)}.$ Hence, $\widetilde P(X,Z)\equiv a_0(Z)^{\sigma_{k-1}}P(X,Z) \pmod{S_0(Z)}.$ Evaluating at any $z\in S_0$ gives $\widetilde P(X,z)= a_0(z)^{\sigma_{k-1}}P(X,z)  = a_0(z)^{\sigma_{k-1}}P_z(X).$

\end{proof}

The preceding lemma packages the pointwise decoded polynomials $\{P_z\}_{z\in S_0}$ into a single bounded-degree polynomial family $\widetilde P(X,Z)$. The following coefficient reconstruction is the Gabidulin substitute for the Reed--Solomon divisibility step. For each fixed $z\in S_0$, the condition $\mathrm{rk}_q(P_z(\mathcal{A})-\bu_0-z\bu_1)\le e$ means that the $\mathbb F_q$-linear map $x\mapsto P_z(x)-\widehat{\bu}_0(x)-z\widehat{\bu}_1(x)$ has kernel of dimension at least $n-e$. This subspace may depend on $z$. The next lemma shows that, if $S_0$ is large enough, there is a common $(n-e)$-dimensional subspace on which the agreement holds identically as a polynomial in $Z$.

\begin{lemma}\label{lem:common-kernel}
Let $\widetilde P(X,Z)$ be the $q$-linearized polynomial obtained in Lemma~\ref{lem:clear-BW-denominators}. If $|S_0|>\sigma_e((\sigma_e-1)\sigma_{k-1}+1),$ then there exists an $\mathbb F_q$-subspace $E\subseteq V$ such that $\dim_{\mathbb F_q}E=n-e$ and for every $x\in E$,
\begin{equation}\label{eq:P(x,Z)}
\widetilde P(x,Z)=a_0(Z)^{\sigma_{k-1}}\bigl(\widehat{\bu}_0(x)+Z\widehat{\bu}_1(x)\bigr).
\end{equation}
\end{lemma}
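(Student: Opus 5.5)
The plan follows the sketch in the technical overview. For each $x\in V$ define
\[
H_x(Z):=\widetilde P(x,Z)-a_0(Z)^{\sigma_{k-1}}\bigl(\widehat{\bu}_0(x)+Z\widehat{\bu}_1(x)\bigr)\in\mathbb F_{q^m}[Z].
\]
The proof then proceeds in three steps.

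\textbf{Step 1: basic properties of $H_x$.}

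First, $x\mapsto H_x(Z)$ is $\mathbb F_q$-linear, since $\widetilde P(\cdot,Z)$, $\widehat{\bu}_0$ and $\widehat{\bu}_1$ are $q$-linearized polynomials with coefficients in $\mathbb F_{q^m}[Z]$.

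Second, $\deg_Z H_x\le D:=(\sigma_e-1)\sigma_{k-1}+1$. Lemma~\ref{lem:clear-BW-denominators} gives $\deg_Z\widetilde P(x,Z)\le D$. The other term has degree at most $\sigma_{k-1}\deg a_0+1\le (\sigma_e-1)\sigma_{k-1}+1=D$, using $\deg_Z a_0\le\sigma_e-1$ from Lemma~\ref{lem:bw-interpolation}.

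Third, fix $z\in S_0$. Lemma~\ref{lem:clear-BW-denominators} gives $H_x(z)=a_0(z)^{\sigma_{k-1}}\bigl(P_z(x)-\widehat{\bu}_0(x)-z\widehat{\bu}_1(x)\bigr)$, and $a_0(z)\neq0$. By the definition of $P_z$, the map $x\mapsto P_z(x)-\widehat{\bu}_0(x)-z\widehat{\bu}_1(x)$ on $V$ has image of $\mathbb F_q$-dimension at most $e$; here we use $\mathrm{rk}_q(\by)=\dim\widehat{\by}(V)$ and linearity. So the kernel $K_z:=\{x\in V:H_x(z)=0\}$ is an $\mathbb F_q$-subspace of dimension at least $n-e$.

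\textbf{Step 2: the common kernel meets every $(e+1)$-dimensional subspace.}

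Let $E^*:=\{x\in V: H_x(Z)\equiv 0\}$. By linearity, $E^*$ is an $\mathbb F_q$-subspace. Fix any $(e+1)$-dimensional subspace $W\subseteq V$ and choose representatives $x_1,\dots,x_{\sigma_e}$ of its one-dimensional subspaces.

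For each $z\in S_0$, the dimension count gives $\dim(K_z\cap W)\ge (n-e)+(e+1)-n=1$. So some nonzero multiple of some $x_j$ lies in $K_z$, and hence $H_{x_j}(z)=0$ by linearity. Therefore $F(Z):=\prod_{j=1}^{\sigma_e}H_{x_j}(Z)$ vanishes on all of $S_0$.

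Now $\deg F\le \sigma_e D<|S_0|$, so $F\equiv0$. Since $\mathbb F_{q^m}[Z]$ is an integral domain, some factor $H_{x_j}$ vanishes identically. Thus $E^*\cap W\neq\{0\}$.

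\textbf{Step 3: the dimension bound.}

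Suppose $\dim E^*<n-e$. Then $E^*$ has a complement in $V$ of dimension greater than $e$. Any $(e+1)$-dimensional subspace $W$ inside that complement meets $E^*$ trivially, contradicting Step 2. Hence $\dim E^*\ge n-e$. Taking any $(n-e)$-dimensional subspace $E\subseteq E^*$ gives~\eqref{eq:P(x,Z)} for every $x\in E$.

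\textbf{Main obstacle.} The delicate point is the degree bookkeeping in Step 1: making sure the degree of $H_x$, including the $Z\,a_0^{\sigma_{k-1}}$ term, stays within $D$. This is exactly what makes the root-counting threshold in the hypothesis on $|S_0|$ sufficient. The rest is linear algebra over $\mathbb F_q$.
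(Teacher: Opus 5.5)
Your proposal is correct and follows the paper's proof essentially step for step. It uses the same polynomials $H_x(Z)$ with the bound $\deg_Z H_x\le(\sigma_e-1)\sigma_{k-1}+1$, and the same product over the $\sigma_e$ one-dimensional subspaces of an arbitrary $(e+1)$-dimensional $W$, combined with root counting on $S_0$ and the integral-domain property. It also closes with the same complement argument forcing $\dim_{\mathbb F_q}E_*\ge n-e$.
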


\begin{proof}
For every $x\in V$, define
\[
H_x(Z):=\widetilde P(x,Z)-a_0(Z)^{\sigma_{k-1}}\bigl(\widehat{\bu}_0(x)+Z\widehat{\bu}_1(x)\bigr).
\]
By Lemma~\ref{lem:clear-BW-denominators} and the bound $\deg_Za_0\le\sigma_e-1,$ we have $\deg_ZH_x\le (\sigma_e-1)\sigma_{k-1}+1.$

For every $z\in S_0$, Lemma~\ref{lem:clear-BW-denominators} gives $H_x(z)=a_0(z)^{\sigma_{k-1}} \bigl(P_z(x)-\widehat{\bu}_0(x)-z\widehat{\bu}_1(x)\bigr).$ Since the evaluation vector $P_z(\mathcal{A})-\bu_0-z\bu_1$ has rank at most $e$, the map $\phi_z:x\mapsto P_z(x)-\widehat{\bu}_0(x)-z\widehat{\bu}_1(x)$ has kernel dimension at least $n-e$.

We now show that every $(e+1)$-dimensional $\mathbb F_q$-subspace $W\subseteq V$ contains a nonzero element $x$ such that $H_x(Z)\equiv0.$ Fix such a subspace $W$. For each $z\in S_0$, $\dim_{\mathbb F_q}W+\dim_{\mathbb F_q}\ker\varphi_z\ge(e+1)+(n-e)>n.$ Thus, $W\cap\ker\varphi_z\ne\{0\}.$ Hence, for every $z\in S_0$, there exists a nonzero $x\in W$ such that $H_x(z)=0.$

Let $R_W$ be a set of representatives of the one-dimensional $\mathbb F_q$-subspaces of $W$. Then $|R_W|=\frac{q^{e+1}-1}{q-1}=\sigma_e.$ Since the map $x\longmapsto H_x(Z)$ is $\mathbb F_q$-linear, replacing a nonzero element of $W$ by another nonzero scalar multiple does not affect whether its value at $z$ is zero. Hence, for every $z\in S_0$, at least one factor in $\prod_{x\in R_W}H_x(Z)$ vanishes at $Z=z$. Therefore, $\prod_{x\in R_W}H_x(z)=0$ for $z\in S_0$. The degree of this product is at most $\sigma_e((\sigma_e-1)\sigma_{k-1}+1)<|S_0|.$ It follows that $\prod_{x\in R_W}H_x(Z)$ is the zero polynomial. Since $\mathbb F_{q^m}[Z]$ is an integral domain, there exists some nonzero $x\in W$ such that $H_x(Z)\equiv0.$

Define $E_*:=\{x\in V:H_x(Z)\equiv0\}.$ Since $x\mapsto H_x(Z)$ is $\mathbb F_q$-linear, $E_*$ is an $\mathbb F_q$-subspace of $V$. We have proved that every $(e+1)$-dimensional subspace $W\subseteq V$ intersects $E_*$ nontrivially. Suppose, towards a contradiction, that $\dim_{\mathbb F_q}E_*\le n-e-1.$ Then $\dim_{\mathbb F_q}E_*+(e+1)\le n,$ so there exists an $(e+1)$-dimensional subspace $W\subseteq V$ such that $W\cap E_*=\{0\},$ contradicting the preceding conclusion. Therefore, $\dim_{\mathbb F_q}E_*\ge n-e.$

Choose any $(n-e)$-dimensional subspace $E\subseteq E_*.$ Then $H_x(Z)\equiv0$ for every $x\in E$, which is exactly
\[
\widetilde P(x,Z)=a_0(Z)^{\sigma_{k-1}}\bigl(\widehat{\bu}_0(x)+Z\widehat{\bu}_1(x)\bigr),\qquad x\in E \ . 
\]
\end{proof}

\begin{proof}[The proof of Theorem~\ref{thm:gab-half-correlated}]
By the assumption of the theorem,
\[
|S|=q^m\Pr_{\lambda\in\mathbb F_{q^m}}\bigl[d(\bu_0+\lambda \bu_1,\cC)\le e\bigr]>\tau_e.
\]
Recall that $\tau_e=(\sigma_e-1)+\sigma_e((\sigma_e-1)\sigma_{k-1}+1).$ Since $|S_0|\ge |S|-(\sigma_e-1),$ we obtain $|S_0|>\sigma_e((\sigma_e-1)\sigma_{k-1}+1).$ Therefore, Lemma~\ref{lem:common-kernel} gives an $\mathbb F_q$-subspace $E\subseteq V$ such that $\dim_{\mathbb F_q}E=n-e$ and $\widetilde P(x,Z)=a_0(Z)^{\sigma_{k-1}}\bigl(\widehat{\bu}_0(x)+Z\widehat{\bu}_1(x)\bigr)$ for all $x\in E$.

Since $k+2e\le n$, we have $n-e\ge k+e\ge k.$ Thus, we can choose $\mathbb F_q$-linearly independent elements $x_1,\ldots,x_k\in E.$ By Moore interpolation, there exist unique polynomials $P_0,P_1\in\mathcal L(q,m,k)$ such that
\[
P_s(x_i)=\widehat{\bu}_s(x_i), \qquad i\in[k],\ s\in\{0,1\}.
\]

We claim that
\begin{equation}\label{eq:P(X,Z)}
\widetilde P(X,Z)=a_0(Z)^{\sigma_{k-1}}\bigl(P_0(X)+ZP_1(X)\bigr).
\end{equation}
Indeed, both sides are $q$-linearized polynomials in $X$ of $q$-degree less than $k$ over the field $\mathbb F_{q^m}(Z)$. Moreover, by~\eqref{eq:P(x,Z)} and the choice of $P_0,P_1$, they agree at the $k$ linearly independent points $x_1,\ldots,x_k.$ Hence, they must be identical.

Now let $x\in E$. Combining~\eqref{eq:P(x,Z)} and~\eqref{eq:P(X,Z)}, we obtain
\[
a_0(Z)^{\sigma_{k-1}}\bigl(P_0(x)+ZP_1(x)\bigr)=a_0(Z)^{\sigma_{k-1}}\bigl(\widehat{\bu}_0(x)+Z\widehat{\bu}_1(x)\bigr).
\]
Since $a_0(Z)\ne0$, we may cancel $a_0(Z)^{\sigma_{k-1}}$. Comparing the coefficients of $1$ and $Z$, we obtain $P_0(x)=\widehat{\bu}_0(x)$ and $P_1(x)=\widehat{\bu}_1(x)$ for all $x\in E$.

Let $\mathbf{c}_s:=P_s(\mathcal{A})=(P_s(\alpha_1),\ldots,P_s(\alpha_n))\in \cC$ with $s\in \{0,1\}$. Thus, the $\mathbb F_q$-linear map
\[
V\longrightarrow \mathbb F_{q^m}^2,
\qquad
x\longmapsto
\bigl(\widehat{\bu}_0(x)-P_0(x),\widehat{\bu}_1(x)-P_1(x)\bigr)
\]
vanishes on $E$. Hence its rank is at most $n-\dim_{\mathbb F_q}E=e.$ Since $\alpha_1,\ldots,\alpha_n$ form an $\mathbb F_q$-basis of $V$, this gives
\[
d\left(
\begin{pmatrix}
\bu_0\\
\bu_1
\end{pmatrix},
\cC^2
\right)
\leq e.
\]
This proves the correlated-agreement assertion. It remains to verify the final consequence.

Finally, for every $\lambda\in\mathbb F_{q^m}$, we have $\bc_0+\lambda \bc_1\in \cC$ and
\[
    \mathrm{rk}_q\bigl(\bu_0+\lambda \bu_1-(\bc_0+\lambda \bc_1)\bigr)\le\mathrm{rk}_q
        \begin{pmatrix}
        \bu_0-\bc_0\\
        \bu_1-\bc_1
        \end{pmatrix}
        \le e.
\]
Thus $d(\bu_0+\lambda \bu_1,\cC)\le e$ for all $\lambda$.

\end{proof}

\begin{remark} 
Recall that $\tau_e=(\sigma_e-1)+\sigma_e\bigl((\sigma_e-1)\sigma_{k-1}+1\bigr)$. Since $\sigma_i\le \frac{q}{q-1} q^i$, we obtain $\tau_e\le (\frac{q}{q-1})^3q^{k+2e-1}+2(\frac{q}{q-1}) q^e.$ As $k+2e-1\le n-1$, 
\[
\begin{aligned}
\tau_e\le(\frac{q}{q-1})^3\cdot q^{n-1}+2(\frac{q}{q-1})\cdot q^{n-2}=\left[\left(\frac{q}{q-1}\right)^3+\frac{2}{q-1}\right]q^{n-1}\le10q^{n-1}.
\end{aligned}
\]
Consequently, the hypothesis of Theorem~\ref{thm:gab-half-correlated} is implied by the more explicit condition $|S|>10q^{n-1}$.
\end{remark}

\subsection{Counterexamples at the Unique-Decoding Boundary}\label{sec:limitations-radius}
Theorem~\ref{thm:gab-half-correlated} establishes a correlated agreement for Gabidulin codes for every integer radius $e<d/2$. We now show that the strict inequality is necessary: at the boundary $e=d/2$, no proximity-gap statement with threshold bounded away from one can hold in general. More precisely, there exists an infinite family of Gabidulin codes of constant rate $\frac{1}{3}$ such that there exists an affine line $\bu+z\bv$ where $1-o(1)$ fraction of the points are  $\frac{d}{2}$-close to the Gabidulin code but $\bv$ is at least $\frac{3d}{4}$ away from this code.

We first recall the character-sum estimate used in the construction. This is a standard form of Weil's bound for multiplicative character sums.

\begin{lemma}[{\cite[Theorem~5.41]{LN97}}]\label{lem:weil-bound}
Let $\mathbb F$ be a finite field with $|\F|=M$, and let $\psi: \F^*\to \CC^*$ be a nontrivial multiplicative character of order $\ell>1$. Extend $\psi$ to $\mathbb F$ by setting $\psi(0)=0$. Let $f\in \mathbb F[X]$ be nonconstant, and suppose that $f$ is not of the form $c g(X)^\ell$ for any $c\in \mathbb F^*$ and $g(X)\in \mathbb F[X]$. Let $s$ denote the number of distinct roots of $f$ in its splitting field. Then
\[
\left|\sum_{x\in \mathbb F}\psi(f(x))\right| \le (s-1)\sqrt{M}.
\]
\end{lemma}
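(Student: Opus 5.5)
This is the Weil bound for multiplicative character sums, \cite[Theorem~5.41]{LN97}, which we would cite. For a self-contained argument, the plan is the standard route through the zeta function of a Kummer cover of the affine line.

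\emph{Reduction to a squarefree-type polynomial.} Factor $f=c\prod_{j=1}^{s}(X-\theta_j)^{e_j}$ over the splitting field. Since $\psi$ has order $\ell$, we may reduce each exponent modulo $\ell$. This changes the sum by nothing except at roots of $f$, where $\psi$ vanishes anyway; a root lying in $\mathbb F$ contributes $0$ both before and after. So we may assume $0\le e_j<\ell$. The hypothesis that $f$ is not $c\,g^\ell$ guarantees that some $e_j\not\equiv0 \pmod{\ell}$.

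\emph{$L$-function.} Define the $L$-function
\[
L(T)=\exp\Bigl(\sum_{\nu\ge1}\frac{S_\nu}{\nu}T^\nu\Bigr),\qquad S_\nu=\sum_{x\in\mathbb F_{M^\nu}}\psi(\mathrm N_{\mathbb F_{M^\nu}/\mathbb F}(f(x))).
\]
Rewriting it as a sum over monic polynomials $h$, the term is $\psi$ applied to the resultant of $h$ and $f$, which is multiplicative in $h$. This shows $L(T)$ is a polynomial in $T$ of degree at most $s-1$: grouping $h$ by degree, the coefficient sums vanish for $\deg h\ge s$ by orthogonality over residues modulo $\mathrm{rad}(f)$. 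The sum we want is then $S_1=-\sum_i\omega_i$, where $L(T)=\prod_{i=1}^{s-1}(1-\omega_iT)$.

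\emph{Riemann hypothesis.} The main obstacle is showing $|\omega_i|\le\sqrt M$. We would derive it from the Riemann hypothesis for the curve $Y^\ell=f(X)$, after normalizing to a smooth projective model (Weil's theorem). Concretely, the zeta function of that curve factors into the $L$-functions of the characters $\psi^j$, $j=0,\dots,\ell-1$. Hence every reciprocal root of $L(T)$ is either a reciprocal root of the numerator of the zeta function, of absolute value $\sqrt M$, or a trivial factor of absolute value $1$ coming from ramification at infinity. Alternatively, one can avoid geometry with Stepanov's method: bound $|S_\nu|\le C\,M^{\nu/2}$ for all large even $\nu$, which forces $|\omega_i|\le\sqrt M$. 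Either way, the triangle inequality then gives $|S_1|\le(s-1)\sqrt M$.

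Since the paper only uses the statement as a black box, we would simply cite it.
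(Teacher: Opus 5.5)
Citing \cite[Theorem~5.41]{LN97} is exactly what the paper does. It gives no proof of this lemma and uses it only as a black box, applied to $R(X)=(X^{Q^2}-X)/(X^Q-X)$, which has simple roots. So your proposal matches the paper.

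Your optional self-contained outline follows the standard $L$-function route, but three points would need repair if you wrote it out.

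First, reducing exponents modulo $\ell$ does change the sum. A root $\theta\in\mathbb F$ whose multiplicity is divisible by $\ell$ stops being a root, so its term goes from $0$ to a value of modulus $1$. Your claim that such roots contribute $0$ before and after is therefore false. The step is still harmless: if $s'\le s$ roots survive, the altered terms cost at most $s-s'$, and $(s'-1)\sqrt M+(s-s')\le (s-1)\sqrt M$.

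Second, the hypothesis $f\ne c\,g^\ell$ does not make $Y^\ell=f(X)$ absolutely irreducible when $\ell$ is composite; for example, $\ell=4$ and $f=X^2(X-1)^2$. Before invoking Weil, set $D=\gcd(\ell,e_1,\dots,e_s)$, using the multiplicities in your factorization of $f$; the hypothesis gives $D<\ell$. Write $f=c\,h^D$ with $h\in\mathbb F[X]$, which is possible because conjugate roots have equal multiplicity. Then use $\psi(f(x))=\psi(c)\,\psi^D(h(x))$. Now $\psi^D$ has order $\ell/D>1$, the curve $Y^{\ell/D}=h(X)$ is absolutely irreducible, and $h$ has the same $s$ distinct roots.

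Third, the extra reciprocal root of modulus $1$ arises when $\infty$ is \emph{unramified} for $\psi$, i.e.\ when $\ell\mid\deg f$. It does not come from ramification at infinity.
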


Throughout this subsection, fix an integer $g\ge 1$, and put $Q=q^g,\,F=\mathbb F_{Q^6}=\mathbb F_{q^{6g}}.$ Thus, $F$ is a $6$-dimensional vector space over $\mathbb F_Q$. All asymptotic notation in this subsection is with respect to $Q\to\infty$. For an $\F_Q$-subspace $U\subseteq F$, we write $A_U(X):=\prod_{u\in U}(X-u)$ for its subspace-annihilator polynomial. If $\dim_{\F_Q}U=2$, then $A_U(X)=X^{Q^2}+\Lambda(U)X^Q+\mu X.$ If $\dim_{\F_Q}W=4$, then $A_W(X)=X^{Q^4}+\Lambda(W)X^{Q^3}+\text{lower $Q$-degree terms}.$ For any $\mathbb F_Q$-subspace $U\subseteq F$, we write $\Lambda(U)$ for the coefficient of the second highest $Q$-degree term in its subspace-annihilator polynomial $A_U$.

\begin{lemma}\label{lem:two-dim-diversity}
With the notation introduced above, define $\mathcal L_2:=\{\Lambda(U): U\subseteq F,\ \dim_{\F_Q}U=2\}$. Then $|\mathcal L_2|\ge (1-O(Q^{-1}))Q^6.$
\end{lemma}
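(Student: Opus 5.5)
The plan is to parametrize two-dimensional $\F_Q$-subspaces explicitly and reduce the count of $\mathcal L_2$ to a question about norms in $\F_Q^*$. That question is then answered with Lemma~\ref{lem:weil-bound} and a second-moment argument.

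\textbf{Step 1: a formula for $\Lambda$.} Every two-dimensional $U$ has the form $U_{x,r}=\mathrm{span}_{\F_Q}\{x,xr\}$ with $x\in F^*$ and $r\in F\setminus\F_Q$. Substituting the roots $x$ and $xr$ into $A_U(X)=X^{Q^2}+\Lambda X^Q+\mu X$ gives two linear equations in $\Lambda$ and $\mu$. Eliminating $\mu$ yields
\[
\Lambda(U_{x,r})=-x^{Q^2-Q}R(r),\qquad R(r)=\frac{r^{Q^2}-r}{r^Q-r}.
\]
The map $x\mapsto x^{Q^2-Q}=(x^{Q-1})^Q$ sends $F^*$ onto the kernel $K$ of $\mathrm N=\mathrm N_{F/\F_Q}$, and $|K|=(Q^6-1)/(Q-1)$. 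Hence whenever $R(r)\neq0$, the set $\{\Lambda(U_{x,r}):x\in F^*\}$ is the full norm fiber $\{a\in F^*:\mathrm N(a)=\mathrm N(-R(r))\}$. Let $\mathcal N$ be the set of values $\mathrm N(-R(r))$ with $r\in F\setminus\F_{Q^2}$; these are exactly the $r$ with $R(r)\neq 0$. Then
\[
|\mathcal L_2|\ge |\mathcal N|\cdot\frac{Q^6-1}{Q-1}.
\]
So it suffices to show that $|\F_Q^*\setminus\mathcal N|=O(1)$.

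\textbf{Step 2: counting via characters.} For $a\in\F_Q^*$, let $T(a)$ be the number of $r\in F\setminus\F_{Q^2}$ with $\mathrm N(-R(r))=a$. By orthogonality,
\[
T(a)=\frac{1}{Q-1}\sum_{\chi}\overline{\chi}(a)\,\Sigma_\chi,\qquad \Sigma_\chi=\sum_{r\in F\setminus \F_{Q^2}}\chi\bigl(\mathrm N(-R(r))\bigr),
\]
where $\chi$ runs over the characters of $\F_Q^*$. The trivial character contributes the mean $\mu_0=(Q^6-Q^2)/(Q-1)=\Theta(Q^5)$.

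For $\chi\neq 1$, put $\psi=\chi\circ\mathrm N$. This is a nontrivial character of $F^*$, because the norm is surjective, and its order $\ell$ divides $Q-1$. Write $R=f/h$ with $f=X^{Q^2}-X$ and $h=X^Q-X$. Off the roots of $h$ we have $\psi(R(r))=\psi(f(r)h(r)^{\ell-1})$, so up to an $O(Q^2)$ error from the excluded points, $\Sigma_\chi$ is a character sum of the polynomial $G=f\,h^{\ell-1}$.

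To apply Lemma~\ref{lem:weil-bound} to $G$, factor it as
\[
G=\prod_{\beta\in\F_Q}(X-\beta)^{\ell}\prod_{\beta\in\F_{Q^2}\setminus\F_Q}(X-\beta).
\]
The second product is squarefree and nonconstant, so $G$ is not of the form $c\,g^\ell$. It has $s=Q^2$ distinct roots. The Weil bound therefore gives $|\Sigma_\chi|\le Q^2\cdot Q^3+O(Q^2)=O(Q^5)$.

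\textbf{Step 3: second moment.} Parseval gives
\[
\sum_{a\in\F_Q^*}(T(a)-\mu_0)^2=\frac{1}{Q-1}\sum_{\chi\ne1}|\Sigma_\chi|^2=O(Q^{10}).
\]
Each $a\notin\mathcal N$ has $T(a)=0$ and so contributes $\mu_0^2=\Theta(Q^{10})$ to the left side. Hence $|\F_Q^*\setminus\mathcal N|=O(1)$, which gives $|\mathcal N|\ge(1-O(Q^{-1}))(Q-1)$. Plugging into Step 1,
\[
|\mathcal L_2|\ge (1-O(Q^{-1}))(Q^6-1)=(1-O(Q^{-1}))Q^6.
\]

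\textbf{Main obstacle.} The delicate point is verifying the hypothesis of Lemma~\ref{lem:weil-bound}. This requires the explicit factorization of $R$ into distinct linear factors over $\F_{Q^2}$, and care with the points where $h$ vanishes or $R$ vanishes. The Weil bound then has to be strong enough: $Q^5$ against a mean of $Q^5$, with the $1/(Q-1)$ gained in the second moment, is exactly what leaves only $O(1)$ missing norm classes.
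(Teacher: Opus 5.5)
Your proposal is correct and follows essentially the same route as the paper: the same parametrization $U_{x,r}$ with the norm-coset reduction, Weil's bound applied to the character $\chi\circ\mathrm N$, and a second-moment count based on orthogonality of the characters of $\mathbb F_Q^*$. The only differences are cosmetic. You apply Weil to $f h^{\ell-1}$, whereas the paper applies it directly to $R(X)$, which is already a polynomial with $Q^2-Q$ simple roots since $h\mid f$. And you bound the number of empty norm fibers via Parseval, where the paper bounds a collision probability and then uses Cauchy--Schwarz.
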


\begin{proof}
Every two-dimensional $\mathbb F_Q$-subspace of $F$ can be written in the form $\mathrm{span}_{\mathbb F_Q}\{x,xr\}$ for some $x\in F^*$ and $r\in F\setminus \mathbb F_Q$. For the lower bound on $\mathcal L_2$, it is enough to consider the subfamily with $r\in F\setminus \mathbb F_{Q^2}.$

For $x\in F^*$ and $r\in F\setminus \mathbb F_{Q^2}$, define $U_{x,r}:=\mathrm{span}_{\mathbb F_Q}\{x,x r\}.$ Then $U_{x,r}$ is a two-dimensional $\mathbb F_Q$-subspace of $F$. Write $A_{U_{x,r}}(X)=X^{Q^2}+\Lambda(U_{x,r})X^Q+\mu X.$ Since both $\mathbf{x}$ and $\mathbf{x}\mathbf{r}$ are roots of $A_{U_{\mathbf{x},\mathbf{r}}}$, we have $x^{Q^2}+\Lambda(U_{x,r})x^Q+\mu x=0$ and $r^{Q^2}x^{Q^2}+\Lambda(U_{x,r})r^Qx^Q +\mu rx=0.$ Subtracting $r$ times the first identity from the second gives $(r^{Q^2}-r)x^{Q^2}+\Lambda(U_{x,r})(r^Q-r)x^Q=0.$ Since $r\notin\mathbb F_Q$, we have $r^Q-r\ne 0$. Hence
\[
    \Lambda(U_{x,r})=-x^{Q^2-Q}\cdot\frac{r^{Q^2}-r}{r^Q-r}.
\]
Let $R(r):=\frac{r^{Q^2}-r}{r^Q-r}.$ Since $r\in F\setminus \mathbb F_{Q^2}$, we have $R(r)\ne 0$.

Fix $r\in F\setminus \mathbb F_{Q^2}$. As $x$ varies over $F^*$, the element $x^{Q^2-Q}$ ranges over $\ker N_{F/\mathbb F_Q} := \{a\in F^*: N_{F/\mathbb F_Q}(a)=1\}$ where $N_{F/\mathbb F_Q}:F^*\to \mathbb F_Q^*,\,\, a\mapsto a^{1+Q+\cdots+Q^5},$ denotes the field norm from $F$ to $\mathbb F_Q$. Indeed, $N_{F/\mathbb F_Q}(x^{Q^2-Q})=1$, and the image has size 
\[
\frac{|F^*|}{\gcd(Q^6-1,Q^2-Q)} = \frac{Q^6-1}{Q-1} = |\ker N_{F/\mathbb F_Q}|.
\]
Therefore, for every fixed $r\in F\setminus \mathbb F_{Q^2}$, we have
\[
    \{\Lambda(U_{x,r}):x\in F^*\}=-R(r)\cdot \ker N_{F/\mathbb F_Q} = \{a\in F^*:N_{F/\mathbb F_Q}(a)=N_{F/\mathbb F_Q}(-R(r))\}.
\]
Thus, as $r$ varies, two parameters $r$ and $r'$ contribute the same coset precisely when $N_{F/\mathbb F_Q}(-R(r))=N_{F/\mathbb F_Q}(-R(r')).$ Cosets corresponding to distinct norm values are disjoint and all have size $|\ker N_{F/\mathbb F_Q}|=\frac{Q^6-1}{Q-1}.$ Consequently, $|L_2|\ge|\ker N_{F/\mathbb F_Q}|\cdot|\{N_{F/\mathbb F_Q}(-R(r)):r\in F\setminus \mathbb F_{Q^2}\}|.$ Therefore, it remains to lower bound the number of distinct values of $N_{F/\mathbb F_Q}(-R(r))$ as $r$ ranges over $F\setminus\mathbb F_{Q^2}$.

We will show that the norm values above cover almost all of $\mathbb F_Q^*$. Let $\chi:\mathbb F_Q^*\to \mathbb C^*$ be a nontrivial multiplicative character, and let
\[ \widetilde \chi:=\chi\circ N_{F/\mathbb F_Q}:F^*\to \mathbb C^*.
\]
Since the norm map is surjective, $\widetilde \chi$ is a nontrivial multiplicative character of $F^*$. Let $R(X):=(X^{Q^2}-X)/(X^Q-X)$ denote the polynomial obtained from the expression $R(r)$ by replacing $r$ with the variable $X$ and canceling the common factor $X^Q-X$. Then $\deg R(X)=Q^2-Q$, and the roots of $R(X)$ are exactly $\mathbb F_{Q^2}\setminus \mathbb F_Q$, all of which are simple. In particular, $R(X)$ is not a nontrivial power. Thus, by Lemma~\ref{lem:weil-bound}, we have 
\[
\left| \sum_{r\in F}\widetilde \chi(R(r)) \right| \le O(Q^2)\sqrt{|F|} = O(Q^5).
\]
Removing the points in $\mathbb F_{Q^2}$ changes the sum by at most $O(Q^2)$. Therefore, for every nontrivial multiplicative character $\chi$ of $\mathbb F_Q^*$, 
\[ 
\begin{aligned}
\left| \mathbb E_{r\in F\setminus \mathbb F_{Q^2}} \chi\!\left(N_{F/\mathbb F_Q}(-R(r))\right) \right| &= \frac{1}{Q^6-Q^2} \left| \sum_{r\in F\setminus \mathbb F_{Q^2}} \widetilde\chi(-R(r)) \right| \\
&\le \frac{1}{Q^6-Q^2} \left( \left| \sum_{r\in F} \widetilde\chi(R(r)) \right| + O(Q^2) \right) \\ &\le \frac{O(Q^5)+O(Q^2)}{Q^6-Q^2} = O(Q^{-1}).
\end{aligned} 
\]

We now convert this character-sum bound into a lower bound on the number of distinct norm values. Let $r,r'$ be independently and uniformly sampled from $F\setminus \mathbb F_{Q^2}$. By the orthogonality of multiplicative characters of $\mathbb F_Q^*$, we have
\[
\begin{aligned}
\Pr\bigl[N_{F/\mathbb F_Q}(-R(r))=N_{F/\mathbb F_Q}(-R(r'))\bigr] =\frac{1}{Q-1}\sum_{\chi}\left|\mathbb E_{r\in F\setminus \mathbb F_{Q^2}}\chi\bigl(N_{F/\mathbb F_Q}(-R(r))\bigr)\right|^2,
\end{aligned}
\]
where the sum is over all multiplicative characters of $\mathbb F_Q^*$. The trivial character
contributes $1$, while the character-sum bound above gives an $O(Q^{-2})$ contribution for each
nontrivial character. Therefore,
\[
\Pr\bigl[
N_{F/\mathbb F_Q}(-R(r))=N_{F/\mathbb F_Q}(-R(r'))
\bigr]
\le
\frac{1}{Q-1}\bigl(1+(Q-2)O(Q^{-2})\bigr)
=
\frac{1+O(Q^{-1})}{Q-1}.
\]
On the other hand, by the Cauchy–Schwarz inequality,
\[
\begin{aligned}
1&=\left(\sum_{t\in \{N_{F/\mathbb F_Q}(-R(r)):r\in F\setminus \mathbb F_{Q^2}\}}\Pr\bigl[N_{F/\mathbb F_Q}(-R(r))=t\bigr]\right)^2                                      \\
&\le|\{N_{F/\mathbb F_Q}(-R(r)):r\in F\setminus \mathbb F_{Q^2}\}|\sum_{t\in \{N_{F/\mathbb F_Q}(-R(r)):r\in F\setminus \mathbb F_{Q^2}\}}\Pr\bigl[N_{F/\mathbb F_Q}(-R(r))=t\bigr]^2       \\
&=|\{N_{F/\mathbb F_Q}(-R(r)):r\in F\setminus \mathbb F_{Q^2}\}|\cdot\Pr\bigl[N_{F/\mathbb F_Q}(-R(r))=N_{F/\mathbb F_Q}(-R(r'))\bigr].
\end{aligned}
\]
Hence, $|\{N_{F/\mathbb F_Q}(-R(r)):r\in F\setminus \mathbb F_{Q^2}\}|\ge (1-O(Q^{-1}))(Q-1).$

Combining this with the earlier coset-counting bound gives
\[
    |L_2|\ge\frac{Q^6-1}{Q-1}\cdot (1-O(Q^{-1}))(Q-1) = (1-O(Q^{-1}))Q^6. 
\]

\end{proof}

\begin{lemma}\label{lem:lift-to-four}
With the notation introduced above, define $\mathcal L_4:=\{\Lambda(V): V\subseteq F,\ \dim_{\mathbb F_Q}V=4\}.$ Then $|\mathcal L_4|\ge (1-O(Q^{-1}))Q^6.$
\end{lemma}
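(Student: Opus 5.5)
The plan is to construct an injective (or at least surjective-onto-image) correspondence from two-dimensional subspaces to four-dimensional ones that transports the second-highest coefficient through an invertible map of $F$. Then $|\mathcal L_4|\ge|\mathcal L_2|$, and Lemma~\ref{lem:two-dim-diversity} finishes the argument.

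Fix a two-dimensional $\mathbb F_Q$-subspace $U\subseteq F$ with $A_U(X)=X^{Q^2}+\alpha X^Q+\mu X$, where $\alpha=\Lambda(U)$. The map $A_U:F\to F$ is $\mathbb F_Q$-linear with kernel $U$, so its image $W:=A_U(F)$ is a four-dimensional $\mathbb F_Q$-subspace of $F$. The key identity to establish is
\[
A_W\circ A_U = X^{Q^6}-X .
\]
Both sides are monic $Q$-linearized polynomials of $Q$-degree $6$. Every $x\in F$ is a root of the left side, because $A_U(x)\in W$ and $A_W$ vanishes on $W$. The left side therefore has $Q^6$ roots and degree $Q^6$, and it is monic, so it equals $\prod_{x\in F}(X-x)=X^{Q^6}-X$.

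Write $A_W=X^{Q^4}+\beta X^{Q^3}+\cdots$ with $\beta=\Lambda(W)$, and compare the coefficients of $X^{Q^5}$ using the symbolic-product formula of Section~\ref{sec:linearizedpoly}. The term $X^{Q^4}\circ A_U$ contributes $\alpha^{Q^4}$. The term $\beta X^{Q^3}\circ A_U$ contributes $\beta\cdot 1^{Q^3}=\beta$. Lower terms of $A_W$ cannot reach $Q$-degree $5$. Since $X^{Q^6}-X$ has zero coefficient at $X^{Q^5}$, we get $\beta=-\alpha^{Q^4}$, that is, $\Lambda(A_U(F))=-\Lambda(U)^{Q^4}$.

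It follows that $\mathcal L_4\supseteq\{-a^{Q^4}: a\in\mathcal L_2\}$. The map $a\mapsto -a^{Q^4}$ is a bijection of $F$, being a Frobenius power composed with negation. Hence $|\mathcal L_4|\ge|\mathcal L_2|\ge(1-O(Q^{-1}))Q^6$ by Lemma~\ref{lem:two-dim-diversity}.

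The only real point to check is the identity $A_W\circ A_U=X^{Q^6}-X$. This rests on the roots argument above and on the fact that annihilators of $\mathbb F_Q$-subspaces are $Q$-linearized and monic of the correct $Q$-degree. Once that holds, the coefficient comparison is a one-line computation.
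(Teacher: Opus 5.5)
Your proposal is correct and follows the paper's proof essentially step for step. Both take $W=A_U(F)$, use $A_W\circ A_U=X^{Q^6}-X$, read off $\Lambda(W)=-\Lambda(U)^{Q^4}$ from the $X^{Q^5}$ coefficient, and conclude via the bijection $a\mapsto -a^{Q^4}$ and Lemma~\ref{lem:two-dim-diversity}; your root-counting justification of the composition identity is just a slightly more explicit version of the paper's one-line remark.
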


\begin{proof}
Let $U\subseteq F$ be a $2$-dimensional $\F_Q$-subspace, and write $A_U(X)=X^{Q^2}+\alpha X^Q+\mu X.$ Set $ W:=A_U(F)=\{A_U(x):x\in F\}.$ The map $A_U:\ F\to F$ is $\F_Q$-linear and has kernel $U$. Since $\dim_{\F_Q}F=6$ and $\dim_{\F_Q}U=2$, the image $W$ has dimension $4$ over $\F_Q$.

Write $A_W(X)=X^{Q^4}+\beta X^{Q^3}+\text{lower $Q$-degree terms}.$ Because $W=A_U(F)$, the composition $A_W(A_U(X))$ vanishes on all of $F$. It is a monic $Q$-linearized polynomial of $Q$-degree $6$, hence $A_W(A_U(X))=X^{Q^6}-X.$ Observe that the coefficient of $X^{Q^5}$ is $\alpha^{Q^4}+\beta,$ where the term $\alpha^{Q^4}$ comes from $A_U(X)^{Q^4}$ and the term $\beta$ comes from $\beta A_U(X)^{Q^3}$. Since $X^{Q^6}-X$ has no $X^{Q^5}$ term, it must hold $\beta=-\alpha^{Q^4}.$ As $U$ ranges over all $2$-dimensional subspace, the coefficient $\beta=\Lambda(W)$ ranges over $\{-\alpha^{Q^4}:\alpha\in\mathcal L_2\}.$ Since the map $\alpha\mapsto -\alpha^{Q^4}$ is a bijection of $F$, this set has cardinality $|\mathcal L_2|$. Hence, this suffices to prove the desired result by using Lemma~\ref{lem:two-dim-diversity}.

\end{proof}

We are ready to show the counterexample at the unique-decoding boundary.

\begin{theorem}\label{thm:unique-decoding-counterexample}
For every prime power $q$, there is an infinite family of Gabidulin codes indexed by $g$ with the following property. Let $F=\F_{q^{6g}}$ and code length $n=6g$. Let $\mathcal{A}=\{\alpha_1,\ldots,\alpha_n\}$ be any $\F_q$-basis of $F$ and let $\cC=\mathcal{G}(\mathcal{A},2g+1)\subseteq F^n$ be the corresponding Gabidulin code. There exist words $\bu_0,\bu_1\in F^n$ such that
\[
  \Pr_{\lambda\in F}\bigl[d(\bu_0+\lambda \bu_1,\cC)\le d/2\bigr]\ge 1-O(q^{-g}),
\]
but $d([\bu_0,\bu_1],\cC^2)\ge \frac{3d}{4}$.
\end{theorem}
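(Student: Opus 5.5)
We follow the outline from the technical overview. Put $Q=q^g$, $k=2g+1$ and $d=n-k+1=4g$, so that $d/2=2g$. Define $\widehat{\bu}_0(X):=X^{Q^4}=X^{q^{4g}}$ and $\widehat{\bu}_1(X):=X^{Q^3}=X^{q^{3g}}$, and let $\bu_s:=\widehat{\bu}_s(\mathcal A)$ for $s\in\{0,1\}$. Since $\mathcal A$ is an $\F_q$-basis of $F$, we have $V=F$. By Lemma~\ref{lem:int}, for any $q$-linearized polynomial $G$ the rank of $G(\mathcal A)$ equals $\dim_{\F_q}G(F)$. This holds for polynomials of $q$-degree $\ge n$ as well, because evaluation on $\mathcal A$ determines $G$ as an $\F_q$-linear map on $F$.

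\emph{Closeness for most $\lambda$.} Take $\lambda\in\mathcal L_4$ and pick a four-dimensional $\F_Q$-subspace $W\subseteq F$ with $\Lambda(W)=\lambda$. Write $A_W(X)=X^{Q^4}+\lambda X^{Q^3}+c_1X^{Q^2}+c_2X^Q+c_3X$ and set $P_\lambda(X):=-(c_1X^{Q^2}+c_2X^Q+c_3X)$. As a $q$-linearized polynomial, $P_\lambda$ has $q$-degree at most $2g<k$, so $P_\lambda(\mathcal A)\in\cC$. Moreover
\[
\widehat{\bu}_0+\lambda\widehat{\bu}_1-P_\lambda=A_W .
\]
Because $A_W$ is $\F_Q$-linear with kernel $W$, we get $\dim_{\F_q}A_W(F)=6g-4g=2g=d/2$, and therefore $d(\bu_0+\lambda\bu_1,\cC)\le d/2$. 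Lemma~\ref{lem:lift-to-four} gives $|\mathcal L_4|\ge(1-O(Q^{-1}))Q^6$. Hence the probability over $\lambda\in F$ is at least $1-O(Q^{-1})=1-O(q^{-g})$.

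\emph{Farness of the interleaved word.} Since $\mathrm{rk}_q[\bu_0-\bc_0,\bu_1-\bc_1]\ge\mathrm{rk}_q(\bu_1-\bc_1)$, it suffices to show $d(\bu_1,\cC)\ge 3g$. Let $P\in\mathcal L(q,m,k)$ be arbitrary. Then $G:=X^{q^{3g}}-P$ is a nonzero $q$-linearized polynomial of $q$-degree exactly $3g$, because $k-1=2g<3g$. Its kernel in $F$ consists of roots of $G$, so it has at most $q^{3g}$ elements, i.e. $\dim_{\F_q}\ker G\le 3g$. Consequently $\dim_{\F_q}G(F)\ge 6g-3g=3g=3d/4$, which yields $d([\bu_0,\bu_1],\cC^2)\ge 3d/4$.

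All of the real work sits in Lemmas~\ref{lem:two-dim-diversity} and~\ref{lem:lift-to-four}, which we may assume. The only points needing care are bookkeeping. First, $Q$-degree must be converted to $q$-degree: $Q^2=q^{2g}$ has $q$-degree $2g\le k-1$. Second, rank must be identified with the $\F_q$-dimension of the image over the full space $F$; this uses $n=m=6g$, and is where it matters that the evaluation set is an $\F_q$-basis of $F$ itself rather than of a smaller subspace $V$.
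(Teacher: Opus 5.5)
Your proposal is correct and follows the paper's proof essentially step for step. It uses the same words $\bu_0=X^{Q^4}(\mathcal A)$ and $\bu_1=X^{Q^3}(\mathcal A)$. It uses Lemma~\ref{lem:lift-to-four} in the same way, to write $\widehat{\bu}_0+\lambda\widehat{\bu}_1-P_\lambda=A_W$ with image of $\F_q$-dimension $2g$. And it bounds the second row by the same root count on $X^{q^{3g}}-P$.

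One small citation point: the identity $\mathrm{rk}_q\,G(\mathcal A)=\dim_{\F_q}G(F)$ follows only from the $\F_q$-linearity of $G$ and the fact that $\mathcal A$ spans $F$, not from Lemma~\ref{lem:int}. This does not affect the argument.
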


\begin{proof}
Let $Q=q^g$ and let $\widehat{\bu}_0,\widehat{\bu}_1$ be the $q$-linearized polynomials
\[
  \widehat{\bu}_0(X)=X^{Q^4},\qquad \widehat{\bu}_1(X)=X^{Q^3}.
\]
Let $\bu_s:=\widehat{\bu}_s(\mathcal{A})\in F^n$ for $s\in \{0,1\}$. Since $\cC=\mathcal{G}(\mathcal{A},2g+1)$, its codewords are evaluations of $q$-linearized polynomials of $q$-degree at most $2g$.

By Lemma~\ref{lem:lift-to-four}, there are at least $(1-O(Q^{-1}))Q^6=(1-O(q^{-g}))|F|$ points $\lambda\in F$ for which there is a $4$-dimensional $\F_Q$-subspace $W\le F$ with $A_W(X)=X^{Q^4}+\lambda X^{Q^3}+c_1X^{Q^2}+c_2X^Q+c_3X.$ For such a $\lambda$, define $P_\lambda(X):=-c_1X^{Q^2}-c_2X^Q-c_3X.$ This polynomial has $q$-degree at most $2g$, and hence $P_\lambda(\mathcal{A})\in \cC$. Moreover, $\widehat{\bu}_0(X)+\lambda \widehat{\bu}_1(X)-P_\lambda(X)=A_W(X).$ The kernel of $A_W$ is exactly $W$, which has $\F_q$-dimension $4g$. Since $F$ has $\F_q$-dimension $6g$, the evaluation vector $A_W(\mathcal{A})$ has rank $6g-4g=2g=d/2$.  Therefore, $d(\bu_0+\lambda \bu_1,\cC)\le 2g=d/2$ for all such $\lambda$, proving the claimed lower bound on the fraction of good points on the line.

It remains to show that the interleaved word is farther than $d/2$ from $\cC^2$. It suffices to look at the second row. For any $P(X)\in \mathcal{L}(q,6g,2g+1)$,  $X^{q^{3g}}-P(X)$ is a nonzero $q$-linearized polynomial of $q$-degree at most $3g$. Hence its kernel in $F$ has $\F_q$-dimension at most $3g$, and its image on $F$ has $\F_q$-dimension at least $3g$. Since $\mathcal{A}$ is an $\F_q$-basis of $F$, we obtain $\rk_q\bigl(\bu_1-P(\mathcal{A})\bigr)\ge 3g.$ Consequently, for every $\bc_0,\bc_1\in \cC$,
\[
  \rk_q\begin{pmatrix}\bu_0-\bc_0\\ \bu_1-\bc_1\end{pmatrix}\ge\rk_q(\bu_1-\bc_1)\ge 3g,
\]
and therefore
\[
  d([\bu_0,\bu_1],\cC^2)\ge\frac{3d}{4}. 
\]
\end{proof}

The preceding construction is not tied to the specific choice $F=\mathbb{F}_{Q^6}$. The same proof works whenever the ambient field is an $N$-dimensional extension of $\mathbb{F}_Q$, with $N\ge 6$.  More precisely, Lemma~\ref{lem:two-dim-diversity} extends to $F=\mathbb{F}_{Q^N}$ and gives $\bigl|\{\Lambda(U): U\subseteq F,\ \dim_{\mathbb{F}_Q}U=2\}\bigr|\ge (1-O(Q^{5-N}))Q^N.$ Applying the same composition argument as in Lemma~\ref{lem:lift-to-four} then gives the same lower bound for the second-highest coefficient of $(N-2)$-dimensional $\mathbb{F}_Q$-subspace-annihilator polynomials. 
Consequently, we obtain the following wider family of counterexamples at the unique-decoding boundary. 

\begin{theorem}
Fix an integer $N\ge 6$. For every prime power $q$, there is an infinite family of Gabidulin codes indexed by $g$ with the following property. Let $Q=q^g,\,F=\mathbb{F}_{Q^N}=\mathbb{F}_{q^{Ng}}$ and $n=Ng$. Let $\mathcal{A}=\{\alpha_1,\ldots,\alpha_n\}$ be any $\mathbb{F}_q$-basis of $F$, and let $\mathcal{C}=\mathcal{G}(\mathcal{A},(N-4)g+1)\subseteq F^n$ be the corresponding Gabidulin code. Moreover, there exist words $\bu_0,\bu_1\in F^n$ such that
\[
    \Pr_{\lambda\in F}\bigl[d(\bu_0+\lambda \bu_1,\cC)\le d/2\bigr]\ge 1-O(Q^{5-N})=1-O(q^{-(N-5)g}),
\]
but $d([\bu_0,\bu_1],\cC^2)\ge 3g=\frac{3d}{4}.$
\end{theorem}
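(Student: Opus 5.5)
We follow the proof of Theorem~\ref{thm:unique-decoding-counterexample}, replacing $6$ by $N$ throughout. Set $Q=q^g$, $F=\mathbb F_{Q^N}$, $n=Ng$ and $k=(N-4)g+1$. Then $d=n-k+1=4g$ and $d/2=2g$. Take
\[
\widehat{\bu}_0(X)=X^{Q^{N-2}},\qquad \widehat{\bu}_1(X)=X^{Q^{N-3}},
\]
and put $\bu_s=\widehat{\bu}_s(\mathcal A)$. The proof has three steps.

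\emph{Step 1: good points.} Let $W\subseteq F$ be an $(N-2)$-dimensional $\mathbb F_Q$-subspace with annihilator
\[
A_W(X)=X^{Q^{N-2}}+\lambda X^{Q^{N-3}}+\sum_{i\le N-4}c_iX^{Q^i}.
\]
The lower part $P_\lambda:=-\sum_{i\le N-4}c_iX^{Q^i}$ has $q$-degree at most $(N-4)g=k-1$, so $P_\lambda(\mathcal A)\in\cC$. Moreover $\widehat{\bu}_0+\lambda\widehat{\bu}_1-P_\lambda=A_W$. Its kernel $W$ has $\mathbb F_q$-dimension $(N-2)g$, so its image has $\mathbb F_q$-dimension $2g=d/2$. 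Hence every $\lambda\in\mathcal L_{N-2}:=\{\Lambda(W):\dim_{\mathbb F_Q}W=N-2\}$ is a good point.

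\emph{Step 2: far interleaved word.} For any $P\in\mathcal L(q,Ng,k)$, the polynomial $X^{q^{(N-3)g}}-P$ is nonzero because $(N-3)g>k-1$. Its $q$-degree is $(N-3)g$, so its kernel has $\mathbb F_q$-dimension at most $(N-3)g$. Its image therefore has dimension at least $3g$. Thus $d([\bu_0,\bu_1],\cC^2)\ge d(\bu_1,\cC)\ge 3g=3d/4$.

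\emph{Step 3: counting good points.} It remains to show $|\mathcal L_{N-2}|\ge(1-O(Q^{5-N}))Q^N$, and this is done in two parts.

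First, lift from dimension $2$ as in Lemma~\ref{lem:lift-to-four}. For $\dim U=2$ with $A_U=X^{Q^2}+\alpha X^Q+\mu X$, set $W=A_U(F)$, which has dimension $N-2$. Then $A_W\circ A_U=X^{Q^N}-X$. Comparing the coefficients of $X^{Q^{N-1}}$ gives $\Lambda(W)=-\alpha^{Q^{N-2}}$. This is a bijective image of $\mathcal L_2$, so $|\mathcal L_{N-2}|\ge|\mathcal L_2|$.

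Second, run Lemma~\ref{lem:two-dim-diversity} over $\mathbb F_{Q^N}$. For $r\notin\mathbb F_{Q^2}$ and $x\in F^*$ we get $\Lambda(U_{x,r})=-x^{Q^2-Q}R(r)$. We need $x^{Q^2-Q}=(x^{Q-1})^Q$ to range over exactly $\ker N_{F/\mathbb F_Q}$. This holds because $x\mapsto x^{Q-1}$ has image $\ker N$ by Hilbert 90 (or by comparing sizes via $\gcd(Q^N-1,Q-1)=Q-1$), and Frobenius preserves $\ker N$. Hence each $r$ contributes a full norm coset of size $(Q^N-1)/(Q-1)$. By the Weil bound (Lemma~\ref{lem:weil-bound}) with $s\le Q^2$, each nontrivial $\widetilde\chi=\chi\circ N$ satisfies
\[
\Bigl|\mathbb E_r\,\widetilde\chi(-R(r))\Bigr|\le O(Q^2\cdot Q^{N/2}/Q^N)=O(Q^{2-N/2}).
\]
The collision probability is then at most $\frac{1}{Q-1}\bigl(1+O(Q\cdot Q^{4-N})\bigr)=\frac{1+O(Q^{5-N})}{Q-1}$. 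Cauchy--Schwarz yields at least $(1-O(Q^{5-N}))(Q-1)$ distinct norm values. Multiplying by the coset size gives $|\mathcal L_2|\ge(1-O(Q^{5-N}))Q^N$.

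The main obstacle is this character-sum step. Its error exponent must come out exactly as $O(Q^{5-N})$, which means tracking that the squared Weil bound summed over the $Q-2$ nontrivial characters is negligible. This is only nontrivial for $N\ge 6$, which explains the hypothesis. We also need to check that $R$ is not a perfect $\ell$-th power; this follows from its roots being simple, as before. Removing the $O(Q^2)$ points of $\mathbb F_{Q^2}$ costs $O(Q^{2-N})$ in the average, which is harmless.
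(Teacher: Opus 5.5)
Your proposal is correct and takes the same approach as the paper. The paper proves this theorem only by remarking that the $N=6$ argument carries over with $6$ replaced by $N$: the two-dimensional diversity lemma via the Weil bound, the lift $W=A_U(F)$ via $A_W\circ A_U=X^{Q^N}-X$, and the rank counts for $\widehat{\bu}_0=X^{Q^{N-2}}$ and $\widehat{\bu}_1=X^{Q^{N-3}}$. You carry out exactly this substitution, obtaining the per-character bound $O(Q^{2-N/2})$ and the collision bound $\bigl(1+O(Q^{5-N})\bigr)/(Q-1)$, and you fill in details the paper leaves implicit, such as identifying the image of $x\mapsto x^{Q^2-Q}$ with $\ker N_{F/\mathbb{F}_Q}$ for general $N$.
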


\subsection{A Lower Bound on the Proximity Error at $\delta=\frac{d}{3n}$}\label{subsec:below-unique-decoding}
The counterexample in the previous subsection shows that the strict inequality $ e<d/2$ in the correlated-agreement theorem cannot be relaxed to the boundary case $ e=d/2$. We now give another counterexample at radius $e=d/3$, showing that the fraction $\epsilon$ in the proximity-gap statement cannot, in general, be made substantially smaller than $q^{-e}$. In this construction, a multiplicative subgroup of parameters yields many points on the affine line that are close to the code, while at least one point on the line remains far from the code.

Throughout this subsection, fix integers $r\ge 3$ and $c\ge 2$. Define the parameters
\[
    Q:=q^r,\, n:=(c+1)r,\, k:=(c-2)r+1,\, e:=r.
\]

Let $E:=\mathbb F_{q^n}=\mathbb F_{Q^{c+1}}$, and let $\mathcal{A}=\{\alpha_1,\ldots,\alpha_n\}$ be an $\mathbb F_q$-basis of $E$. We write $P(\mathcal{A}):=(P(\alpha_1),\ldots,P(\alpha_n))$ for the evaluation of a $q$-linearized polynomial $P$ on $\mathcal{A}$.

We first recall a simple description of the second coefficient of codimension-one subspace-annihilator polynomials.

\begin{lemma}\label{lem:hyperplane-second-coefficients}
Let $E=\mathbb F_{Q^{c+1}}$. For an $\mathbb F_Q$-subspace $W\subseteq E$ of dimension $c$, write its subspace-annihilator polynomial as $A_W(X)=\prod_{w\in W}(X-w)= X^{Q^c}+\Lambda(W)X^{Q^{c-1}}+\text{lower $Q$-degree terms}.$ Then
\[
    \{\Lambda(W): W\subseteq E,\ \dim_{\mathbb F_Q}W=c\}= (E^*)^{Q-1}:=\{a^{Q-1}:a\in E^*\}.
\]
\end{lemma}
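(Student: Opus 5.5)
Every $\mathbb F_Q$-hyperplane of $E$ is the kernel of a trace functional, so I will compute the annihilator of such a kernel directly. Write $\mathrm{Tr}(X):=\mathrm{Tr}_{E/\mathbb F_Q}(X)=\sum_{i=0}^{c}X^{Q^i}$. The nondegeneracy of the trace form gives a bijection between $c$-dimensional $\mathbb F_Q$-subspaces $W\subseteq E$ and the classes $\beta\mathbb F_Q^*$ with $\beta\in E^*$, namely
\[
W_\beta:=\{x\in E:\mathrm{Tr}(\beta x)=0\}.
\]
The plan is to write down $A_{W_\beta}$ explicitly, read off $\Lambda(W_\beta)$ as a function of $\beta$, and then identify the set of values obtained.

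\textbf{Step 1: the annihilator of $W_\beta$.} Consider $L_\beta(X):=\beta^{-Q^c}\,\mathrm{Tr}(\beta X)=\sum_{i=0}^{c}\beta^{Q^i-Q^c}X^{Q^i}$. It is a monic $Q$-linearized polynomial of $Q$-degree $c$, hence of ordinary degree $Q^c$. It vanishes on $W_\beta$, and $|W_\beta|=Q^c$. Because $L_\beta$ is monic of degree exactly $Q^c$ and has these $Q^c$ distinct roots, it must equal $\prod_{w\in W_\beta}(X-w)$. So $A_{W_\beta}=L_\beta$.

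\textbf{Step 2: read off $\Lambda$.} The coefficient of $X^{Q^{c-1}}$ in $L_\beta$ is
\[
\Lambda(W_\beta)=\beta^{Q^{c-1}-Q^c}=\bigl(\beta^{Q^{c-1}}\bigr)^{1-Q}=\bigl(\beta^{-Q^{c-1}}\bigr)^{Q-1}.
\]
Set $a:=\beta^{-Q^{c-1}}$. As $\beta$ ranges over $E^*$, $a$ ranges over all of $E^*$, because $\beta\mapsto\beta^{-Q^{c-1}}$ is the composition of inversion with a Frobenius automorphism and is therefore a bijection of $E^*$.

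\textbf{Step 3: conclude.} Every codimension-one $W$ equals some $W_\beta$, so $\Lambda(W)\in(E^*)^{Q-1}$. Conversely, any $a\in E^*$ comes from $\beta=a^{-Q^{-(c-1)}}$ (inverting the Frobenius power), and then $\Lambda(W_\beta)=a^{Q-1}$. This proves the two inclusions, hence the equality of sets.

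\textbf{Main check.} The only delicate point is the identification in Step 1. It needs $W_\beta$ to have exactly $Q^c$ elements, which follows from $x\mapsto\mathrm{Tr}(\beta x)$ being a nonzero $\mathbb F_Q$-linear functional onto $\mathbb F_Q$; this holds because the trace is surjective. It also needs the normalization exponent to be right: dividing $\mathrm{Tr}(\beta X)$ by its leading coefficient $\beta^{Q^c}$ is what produces the exponent $Q^{c-1}-Q^c$ in the second coefficient. Beyond these two points the argument is just the exponent bookkeeping above.
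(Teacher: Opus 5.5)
Your proposal is correct and follows the paper's proof essentially step for step: normalize $\mathrm{Tr}_{E/\mathbb F_Q}(\beta X)$ by $\beta^{-Q^c}$ to get the annihilator of the hyperplane, read off the second coefficient $\beta^{Q^{c-1}-Q^c}=(\beta^{-Q^{c-1}})^{Q-1}$, and conclude using nondegeneracy of the trace pairing together with bijectivity of $\beta\mapsto\beta^{-Q^{c-1}}$ on $E^*$. Your inverse-Frobenius choice in Step 3 can be written concretely as $\beta=a^{-Q^2}$, since $x^{Q^{c+1}}=x$ on $E$.
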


\begin{proof}
For $a\in E^*$, consider the $\mathbb F_Q$-linear map $x\longmapsto \mathrm{Tr}_{E/\mathbb F_Q}(ax)=\sum_{i=0}^{c} a^{Q^i}x^{Q^i}.$ Its kernel is an $\mathbb F_Q$-hyperplane in $E$. After normalizing the leading coefficient, we obtain $a^{-Q^c}\mathrm{Tr}_{E/\mathbb F_Q}(aX)= X^{Q^c} + a^{Q^{c-1}-Q^c}X^{Q^{c-1}} + \text{lower $Q$-degree terms}.$ This monic $Q$-linearized polynomial has $Q$-degree $c$ and vanishes exactly on $\ker(x\mapsto \mathrm{Tr}_{E/\mathbb F_Q}(ax))$, hence it is the subspace-annihilator polynomial of this hyperplane. Thus, the second coefficient is $a^{Q^{c-1}-Q^c} = \left(a^{-Q^{c-1}}\right)^{Q-1} \in (E^*)^{Q-1}.$

Conversely, every $\mathbb F_Q$-hyperplane is the kernel of a nonzero $\mathbb F_Q$-linear map from $E$ to $\mathbb F_Q$. By the nondegeneracy of the trace pairing, the maps $x\longmapsto \mathrm{Tr}_{E/\mathbb F_Q}(ax)$ where $a\in E$, give all $\mathbb F_Q$-linear maps from $E$ to $\mathbb F_Q$. Therefore every $\mathbb F_Q$-hyperplane can be written as $\ker\bigl(x\mapsto \mathrm{Tr}_{E/\mathbb F_Q}(ax)\bigr)$ for some $a\in E^*$. Since the map $a\mapsto a^{-Q^{c-1}}$ is a bijection of $E^*$, the possible second coefficients are exactly $(E^*)^{Q-1}$.

\end{proof}

\begin{theorem}\label{thm:below-unique-decoding-counterexample}
Let $\cC:= \mathcal{G}(\mathcal{A},(c-2)r+1)\subseteq E^n$ be the Gabidulin code with minimum distance $d=3r$ over the evaluation basis $\mathcal{A}$ defined above. Let $e:=d/3=r.$ Define $\bu_0 := (X^{q^{cr}})(\mathcal{A}),\, \bu_1 := (X^{q^{(c-1)r}})(\mathcal{A}),$ and let $\ell(\lambda):=\bu_0+\lambda \bu_1 = \bigl(X^{q^{cr}}+\lambda X^{q^{(c-1)r}}\bigr)(\mathcal{A})$ for $\lambda\in E$. Let $H := (E^*)^{Q-1}=\{a^{Q-1}:a\in E^*\}.$ Then, $d(\ell(\lambda),\cC)\le e$ for every $\lambda\in H$ whereas $d(\ell(0),\cC)>e.$

Consequently, not all points on the line are $e$-close to $\cC$, but at least $|H|=\frac{q^n-1}{q^r-1}$ points on the line are $e$-close to $\cC$.
\end{theorem}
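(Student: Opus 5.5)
The plan has three parts: show the points indexed by $H$ are close, show $\lambda=0$ is far, and count $H$.

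\emph{Closeness for $\lambda\in H$.} Note that $X^{q^{cr}}=X^{Q^c}$ and $X^{q^{(c-1)r}}=X^{Q^{c-1}}$. By Lemma~\ref{lem:hyperplane-second-coefficients}, for each $\lambda\in H$ there is an $\mathbb F_Q$-hyperplane $W\subseteq E$ with
\[
A_W(X)=X^{Q^c}+\lambda X^{Q^{c-1}}+\sum_{i=0}^{c-2}b_iX^{Q^i}.
\]
Set $P_\lambda(X):=-\sum_{i=0}^{c-2}b_iX^{Q^i}$. Its $q$-degree is at most $(c-2)r<k$, so $P_\lambda(\mathcal A)\in\cC$. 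We then have $\ell(\lambda)-P_\lambda(\mathcal A)=A_W(\mathcal A)$. The kernel of $A_W$ on $E$ is $W$, of $\mathbb F_q$-dimension $cr$, so its image has $\mathbb F_q$-dimension $r=e$. Since $\mathcal A$ is an $\mathbb F_q$-basis of $E$, this gives $\rk_q(A_W(\mathcal A))=e$.

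\emph{Farness of $\lambda=0$.} This is the main step. Suppose $P\in\mathcal L(q,n,k)$ satisfies $\rk_q(\bu_0-P(\mathcal A))\le e$. Let $D(X):=X^{q^{cr}}-P(X)$, viewed as an $\mathbb F_q$-linear map on $E$. Then $D$ is monic of $q$-degree $cr$, and $\ker D$ has $\mathbb F_q$-dimension at most $cr$. Its image $T:=D(E)$ has dimension $n-\dim\ker D\le e=r$, which forces $\dim\ker D=cr$ and $\dim T=r$. Hence $D$ has exactly $q^{cr}$ roots, so $D=A_K$ is the annihilator of $K:=\ker D$.

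Consequently $A_T\circ A_K$ vanishes on all of $E$. It is monic of $q$-degree $r+cr=n$, so
\[
A_T\circ A_K=X^{q^n}-X.
\]
Write $A_T=X^{q^r}+t_{r-1}X^{q^{r-1}}+\cdots$ and $A_K=X^{q^{cr}}+\kappa X^{q^{cr-1}}+\cdots$. Here $\kappa$ is the coefficient of $X^{q^{cr-1}}$ in $D$, and it vanishes because $cr-1>(c-2)r$ when $r\ge 1$. In fact all coefficients of $A_K$ in $q$-degrees $(c-2)r+1,\dots,cr-1$ vanish.

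Compare coefficients of $X^{q^{n-j}}$ for $j=1,\dots,r$. The coefficient of $X^{q^{n-j}}$ in $A_T\circ A_K$ is a sum of terms $t_i\,\kappa_{l}^{q^i}$ with $i+l=n-j$. Since $\kappa_l=0$ for $cr-2r<l<cr$, we have $l\ge cr$ or $l\le cr-2r$. For $j\le r$, the case $l\le cr-2r$ would need $i\ge r+2r-j>r$, which is impossible. So $l=cr$, $i=r-j$, and the coefficient equals $t_{r-j}$. Comparing with $X^{q^n}-X$ gives $t_{r-j}=0$ for $1\le j\le r$; in particular $t_0=0$ when $r<n$.

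But $A_T$ is the annihilator of a subspace, so it is separable and its $X$-coefficient $t_0$ is nonzero. This contradiction shows $d(\ell(0),\cC)>e$. One could also reach the contradiction by comparing the $X$-coefficient directly, since $t_0\kappa_0=-1$ forces $t_0\ne0$.

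\emph{Counting $H$.} The map $a\mapsto a^{Q-1}$ on the cyclic group $E^*$ has kernel $\mathbb F_Q^*$, because $\gcd(Q^{c+1}-1,Q-1)=Q-1$. Hence
\[
|H|=\frac{q^n-1}{q^r-1}.
\]

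The hardest part is the farness step, specifically keeping track of which coefficients vanish in the coefficient comparison.
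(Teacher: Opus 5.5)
Your proposal is correct and follows the paper's proof step for step. It uses the same hyperplane-annihilator construction from Lemma~\ref{lem:hyperplane-second-coefficients} for $\lambda\in H$, the same identity $A_T\circ A_K=X^{q^n}-X$ with coefficient comparison in $q$-degrees $cr,\dots,n-1$ for $\lambda=0$, and the same kernel count for $a\mapsto a^{Q-1}$. The only difference is the final contradiction: the paper turns $t_0=\dots=t_{r-1}=0$ into $A_T=X^{q^r}$ and then derives $P(X)^{q^r}=X$, whereas you directly contradict the nonzero $X$-coefficient of a subspace annihilator (equivalently $t_0\kappa_0=-1$), which is equally valid.
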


\begin{proof}
We first show that every $\lambda\in H$ gives an $e$-close point. By Lemma~\ref{lem:hyperplane-second-coefficients}, for every $\lambda\in H$ there exists an $\mathbb F_Q$-subspace $W\subseteq E$ of dimension $c$ such that $A_W(X)= X^{Q^c}+\lambda X^{Q^{c-1}} +\sum_{i=0}^{c-2} b_i X^{Q^i}$ for some coefficients $b_i\in E$.

Define $P_\lambda(X):=-\sum_{i=0}^{c-2} b_i X^{Q^i}.$ Since $Q^i=q^{ri}$ and the largest exponent occurring in $P_\lambda$ has $q$-degree at most $(c-2)r=k-1$, we have $P_\lambda(\mathcal{A})\in \cC$. Moreover, $X^{q^{cr}}+\lambda X^{q^{(c-1)r}}-P_\lambda(X)=A_W(X).$ The kernel of the $\mathbb F_q$-linear map $A_W:E\to E$ is exactly $W$, whose $\mathbb F_q$-dimension is $cr$. Hence $\dim_{\mathbb F_q} A_W(E)=n-cr=(c+1)r-cr=e.$ Since $\mathcal{A}$ is an $\mathbb F_q$-basis of $E$, this gives $d(\ell(\lambda),\cC)\le \mathrm{rk}_q\bigl(A_W(\mathcal{A})\bigr) = e=r.$

It remains to prove that the point $\ell(0)$ is not $e$-close to $\cC$. Suppose, towards a contradiction, that there exists $P\in \mathcal{L}(q,n,k)$ such that $\mathrm{rk}_q\bigl( (X^{q^{cr}}-P(X))(\mathcal{A}) \bigr)\le r.$ Define $L(X):=X^{q^{cr}}-P(X).$ Since $\deg_q P\le k-1=(c-2)r$, the polynomial $L$ is monic of $q$-degree $cr$. The above rank assumption says that the image of the $\mathbb F_q$-linear map $L:E\to E$ has dimension at most $r$, and therefore $\dim_{\mathbb F_q}\ker L\ge n-r=cr.$ On the other hand, a nonzero $q$-linearized polynomial of $q$-degree $cr$ has kernel dimension at most $cr$. Thus, $\ker L$ has dimension exactly $cr$, and $L$ is the monic subspace-annihilator polynomial of its kernel. Let $S=L(E)$, then $\dim_{\mathbb F_q} S=r$. Let $M$ be the subspace-annihilator polynomial of $S$. Then $M$ is monic of $q$-degree $r$, and $M(L(x))=0$ for every $x\in E$. Hence $M\circ L$ is a monic $q$-linearized polynomial of $q$-degree $n$ vanishing on all of $E$ and
\begin{equation}\label{eq:composition}
    M\circ L=X^{q^n}-X.
\end{equation}

We compare the high $q$-degree coefficients in this identity. Since $\deg_q P\le cr-2r$, every term involving $P$ in $M\circ L$ has $q$-degree at most $(c-1)r$. For $1\le i\le r-1$, the coefficient of $X^{q^{cr+i}}$ in $M\circ L$ is therefore exactly $m_i$. The right-hand side of~\eqref{eq:composition} has no such terms, so $m_i=0$ for all $1\le i\le r-1$. Similarly, the coefficient of $X^{q^{cr}}$ is $m_0$, and hence $m_0=0$. Therefore, $M(X)=X^{q^r}$. Substituting this into~\eqref{eq:composition} gives $X^{q^r}\circ L=X^{q^n}-P(X)^{q^r}=X^{q^n}-X.$ Therefore $P(X)^{q^r}=X$. This is impossible: since $P(X)^{q^r}$ has no $X$-term, whereas the right-hand
side is $X$. This contradiction proves $d(\ell(0),\cC)>r=e$.

Finally, the map $a\mapsto a^{Q-1}$ from $E^*$ to $E^*$ has kernel $\mathbb F_Q^*$, and hence $|H|=\frac{|E^*|}{|\mathbb F_Q^*|} =\frac{q^n-1}{q^r-1}.$ This gives the claimed number, and hence the claimed fraction, of close points. Also, since $\ell(0)=\bu_0$, the inequality $d([\bu_0,\bu_1],\cC^2)\le e$ would imply $d(\bu_0,\cC)\le e$, contradicting $d(\ell(0),\cC)>e$. Therefore, $d([\bu_0,\bu_1],\cC^2)>e$.

\end{proof}

\begin{remark}
 {This example presents an asymptotic lower bound on the soundness error $\epsilon_e$ in Theorem~\ref{thm:gab-half-correlated}}. In the construction above, we have $e=r$ and $m=n=(c+1)r$. For fixed $q$ and $c$, the set of good parameters has density $\frac{|H|}{|E|}=\frac{q^m-1}{(q^e-1)q^m}=\Theta(q^{-e})$, as $r\to \infty$. However, not all points on the line are $e$-close to $\cC$, and the corresponding interleaved word is not $e$-close to $\cC^2$. Hence, the correlated-agreement conclusion can fail even when the density of good parameters is $\Theta(q^{-e})$. Therefore, the soundness error $\epsilon_e$  {in Theorem~\ref{thm:gab-half-correlated}} cannot be reduced to $o(q^{-e})$.
\end{remark}

%% file: sections/ligero_rm.tex
\section{Cryptographic applications of proximity gap}

\subsection{Proximity test for interleaved rank metric codes}


Recall from Definition~\ref{def:Interleaved-code} that, for an $[n,k,d]$ $\mathbb F_{q^m}$-linear rank-metric code $\mathcal{C}$, the interleaved code $\cC^s$ consists of all $s\times n$ matrices whose rows are codewords in $\cC$. In this subsection, we mainly consider the case $\cC=\mathcal{G}(\mathcal{A},k)$ and view a purported interleaved codeword as $U^T$ where $U\in \mathbb F_{q^m}^{n\times s}$; equivalently, $U^T\in \cC^s$ if and only if every column of $U$ is a Gabidulin codeword.

We present an interactive oracle proof of proximity for such interleaved Gabidulin codewords. The test reduces the interleaved membership condition to membership in the underlying Gabidulin code by checking a random $\F_{q^m}$-linear combination of the columns of $U$. Specifically, given a purported interleaved Gabidulin codeword $U^T$ with oracle query access, an honest prover $\mathcal{P}$ can, after several rounds of interaction, convince a verifier $\mathcal{V}$ that $U^T\in \cC^s$. At the same time, any malicious prover can cause $\mathcal{V}$ to accept an invalid input $(U^*)^T$ that is far from $\cC^s$ only with negligible probability.


\begin{mybox}[]{IOPP for Interleaved Rank-Metric Codes}\label{proc:IOPP}

\textbf{Oracle:} Let $U\in \mathbb F_{q^m}^{n\times s}$ be a matrix whose $j$-th column $U[j]$ is claimed to be a Gabidulin codeword for $j\in[s]$. The verifier is given oracle access to all rows of $U$: for any $i\in [n]$, the oracle returns the $i$-th row $U(i)\in\F_{q^m}^s$ of $U$.
\\
\textbf{Proximity testing:}
\begin{enumerate}
    \item $\V\rightarrow \Po:$ a random vector $\mathbf{a}=(a_1,a_2,\dots,a_s)\in\F_{q^m}^s$.
    \item $\Po\rightarrow \V:$ $\bu=\mathbf{a}\cdot U ^T\in\F_{q^m}^{n}$.
    \item $\V$ samples $t$ positions $\{j_1,\dots,j_t\}\subset [n]$ uniformly at random, and queries $U(j_{\ell})\in \mathbb F_{q^m}^s$ for all $\ell\in [t]$.
    \item $\V$ accepts if and only if $\bu$ is a Gabidulin codeword and $\bu_{j_{\ell}}=\ba\cdot U(j_{\ell})^T$ for all $\ell\in [t]$.
\end{enumerate}
    
\end{mybox}

\begin{lemma}\label{lem:IGd/2}
 {Let $\cC=\mathcal{G}(\cA,k)\subseteq\mathbb F_{q^m}^n$ be an $[n,k,d]$ Gabidulin code with dimension $k$ and rank distance $d=n-k+1$}. Let $e$ be a positive integer with $e\leq \lfloor (d-1)/2\rfloor $ and assume $q^m > q^{e+1}$. Let $U$ be a matrix in $\mathbb F_{q^m}^{n\times s}$ and denote by $L$ the column span of $U$ over $\F_{q^m}$. 
If $d(U^T,\cC^{s}) > e,$ then there exists $\bu \in L$ such that $d(\bu,\cC) > e$.
\end{lemma}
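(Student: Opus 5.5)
We argue by contraposition. Assume every $\bu\in L$ satisfies $d(\bu,\cC)\le e$, and show $d(U^T,\cC^s)\le e$. The plan mirrors the proof of the Claim inside Theorem~\ref{thm:ca}, with the $3e<d$ uniqueness argument replaced by unique decoding, since $2e<d$.

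\textbf{Extremal element.} Choose $\hat\bu\in L$ with $t:=d(\hat\bu,\cC)$ maximal, so $t\le e$. Write $\hat\bu=\bc+\chi$ with $\bc\in\cC$ and $\rk_q(\chi)=t$. Likewise, for each column write $U[j]=\bc_j+\chi_j$ with $\bc_j\in\cC$ and $\rk_q(\chi_j)\le e$.

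\textbf{Case analysis on row spans.} If, for every $j\in[s]$, the $\F_q$-row span of $\chi_j$ is contained in that of $\chi$ (viewing these vectors as $m\times n$ matrices), then the stacked error matrix $[\chi_1,\dots,\chi_s]$ has $\F_q$-rank at most $t\le e$. Hence $d(U^T,\cC^s)\le e$, and we are done.

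Otherwise, fix some $j$ with $\rk_q\begin{pmatrix}\chi\\ \chi_j\end{pmatrix}\ge t+1$. As in Theorem~\ref{thm:ca}, after an $\F_q$-change of coordinates we may assume the first $t+1$ columns of this stacked matrix are $\F_q$-independent. Then any $\alpha$ with $\rk_q(\chi+\alpha\chi_j)\le t$ is of the form $-\sum\lambda_ix_{0,i}/\sum\lambda_ix_{j,i}$ for some nonzero $(\lambda_i)\in\F_q^{t+1}$. So there are at most $q^{t+1}-1\le q^{e+1}-1<q^m$ such bad $\alpha$, and we can pick $\alpha$ with $\rk_q(\chi+\alpha\chi_j)\ge t+1$.

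\textbf{Main obstacle: the closest codeword.} We must show that $\bc+\alpha\bc_j$ is the closest codeword to $\hat\bu+\alpha U[j]\in L$. The difficulty is that $\rk_q(\chi+\alpha\chi_j)$ may be as large as $2e$, so the triangle inequality with $e+2e<d$ is no longer available. The plan is to exploit the Gabidulin structure via Theorem~\ref{thm:gab-half-correlated}, using the line $\hat\bu+\lambda U[j]$. All its points lie in $L$, so all are $e$-close, and the fraction of close points is $1>\tau_e/q^m$ (where I expect we need $q^m>\tau_e$, or else this step must be argued more carefully). The theorem then gives $\bc_0',\bc_1'\in\cC$ with $\rk_q[\hat\bu-\bc_0',U[j]-\bc_1']\le e$.

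Unique decoding (since $2e<d$) forces $\bc_0'=\bc$ and $\bc_1'=\bc_j$, because each is the unique codeword within distance $e$. Hence
\[
\rk_q\begin{pmatrix}\chi\\ \chi_j\end{pmatrix}\le e .
\]
For the contradiction we want this rank to exceed $e$, but the case assumption only gives $\ge t+1$, which need not exceed $e$. Instead we use the pair directly: for every $\alpha$, $\hat\bu+\alpha U[j]$ is within $e$ of $\bc+\alpha\bc_j$, and by uniqueness this is its closest codeword. The distance is therefore exactly $\rk_q(\chi+\alpha\chi_j)$. For the $\alpha$ chosen above this is $\ge t+1$, contradicting the maximality of $t$.

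If one prefers to avoid Theorem~\ref{thm:gab-half-correlated}, the alternative is an iterative argument: absorb the error spaces of the columns one at a time. The key fact is that $t$ maximal forces each $\chi_j$'s row space into that of $\chi$. I expect that making the closest-codeword step rigorous is the crux, and that reducing it to the line case via Theorem~\ref{thm:gab-half-correlated} is the cleanest route.
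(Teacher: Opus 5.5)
Your argument is correct, modulo the caveat you flag yourself, but it is organized differently from the paper's.

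\emph{The paper's route.} It assumes every vector of $L$ is $e$-close. It then applies Theorem~\ref{thm:gab-half-correlated} to \emph{every} pair $\bu_{\ba},\bu_{\bb}\in L$, together with unique decoding, to show that the decoder $\ba\mapsto\bc_{\ba}$ is $\F_{q^m}$-linear. Hence $\bc_{\ba}=\ba C$, where $C\in\cC^s$ is formed from the decoded basis vectors. Next it fixes an $(e+1)$-dimensional $\F_q$-subspace $W$ of the column space of $U^T-C$, which has rank at least $e+1$ by hypothesis. Finally it counts over $\ba\in\F_{q^m}^s$: every $\ba$ with $\rk_q(\ba(U^T-C))\le e$ satisfies $\ba\bw^T=0$ for some nonzero $\bw\in W$, so there are at most $(q^{e+1}-1)q^{m(s-1)}<q^{ms}$ such $\ba$.

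\emph{Your route.} You transplant the extremal-element argument from the Claim in Theorem~\ref{thm:ca}. You replace its $3e<d$ closest-codeword step by Theorem~\ref{thm:gab-half-correlated} on the lines $\hat\bu+\lambda U[j]$ plus unique decoding, and your count is over a single scalar $\alpha$.

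\emph{What each buys.} Your route gives one template for both the $d/3$ and $d/2$ settings and uses correlated agreement only on at most $s$ lines through one point. It also shows directly, when all of $L$ is $e$-close, that every column error lies in the $\F_q$-row space of the worst error $\chi$, i.e.\ $d(U^T,\cC^s)=\max_{\bu\in L}d(\bu,\cC)$. The paper's route avoids the extremal choice and makes explicit that unique decoding is linear on $L$, which exhibits the interleaved codeword $C$ outright. The price is invoking the theorem on all pairs.

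\emph{The caveat is shared.} The paper's proof also invokes Theorem~\ref{thm:gab-half-correlated} with probability $1$, which needs $\tau_e<q^m$. This does not follow from the stated hypothesis $q^m>q^{e+1}$. For instance, $q=2$, $m=n=3$, $k=e=1$ satisfies all hypotheses of the lemma but gives $\tau_1=11>8$. So $q^m>\tau_e$ should be added as an explicit assumption; it holds automatically in the PCS regime where $\epsilon_e\le 2^{-\lambda}$.

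\emph{Drop the ``iterative alternative''.} The fact it relies on, that maximality of $t$ forces each $\chi_j$ into the row space of $\chi$, is exactly what needs the closest-codeword identification. Without Theorem~\ref{thm:gab-half-correlated}, that identification is only available when $3e<d$.
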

\begin{proof}
We prove by contradiction. Assume that every vector in $L$ is within rank distance at most $e$ from $\mathcal C$. That is, for every $\ba=(a_1,\ldots,a_s)\in {\F^s_{q^m}}$, the vector $\bu_{\ba}:=\ba U^T\in {\F^n_{q^m}}$
satisfies $d(\bu_{\ba},\mathcal C)\le e.$ Since $e<d/2$, the closest codeword of $\mathcal C$ to $\bu_{\ba}$ is unique, denoted by $\bc_{\ba}$. We claim that the map $
\ba\mapsto \bc_{\ba}$ is $\F_{q^m}$-linear, namely, for all $\ba,\bb\in \F_{q^m}^s$ and $\lambda\in \F_{q^m}$, there is 
$$\bc_{\ba+\lambda \bb}=\bc_{\ba}+\lambda\bc_{\bb}.$$

To prove this claim, note that for any $\bu_{\ba},\bu_{\bb}\in L$, by assumption $\mathrm{Pr}_{\lambda\in\F_{q^m}}[d(\bu_{\ba}+\lambda \bu_{\bb},C)\le e]=1$. Then, according to $\frac{d-1}{2n}$-proximity gap of Gabidulin codes, the unique codewords $\bc_{\ba}$ and $\bc_{\bb}$ satisfy 
\[\rk_q\begin{pmatrix}
    \bu_{\ba}-\bc_{\ba}\\ \bu_{\bb}-\bc_{\bb}
\end{pmatrix}\le e.\]
Hence $d(\bu_{\ba}+\lambda\bu_{\bb},\bc_{\ba}+\lambda\bc_{\bb})\le e$ and thus the $\bc_{\ba+\lambda \bb}=\bc_{\ba}+\lambda\bc_{\bb}$.

 {
Let $\be_1,\be_2,\dots,\be_s$ be the standard basis of $\F_{q^m}^s$. By above arguments, let $\bc_{\be_i}\in\cC$ be the corresponding codeword such that $d(\bu_{\be_i},\bc_{\be_i})\leq e$ and 
\[
C:=
\begin{pmatrix}
\bc_{\be_1} \\
\vdots\\
\bc_{\be_s}
\end{pmatrix}
\in\mathbb F_{q^m}^{s\times n}.
\]
On one hand, for every $\ba=\sum_{i=1}^sa_i\be_i\in\F_{q^m}^s$, by the above linearity, $\bc_{\ba}=\ba C$. Thus,
\begin{equation}\label{eq:contradiction1}
\rk(\ba(U^T-C))=d({\ba}U^T,{\ba}C)=d(\bu_{\ba},\bc_{\ba})\leq e.
\end{equation}
 On the other hand, since $C\in\cC^s$, we have $\rk(U^T-C)\ge d(U^T,\cC^s)\ge e+1$. Let $W$ be a subspace of the column space of $U^T-C$ of dimension $e+1$. Then for any $\ba\in\F_{q^m}^s$ such that $\rk(\ba(U^T-C))\le e$, the inner product map $\phi_\ba:\ \bw\in W \mapsto \ba\bw^T\in\F_{q^m}$ has a nontrivial kernel. For each fixed $0\ne\bw\in W$, the equation
$\ba\bw^T=0$ has exactly $q^{m(s-1)}$ solutions
$\ba\in\mathbb F_{q^m}^s$. Therefore, the number of
$\ba$ satisfying $\rk\bigl(\ba(U^T-C)\bigr)\le e$
is at most
\[
(q^{e+1}-1)q^{m(s-1)}<q^{ms},
\]
where the last inequality follows from $q^m>q^{e+1}$. Thus there exists some $\ba$ such that $\rk(\ba(U^T-C))> e$, which contradicts to Equation~\eqref{eq:contradiction1}.}

 
\end{proof}

\begin{theorem}\label{thm:IGtest}
       Let $\cC^{s}\subset\F_{q^m}^{s\times n}$ be an interleaved Gabidulin code with minimum distance $d$ and $e$ be a positive integer such that $e<d/2$. Let $\epsilon_e$ be defined as in Theorem~\ref{thm:gab-half-correlated}. Suppose $U^*\in\F_{q^m}^{n\times s}$ such that $d((U^*)^T,\cC^{s})>e$. Then for any malicious $\Po^*$ strategy, the oracle $U^*$ is rejected by 
     $\V$ except with $(1-e/n)^t+\epsilon_e$ probability.
\end{theorem}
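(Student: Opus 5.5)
The plan is to split the acceptance event according to the verifier's random vector $\ba$. Call $\ba$ \emph{bad} if $d(\ba (U^*)^T,\cC)\le e$, and \emph{good} otherwise. We will show two things: $\Pr_{\ba}[\ba \text{ bad}]\le \epsilon_e$, and for every good $\ba$ the consistency test passes with probability at most $(1-e/n)^t$, whatever the prover sends. A union bound then gives the claimed rejection guarantee.

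\emph{Bounding the bad event.} The prior results are stated for lines, so we reduce to lines by conditioning. Write $\ba=(a_1,\ba')$ and fix $\ba'$; WLOG order coordinates so the argument works on some coordinate. The vectors $\ba (U^*)^T$ then form the line $\bu_0+a_1\bu_1$, where $\bu_0=\ba'U'^T$ and $\bu_1$ is the first column of $U^*$. If for some $\ba'$ the line has more than an $\epsilon_e$ fraction of close points, Theorem~\ref{thm:gab-half-correlated} makes the whole line close.

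It is cleaner to argue as follows. Let $L$ be the column span of $U^*$, and let $B$ be the set of $\ba\in\F_{q^m}^s$ that are bad. Lemma~\ref{lem:IGd/2} gives some $\bb$ with $\bb (U^*)^T$ far from $\cC$. Partition $\F_{q^m}^s$ into the lines $\{\ba_0+\lambda\bb:\lambda\in\F_{q^m}\}$ with direction $\bb$. On each such line the words $\ba_0(U^*)^T+\lambda \bb(U^*)^T$ form an affine line $\ell_{\bu_0,\bu_1}$ with $\bu_1=\bb(U^*)^T$ far from $\cC$. If more than an $\epsilon_e$ fraction of its points were $e$-close, Theorem~\ref{thm:gab-half-correlated} would give $d([\bu_0,\bu_1],\cC^2)\le e$, hence $d(\bu_1,\cC)\le e$, a contradiction. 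So each line in the partition contributes at most an $\epsilon_e$ fraction of bad $\ba$, and averaging over the partition gives $\Pr[\ba\in B]\le\epsilon_e$. (If $\bb$ is not a valid direction because the word is zero, that case is trivially close, so $\bb\neq 0$.)

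\emph{Good $\ba$.} Fix a good $\ba$ and let $\bw=\ba(U^*)^T$, so $d(\bw,\cC)>e$. The verifier's first check forces the prover's message $\bu$ to lie in $\cC$; otherwise the verifier rejects. Then $\rk_q(\bu-\bw)>e$.

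The consistency check passes only if every queried position $j_\ell$ lies in the agreement set $Z=\{j:u_j=w_j\}$. Here the rank metric must be related to the coordinate count, and this is the main obstacle. A vector of rank $>e$ has more than $e$ nonzero coordinates, since its rank is at most its Hamming weight. So $|Z|<n-e$, and with $t$ independent uniform samples (or sampling without replacement, which only lowers the probability) the pass probability is at most $(1-e/n)^t$. Combining the two cases, for any $\Po^*$ the verifier accepts with probability at most $(1-e/n)^t+\epsilon_e$.
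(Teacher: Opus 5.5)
Your proposal is correct and follows the paper's proof. Lemma~\ref{lem:IGd/2} supplies a far direction, Theorem~\ref{thm:gab-half-correlated} caps the close points on each parallel line at an $\epsilon_e$ fraction, and the bound ``Hamming weight $\ge$ rank'' handles the query phase, giving acceptance probability at most $(1-e/n)^t+\epsilon_e$. The only difference is cosmetic: you partition the challenge space $\F_{q^m}^s$ into lines parallel to $\bb$, whereas the paper decomposes the column span as $L^*=\F_{q^m}\bu\oplus W$; your version avoids the paper's implicit use of the fact that $\ba(U^*)^T$ is uniform on $L^*$ (and your abandoned first attempt via the first column of $U^*$ should simply be dropped).
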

\begin{proof}
Let $L^*$ denote the $\mathbb{F}_{q^m}$-column span of $U^*$. Since $d((U^*)^T,\cC^{s})>e$, Lemma~\ref{lem:IGd/2} guarantees the existence of a vector $\bu \in L^*$ such that $d(\bu,\cC) > e$. 

Moreover, $L^*$ admits a direct sum decomposition of the form $
L^* = \mathbb{F}_{q^m}\bu \oplus W
$ for some subspace $W \subseteq L^*$. Hence any random vector $\bu^* \in L^*$ can be uniquely expressed as $
\bu^* = \alpha \bu + \bw$, $\alpha \in \mathbb{F}_{q^m}, \bw \in W.$ By  {Theorem~\ref{thm:gab-half-correlated}}, for any fixed choice of $\bw$, the probability that $\alpha \bu + \bw$ lies within distance $e$ of $\cC$ is upper bounded by $\epsilon_e$. Consequently, 
$$
\Pr_{\bu^* \in L^*}\!\big[d(\bu^*, \cC) \leq e\big] \leq  {\epsilon_e}.$$

For $\bu^*=\mathbf{\lambda}(U^*)^T$, from above, we have $d(\bu^*,\cC)>e$ with probability $1- {\epsilon_e}$. In this case, suppose $\Po$ responds a forged codeword $\bc$ in Step 2 of Proximity test. We have the Hamming distance between $\bu^*$ and $\bc$ satisfying 
$$d_H(\bu^*,\bc)\geq d(\bu^*,\bc)>e.$$ 
Under this case,
    \[
    \begin{split}
    \Pr[\V\ \text{accept}\ U^*]&\leq \Pr\left[\V\ \text{accept}\ U^*\mid d(\bu^*,\cC)>e\right]\cdot\Pr[d(\bu^*,\cC)>e]\\
    &~~+\Pr[\V\ \text{accept}\ U^*\mid d(\bu^*,\cC)\leq e]\cdot\Pr[d(\bu^*,\cC)\leq e]\\
    &\leq \frac{\binom{n-e-1}{t}}{\binom{n}{t}}+ {\epsilon_e}\leq (1-e/n)^t+ {\epsilon_e} \ .  
    \end{split} 
    \]
\end{proof}

\subsection{Ligero-style Rank Metric PCS}\label{sec:lpcs}
 
 To adapt the Ligero-based PCS from the Hamming metric
 to the rank metric, it is necessary to decompose a $q$-linearized polynomial $f(x)$ in a tensor product form. 
 
 Assume $k\leq m$ and $k$ is a square $k=\kappa^2$. For any $f(x)=\sum_{i=0}^{k-1}f_ix^{q^i}\in\cL(q,m,k)$. Note that every integer $\ell$ satisfying $0\leq \ell\leq k-1$ can be written uniquely as $\ell=i\kappa+j$ for $i,j\in\kappa$. We first define $a_{\ell}=a_{i\kappa+j}=f_{i\kappa+j}^{q^{m-i\kappa}}$ and a $\kappa\times\kappa$ matrix as follows:
    \begin{equation}\label{eq:coeff}
        \mF=\begin{pmatrix}
        a_0 &a_1 &\dots a_{\kappa-1}\\
         a_\kappa &a_{\kappa+1} &\dots a_{2\kappa-1}\\
          \vdots &\vdots &\dots \vdots\\
           a_{\kappa(\kappa-1)} &a_{\kappa(\kappa-1)+1} &\dots a_{k-1}\\
    \end{pmatrix}=\begin{pmatrix}
        f_0 &f_1 &\dots f_{\kappa-1}\\
         f_\kappa^{q^{m-\kappa}} &f_{\kappa+1}^{q^{m-\kappa}} &\dots f_{2\kappa-1}^{q^{m-\kappa}}\\
          \vdots &\vdots &\dots \vdots\\
           f_{\kappa(\kappa-1)}^{q^{m-(\kappa-1)\kappa}} &f_{\kappa(\kappa-1)+1}^{q^{m-(\kappa-1)\kappa}} &\dots f_{k-1}^{q^{m-(\kappa-1)\kappa}}\\
    \end{pmatrix}.
    \end{equation}
    Define two vectors
    \[\textbf{x}_{r}:=(x,x^{q},x^{q^2},\dots,x^{q^{\kappa-1}}),\ \textbf{x}_{l}:=(x,x^{q^{\kappa}},x^{q^{2\kappa},\dots,x^{q^{(\kappa-1)\kappa}}}).
    \] 
 Moreover, we define a composite-multiplication of $\textbf{x}_l$ with vectors in $\F_{q^m}^\kappa$ as follows: 
    \[\textbf{x}_l\odot (r_0,r_1,\dots,r_{\kappa-1})^T:=\sum_{i=0}^{\kappa-1}(r_i)^{q^{i\kappa}}.\]
   Then we have the following symbolic product decomposition for $q$-linearized polynomial.
    \begin{lemma}\label{lem:comp}
        For any $q$-linearized polynomial $f(x)\in\mL(q,m,k)$,  
        \[f(x)=\textbf{x}_l\odot(\mF\cdot\textbf{x}_r^T),\]
        where $\mF$ is the coefficient matrix of $f(x)$ defined as in Equation~\eqref{eq:coeff}. Furthermore, for any $\alpha\in\F_{q^m}$, $f(\alpha)=\textbf{x}_l\odot \mF\cdot(\alpha,\alpha^q,\dots,\alpha^{q^{\kappa-1}})^T$.
    \end{lemma}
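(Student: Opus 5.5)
The plan is a direct computation: expand the right-hand side using the definition of $\odot$ and of the matrix $\mF$ in Equation~\eqref{eq:coeff}, and check that it reproduces every monomial $f_\ell x^{q^\ell}$ exactly once.

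First I will compute the inner product $\mF\cdot\textbf{x}_r^T$. Its $i$-th entry, for $0\le i\le \kappa-1$, is
\[
r_i(x)=\sum_{j=0}^{\kappa-1} a_{i\kappa+j}\,x^{q^j}=\sum_{j=0}^{\kappa-1} f_{i\kappa+j}^{q^{m-i\kappa}}\,x^{q^j}.
\]
Next I will apply the definition of $\odot$, which raises the $i$-th entry to the power $q^{i\kappa}$ and sums. Since Frobenius is additive in characteristic $p$, we get
\[
r_i(x)^{q^{i\kappa}}=\sum_{j=0}^{\kappa-1} f_{i\kappa+j}^{q^{m}}\,x^{q^{i\kappa+j}}.
\]

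The one point needing care is the identity $f^{q^m}=f$ for $f\in\F_{q^m}$. This is precisely why the coefficients were pre-twisted by $q^{m-i\kappa}$. The exponent $m-i\kappa$ is nonnegative because $i\kappa\le(\kappa-1)\kappa<k\le m$. Using this identity, $r_i(x)^{q^{i\kappa}}=\sum_j f_{i\kappa+j}x^{q^{i\kappa+j}}$.

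Summing over $i$, each $\ell\in\{0,\dots,k-1\}$ is written uniquely as $\ell=i\kappa+j$ with $0\le i,j\le\kappa-1$. Hence
\[
\textbf{x}_l\odot(\mF\cdot\textbf{x}_r^T)=\sum_{\ell=0}^{k-1}f_\ell x^{q^\ell}=f(x).
\]

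For the evaluation statement, I will substitute $x=\alpha$ throughout. Each step above is a polynomial identity in $x$, so it remains valid after specialization. In particular, raising to $q^{i\kappa}$ commutes with evaluating at $\alpha$. This gives $f(\alpha)=\textbf{x}_l\odot\bigl(\mF\cdot(\alpha,\alpha^q,\dots,\alpha^{q^{\kappa-1}})^T\bigr)$.

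I do not expect a real obstacle here. The only subtle point is keeping track of the Frobenius twist and confirming that the exponents are nonnegative.
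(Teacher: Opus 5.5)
Your proposal is correct and follows the paper's proof essentially verbatim. Like the paper, you expand $\mF\cdot\textbf{x}_r^T$ row by row, apply the $q^{i\kappa}$-Frobenius from the definition of $\odot$, and cancel the pre-twist using $f_{i\kappa+j}^{q^m}=f_{i\kappa+j}$. Your extra remarks, that $m-i\kappa\ge 0$ and that evaluation at $\alpha$ commutes with Frobenius, are correct and fill in details the paper leaves implicit.
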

    \begin{proof}
    According to the definition of composite-multiplication $\odot$, we have
    \[
    \begin{split}
    \textbf{x}_l\odot (\mF\cdot\textbf{x}_r^T)&=\textbf{x}_l\odot\left(\sum_{j=0}^{\kappa-1}a_{j}x^{q^j},\sum_{j=0}^{\kappa-1}a_{\kappa+j}x^{q^j},\dots,\sum_{j=0}^{\kappa-1}a_{\kappa(\kappa-1)+j}x^{q^j}\right)^T\\
    &=\sum_{i=0}^{\kappa-1}\left(\sum_{j=0}^{\kappa-1} a_{i\kappa+j}x^{q^{j}}\right)^{q^{i\kappa}}=\sum_{i=0}^{\kappa-1}\sum_{j=0}^{\kappa-1}a_{i\kappa+j}^{q^{i\kappa}}x^{q^{i\kappa+j}}\\
&=\sum_{i=0}^{\kappa-1}\sum_{j=0}^{\kappa-1}f_{i\kappa+j}^{q^{m-i\kappa+i\kappa}}x^{q^{i\kappa+j}}=f(x),
    \end{split}
    \]
    where the last $``="$ follows from each coefficient $f^{q^m}_{i\kappa+j}=f_{i\kappa+j}$ in $\F_{q^m}$. Furthermore, for any $\alpha\in\F_{q^m}$, by the same analysis, we have 
    \[\textbf{x}_l\odot \left(\mF\cdot\left(\alpha,\alpha^q,\dots,\alpha^{q^{\kappa-1}}\right)^T\right)=\sum_{i=0}^{\kappa-1}\sum_{j=0}^{\kappa-1}f_{i\kappa+j}\alpha^{q^{i\kappa+j}}=f(\alpha).  
    \]
    \end{proof}
    
\paragraph{Public parameter generation.} 
The public parameters specify the finite fields, the code parameters, the evaluation domain, and the encoding algorithm, together with the hash function used for the Merkle commitments. Let $\lambda$ be the security parameter and let
\[
    k=\kappa^2\leq m,\quad
    \kappa\leq n\leq m,\quad
    n\mid m.
\]
Set $d:=n-\kappa+1$, $\rho:=\frac{\kappa}{n}\in(0,1),$ where $\rho$ is a fixed constant. Choose $0\leq e\leq\lfloor(d-1)/2\rfloor$ and select $m$ and $n$ such that
\[
    \epsilon_e
    \leq \frac{10q^{n-1}}{q^m}
    =10q^{n-1-m}
    \leq 2^{-\lambda}.
\]
For example, the last inequality holds whenever $m\geq n-1+
    \frac{\lambda+\log_2 10}{\log_2 q}.$

Since $n\mid m$, the field $\F_{q^m}$ contains a unique subfield $\F_{q^n}$ of $q^n$ elements. Let $\cA:=
    \bigl(\gamma,\gamma^q,\ldots,\gamma^{q^{n-1}}\bigr)
    \subseteq\F_{q^n}$ be a normal basis of $\F_{q^n}$. Hence, the elements of $\cA$ are $\F_q$-linearly independent. Define the $[n,\kappa,d]$ Gabidulin code $\cC:=\mathcal G(\cA,\kappa)\subseteq\F_{q^m}^n.$

Using the linearized-monomial basis
$\{x,x^q,\ldots,x^{q^{\kappa-1}}\}$, we identify
$\mathcal L(q,m,\kappa)$ with $\F_{q^m}^{\kappa}$ via
\[
    f(x)=\sum_{i=0}^{\kappa-1}f_i x^{q^i}
    \longmapsto
    \mathbf f=(f_0,\ldots,f_{\kappa-1}).
\]
For $\mathbf f\in\F_{q^m}^{\kappa}$, define
\[
    \mathsf{Enc}(\mathbf f)
    :=
    \bigl(f(\gamma),f(\gamma^q),\ldots,
          f(\gamma^{q^{n-1}})\bigr)\in\cC.
\]
After padding $\mathbf f$ with $n-\kappa$ zeros, this encoding map can be evaluated using the $q$-transform associated with the normal basis $\cA$. The algorithm of~\cite{PW18} computes this transform in $O(n\log^2 n\log\log n)$ operations over $\F_{q^m}$, which is quasi-linear in $n$.



\paragraph{Commit phase.} Assume $f(x)\in\cL(q,m,k)$ and $\mF=(a_{i\kappa+j})_{i,j\in[0,\kappa-1]}$ be the coefficient matrix defined as in Equation~\eqref{eq:coeff}. Let $U:=\enc(\mF)\in\F_{q^m}^{n\times \kappa}$ by encoding each column $\mF[j]$ into a Gabidulin codeword for $j\in[\kappa]$. In the IOP model, $U$ serves as the oracle representation of $f$, to which the verifier has query access.

To obtain a succinct cryptographic commitment of $f$, the prover arranges the $n\kappa$ entries of $U$ in a fixed order and uses them as the leaves of a Merkle tree. The prover sends only the Merkle root as its $\mathsf{com}$. Subsequent queries to entries or rows of $U$ are answered together with authenticated Merkle openings.

\paragraph{Opening phase.} Given a commitment $\mathsf{com}$, to open it to a polynomial $f\in\mathcal L(q,m,k)$, the prover sends
$f$ and a matrix $\widehat U\in\F_{q^m}^{n\times \kappa}$. The
verifier constructs the coefficient matrix $\mathcal F$ and the corresponding encoded matrix $\enc(\mathcal F)$. It accepts if and only if $\mathsf{MerkleRoot}
   \bigl(\widehat U\bigr)
   =\mathsf{com}$
and $d\bigl(\widehat U^T,\enc(\mathcal F)\bigr)< d/2$, where $d=n-\kappa+1$ is the rank distance of $\cC$. Since the matrix $\widehat U^T$ is within rank distance $<d/2$ of $\cC^{\kappa}$. Hence, whenever a valid opening exists, it determines a unique polynomial $f$.

Since the verifier in opening phase reads the entire $f$ and the matrix $\widehat U$, it is not succinct. The following testing phase instead provides a succinct probabilistic test that the committed matrix admits such a unique opening.

\paragraph{Testing phase.} Upon receiving a the $\mathsf{com}$ that is a Merkle root of some matrix $U$, the verifier $\V $ will interact with $\Po$ to test that $U^T$ is $\frac{d-1}{2n}$-close to the interleaved Gabidulin code $\cC^{\kappa}$ by calling IOPP~\ref{proc:IOPP}.

\paragraph{Evaluation phase.} After the commitment $\mathsf{com}$ has been fixed, the evaluation point $\alpha\in\F_{q^m}$ is chosen by the verifier and sent to the prover. The prover then returns a claimed evaluation $\beta\in\F_{q^m}$; for an honest prover, $\beta=f(\alpha)$. Once $\alpha$ and $\beta$ have been specified, the tuple $(\mathsf{com},\alpha,\beta)$ constitutes the public input to $\mathsf{Eval}$, consistently with Definition~\ref{def:pc}.

Thus with public input $(\mathsf{com},\alpha,\beta)$, the evaluation phase verifies that $\beta=f(\alpha)$ for the polynomial $f$ determined by the unique
interleaved codeword close to the committed matrix. This phase is identical to the testing phase, except that the vector $\boldsymbol{\lambda}$ sent by $\V$ is replaced with $\bb=(\alpha,\alpha^{q},\ldots,\alpha^{q^{\kappa-1}})$. Let $\bv'=\bb \mF^T$. If the prover is honest, then $\bv'$ satisfies $(x,x^{q^{\kappa}},\dots,x^{q^{\kappa(\kappa-1)}})\odot\boldsymbol{v}'^T=f(\alpha)$ by Lemma~\ref{lem:comp}. 

Overall, we have the following $q$-linearized PCS:
    
\begin{mybox}[]{$\LRM$-PCS: Rank-Metric PCS for $q$-linearized Polynomials}\label{proc:lcom}
 \begin{itemize}
     \item[--] \textbf{Commit phase.} Assume $f(x)\in\cL(q,m,k)$ and $\mF=(a_{i\kappa+j})_{i,j\in[0,\kappa-1]}$ be the coefficient matrix defined as in Equation~\eqref{eq:coeff}. Let $U=(\enc(\mF[1]),\enc(\mF[2]),\dots,\enc(\mF[\kappa]))\in\F_{q^m}^{n\times \kappa}$ be the matrix obtained by encoding each column $\mF[j]$ into a Gabidulin codeword. The prover computes the $\mathsf{com}$ as the root the Merkle-tree with leaves contain all ${n\kappa}$ entries of $U$.
     \item[--] \textbf{Testing phase.} Upon receiving the commitment, $\Po $ and $\V $ run $\mathrm{IOPP}$~\ref{proc:IOPP} to prove that $ U^T\in\cC^{\kappa}$. In Step 2, instead of sending $\bu=\boldsymbol{\lambda}U^T $ directly, $\Po$ sends $\bv=\boldsymbol{\lambda}\mF^T $. The verifier then encodes $\bv$ to obtain the codeword, ensuring that the queried linear combination is a valid codeword. Specifically, 
     \begin{enumerate}
        \item $\V$ selects $\boldsymbol{\lambda}=(\lambda_0,\dots,\lambda_{\kappa-1})\leftarrow\F_{q^m}^{\kappa}$, and sends it to $\Po$.
        \item $\Po$ responds $\boldsymbol{v}:=\boldsymbol{\lambda}\cdot\mF^{\mathrm{T}}\in\F_{q^m}^{\kappa}$. 
        \item $\V$ encodes $\boldsymbol{v}$ via invoking $\enc$, and obtains a codeword $\boldsymbol{u}=(u_1,u_2,\dots,u_n)\in\cC$. Then, $\V$ selects random $J\subset [n]$, with $\lvert J\rvert=t=\Theta(\lambda)$. Here $\lambda$ is the security parameter. Then $\V$ queries the rows $U(j)$ of $U$ for all $j\in J$, corresponding to $t\kappa$ field elements. $\V$ verifies their values with the committed Merkle root using authenticated Merkle openings.
        \item $\V$ checks $u_{j}\isequal \left< U(j),\boldsymbol{\lambda}\right>$ for all $j\in J$.
    \end{enumerate}

\item[--]\textbf{Evaluation phase.} With a public input $(\mathsf{com},\alpha,\beta)$, 
    \begin{enumerate}
        \item $\Po$ computes $\boldsymbol{v}':=\boldsymbol{b}\cdot\mF^{\mathrm{T}}\in\F_{q^m}^{\kappa\times 1}$ and sends it to $\V$, where $\boldsymbol{b}:=(\alpha,\alpha^{q},\ldots,\alpha^{q^{\kappa-1}})$. 
        \item $\V$ encodes $\boldsymbol{v}'$ via invoking $\enc$, and obtains a codeword $\boldsymbol{u}'=(u_1',\dots,u_n')\in\cC$. Then, $\V$ selects random $J'\subset [n]$, with $\lvert J'\rvert=t=\Theta(\lambda)$. Then $\V$ queries the rows $U(j)$ of $U$ for all $j\in J'$. $\V$ verifies their values with the committed Merkle root using authenticated Merkle openings.
        \item $\V$ checks $u'_{j}\isequal \left<U(j),\boldsymbol{b}\right>$ for all $j\in J'$, where $\boldsymbol{b}=(\alpha,\alpha^q,\ldots,\alpha^{q^{\kappa-1}})$.
        \item If $\Po$ passes all checks above, $\V$ computes $f(\alpha):=(x,x^{q^{\kappa}},\dots,x^{q^{\kappa(\kappa-1)}})\odot\boldsymbol{v}'^T$. If $f(\alpha)=\beta$, $\V$ accepts and outputs $1$. Otherwise, $\V$ outputs $0$.
    \end{enumerate}
\end{itemize}
\end{mybox}

\begin{lemma}\label{thm:LigeroSoundness2}
Assume $k$ is a square $k=\kappa^2$ and $k\leq m$. Let $n$ be an integer such that $n\mid m$ and $\rho=\kappa/n\in (0,1]$ is a fixed constant. Let $d=n-\kappa+1$. For any positive integer $e\leq (d-1)/2$, let $\epsilon_e$ be defined as in Theorem~\ref{thm:gab-half-correlated}. The $\LRM$-PCS \ref{proc:lcom} is a polynomial commitment scheme for $q$-linearized polynomials of $q$-degree less than $k$. Its binding soundness upper bounded by $(1-\frac{e}{n})^t+\epsilon_e.$
\end{lemma}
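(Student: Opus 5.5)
The plan is to verify completeness and binding directly, and then reduce the soundness of the testing and evaluation phases to Theorem~\ref{thm:IGtest}.

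\textbf{Completeness.} This is routine. For an honest prover, $U=\enc(\mF)$ has every column in $\cC$. By $\F_{q^m}$-linearity of $\enc$, for any $\boldsymbol{\lambda}$ we have $\enc(\boldsymbol{\lambda}\mF^T)=\boldsymbol{\lambda}U^T$, so every spot check $u_j=\langle U(j),\boldsymbol{\lambda}\rangle$ passes. The same holds with $\boldsymbol{\lambda}$ replaced by $\bb=(\alpha,\ldots,\alpha^{q^{\kappa-1}})$. Lemma~\ref{lem:comp} then gives $\mathbf{x}_l\odot\bv'^T=f(\alpha)=\beta$.

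\textbf{Binding.} Suppose $\mathsf{Open}$ accepts $f_0\neq f_1$ for the same $\mathsf{com}$. Collision resistance of the Merkle hash forces the two opened matrices to coincide with negligible exception. Then $\widehat U^T$ lies within rank distance $<d/2$ of both $\enc(\mF_0)$ and $\enc(\mF_1)$. These are distinct codewords of $\cC^\kappa$, since $\enc$ is injective and the coefficient map $f\mapsto\mF$ is a bijection (Frobenius powers are bijective). By the triangle inequality their distance would be $<d$, contradicting the minimum distance of the interleaved code.

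\textbf{Soundness bound.} Let $U^*$ be the matrix committed under $\mathsf{com}$, well defined except with negligible probability. There are two cases.
\begin{enumerate}
\item If $d((U^*)^T,\cC^\kappa)>e$, the testing phase is exactly IOPP~\ref{proc:IOPP} with $s=\kappa$. The only change is that the prover sends $\bv$ and the verifier itself forms $\bu=\enc(\bv)$, so the response is automatically a codeword. Theorem~\ref{thm:IGtest} then gives acceptance probability at most $(1-e/n)^t+\epsilon_e$.
\item If $d((U^*)^T,\cC^\kappa)\le e<d/2$, there is a unique nearby interleaved codeword $\enc(\mF^*)$, hence a unique $f^*$, which the extractor outputs. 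Let $\bv'$ be the prover's response in the evaluation phase. Put $\mathbf{e}:=\bb (U^*)^T-\bb\,\enc(\mF^*)^T$, which has rank, and hence Hamming weight, controlled by the error matrix. If $\bv'\neq\bb\mF^{*T}$, then $\enc(\bv')$ and $\bb\,\enc(\mF^*)^T$ are distinct codewords and differ in Hamming weight at least $d$, i.e.\ on at least $d$ coordinates. The spot checks compare $\enc(\bv')$ with $\bb(U^*)^T=\bb\,\enc(\mF^*)^T+\mathbf{e}$. The coordinates where they agree lie within the at most $n-d$ agreement positions of the two codewords plus the support of $\mathbf{e}$.
\end{enumerate}

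The main obstacle is the second case. Rank distance $\le e$ does not bound Hamming support by $e$; the support of $\mathbf{e}$ could a priori be all of $[n]$. I would handle this in one of two ways. The first is to argue in the ``unique opening'' sense used by the paper: soundness is measured against the $f^*$ defined by the closest codeword, and a rank-distance lower bound $d-e>e$ between $\enc(\bv')$ and $\bb(U^*)^T$ is converted to a Hamming-distance lower bound via $d_H\ge d_{\rk}$. This exact conversion is the one used in Theorem~\ref{thm:IGtest}, so each random query catches the inconsistency with probability $\ge e/n$, giving again $(1-e/n)^t$. Failing that, I would restate the bound for that phase with $e$ replaced by $d-e-1\ge e$, which only improves it.

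Combining the two cases by a union bound over the testing and evaluation phases yields total cheating probability at most $(1-e/n)^t+\epsilon_e$, up to negligible Merkle collision terms. This is the claimed binding soundness.
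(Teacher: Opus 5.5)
Your proposal is correct. Its core matches the paper's proof exactly: if $d((U^*)^T,\cC^\kappa)>e$, the testing phase is caught by Theorem~\ref{thm:IGtest}; otherwise $e<d/2$ gives a unique interleaved codeword and hence a unique polynomial. The paper's proof consists of nothing more than ``completeness by design'' plus this contrapositive of Theorem~\ref{thm:IGtest}.

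Where you differ is scope. You prove binding in the formal sense of Definition~\ref{def:pc} through the $\mathsf{Open}$ algorithm: two accepted openings give either a Merkle collision or two codewords of $\cC^\kappa$ at distance $<d$. The paper never does this; its ``binding'' only means that the committed matrix decodes to a unique $g$. Your Open-based argument is purely computational and does not need the $(1-e/n)^t+\epsilon_e$ bound at all. You also fold in the evaluation phase, which the paper places after the lemma as ``evaluation binding''. Your version therefore makes $(1-e/n)^t+\epsilon_e$ an end-to-end bound on a cheating $\mathsf{Eval}$, whereas the paper's lemma is a statement about the testing phase only.

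The ``main obstacle'' you raise in case 2 is not real, and your fallback is unnecessary. Since $\rk_q(\bb M)\le \rk_q(M)$ (any $\F_q$-relation among the columns of $M$ survives multiplication by $\bb$), the rank triangle inequality gives
\[
\rk_q\bigl(\enc(\bv')-\bb(U^*)^T\bigr)\ge d-\rk_q(\mathbf e)\ge d-e.
\]
Applying $d_H\ge d_{\rk}$ to this difference directly bounds the agreement set by $n-(d-e)$. You never need the Hamming support of $\mathbf e$. This is precisely the paper's argument, which obtains $(1-\frac{d}{2n})^t<(1-\frac{e}{n})^t$.

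Finally, combining the two cases is a case split on the committed matrix, so you take the maximum of the two bounds, not a union bound.
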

\begin{proof}
The completeness holds by design. 

Assume $\Po$ passes all the checks in the Testing phase with probability at least $(1-e/n)^t+\epsilon_e$, by  {Theorem}~\ref{thm:IGtest}, we have $d(U^T,\cC^{\kappa})\leq e<d/2$. Let $C\in\cC^{\kappa}$ be the unique interleaved codeword such that $d(U^T,C)\leq e.$ By interpolating $\kappa$ polynomials from all rows of $C$ and applying the inverse coefficient transform in Equation~\eqref{eq:coeff}, we get a matrix $G=(g_{i\kappa+j})_{i,j=0,1,\kappa-1}$, which corresponds to a $q$-linearized polynomials $g(x)$ of $q$-degree less than $k$. Thus the commitment is bound to a unique $q$-linearized polynomial $g(x)$. 
\end{proof}

Moreover, our PCS also satisfies the evaluation binding. That is given $(\mathsf{com},\alpha,\beta)$, if $\V$ also outputs $1$ in the evaluation phase with non-negligible probability, then  $\beta=g(\alpha)$, where $g$ is the bounded polynomial of $\mathsf{com}$ in the Testing phase. More specifically,  {let $C\in\cC^{\kappa}$ be the encoded interleaved codeword of $g$ and $\bc=\bb\cdot C$}. If a prover can pass all checks in the evaluation phase with probability at least $(1-\frac{e}{n})^t$, then $\bu'=\enc(\bv')=\bc$. Otherwise, suppose $\bu'\neq\bc$. Since  {
$$\rk(\bu'-\bc)=\rk(\bu'-\bb U^T+\bb U^T-\bb C)\leq \rk(\bu'-\bb U^T)+\rk(\bb(U^T-C)) ,$$}
we have that $\rk(\bu'-\bb U^T)\geq d-\rk(\bb(U^T-C))>d/2.$ Since the verifier selects the set of rows $J'$ randomly, the probability of passing the evaluation phase will not exceed $(1-d/2n)^t<(1-\frac{e}{n})^t$. 

Thus, in Step 5 of the evaluation phase, the injectivity of the encoding map implies that
\[(x,x^{q^{\kappa}},\dots,x^{q^{\kappa(\kappa-1)}})\odot\boldsymbol{v}'=g(\alpha).  
\]

\paragraph{Knowledge soundness:} We adapt the extractability-via-efficient-decoding argument of Brakedown~\cite[Section~4.1]{brakedown}. Both constructions use a Merkle tree to commit to a matrix oracle, and Gabidulin codes admit polynomial-time unique-decoding algorithms. 

Let $\Po^*$ be a PPT prover that causes the $\LRM$-PCS verifier to accept with non-negligible probability. By the extractability of the underlying IOP-to-argument compiler~\cite{BCS16}, except with negligible probability, there exists an efficient
straight-line extractor that recovers an oracle $\widehat U\in\mathbb F_{q^m}^{n\times\kappa}$ consistent with the Merkle root sent by $\Po^*$. The compiler also induces an IOP prover strategy $\widetilde{\Po}$ whose initial oracle is $\widehat U^T$ and whose acceptance probability is non-negligible.

By our soundness analysis of IOPP~\ref{proc:IOPP}, acceptance in the testing phase implies that $d(\widehat U^T,\cC^{\kappa})\leq e<d/2$. Let $C\in \cC^{\kappa}$ be the unique codeword such that $\rk_q(U-C^T)\leq e$. Thus, every column of $U-C^T$ has rank at most $e$, i.e., every column of $\widehat U$ lies within the unique-decoding radius of $\mathcal C$. Applying a polynomial-time Gabidulin decoder~\cite{WAS13} to each column of $\widehat U$ recovers the unique interleaved Gabiduli codeword $C$. Then, applying Moore interpolation to each column of $C^T$ and applying the inverse of the coefficient transformation as in $\eqref{eq:coeff}$, we can obtain the coefficients of the $q$-linearized polynomial $g\in\mL(q,m,k)$ to which the prover is bound. According to the binding of evaluation, for an accepted $(\mathsf{com},\alpha,\beta)$ in the evaluation phase, there must be $\beta=g(\alpha)$. Thus $((\mathsf{com},\alpha,\beta),g)\in \mathcal{R}_{\mathsf{Eval}}(\mathsf{pp})$.


\paragraph{Proof size.} The protocol consists of test and evaluation phases. In each phase, the prover sends $O(t\kappa)$ field elements, together with $O(t\kappa\log k)$ hash values for Merkle authentication, where $t$ is the number of queries such $(1-e/n)^t\leq \mathrm{negl}(\lambda)$ is negligible. Since $\frac{d-1}{2n}=\frac{1}{2}(1-\rho)$ is also a constant, we can choose $t=\Theta(\lambda)$. Thus the total proof size is $\tilde{O}(\lambda\sqrt{k})$ field elements in $\F_{q^m}$.

\paragraph{Prover time.} The prover time is dominated by the following two steps: 
\begin{itemize}
    \item[--] Construction of the matrix $\mF$: this step involves at most $k=\kappa^2$ computations of Frobenius maps. Under a normal basis $\{\gamma,\gamma^q,\dots,\gamma^{q^{m-1}}\}$ of $\F_{q^m}$, any element $\alpha\in\F_{q^m}$ can be represented as $\alpha=\sum_{i=0}^{m-1}a_i\gamma^{q^i}$. Then the coefficients of $\alpha^{q^j}$ is a cyclic shift of $(a_0,a_1,\dots,a_{m-1})$ since $\alpha^{q^j}=\sum_{i=0}^{m-1}a_i\gamma^{q^{i+j}}$. We count a cyclic shift of length $m$ as one operation over $\F_{q^m}$. Thus, the cost of this step is $O(k)$.
\item[--] Encoding: the computation of $C$ from $\mF$ invokes the encoding algorithm $\kappa$ times. Following \cite{PW18}, by taking $\mathcal{A}$ to be a normal basis of $\mathbb{F}_{q^n}$ as a subfield of $\F_{q^m}$, the $q$-transform and its inverse can be interpreted as evaluation and interpolation of $q$-linearized polynomial. These operations can be performed using  $O(n \log^2 n \log\log n)=\tilde{O}(n)$ operations over $\mathbb{F}_{q^m}$. Thus this step costs $\tilde{O}(n\kappa)=\tilde{O}(k)$ operations in $\mathbb{F}_{q^m}$. 
\end{itemize} 
For a constant rate $\rho=\kappa/n\in(0,1)$, we have that the total prover complexity is $O(k)+\tilde{O}(\kappa n)=\tilde{O}(k)$ operations over $\F_{q^m}$.


\paragraph{Verifier time.} The verifier time is dominated by encoding of $\bv$ and $\bv'$, and $2t$ inner products of length $\kappa$. Thus the total cost is $\widetilde{O}(n)+O(t\kappa)=\tilde{O}(\lambda \sqrt{k})$ operations.

\smallskip Overall, we have the following theorem:

\begin{theorem}\label{thm:ligerocom}
Let $\lambda$ be the security parameter, and let $\kappa,n,m$ be positive integers satisfying
\[
k=\kappa^2\leq m, \quad \kappa\leq n, \quad n\mid m.
\]
Suppose that the code rate $\rho:=\frac{\kappa}{n}\in(0,1)$ is constant. Let $d=n-\kappa+1$ be the minimum rank distance of the underlying Gabidulin code, and choose an integer $0< e\leq\left\lfloor\frac{d-1}{2}\right\rfloor.$ 

Assume that the hash function used in the Merkle commitment is collision resistant and the proximity error $\epsilon_e$ is negligible in $\lambda$. Then the protocol described in Procedure~\ref{proc:lcom} is an extractable polynomial commitment scheme for the space $\mL(q,m,k)$ of $q$-linearized polynomials of $q$-degree less than $k$. In particular, the scheme has completeness and satisfies binding and knowledge soundness.

 {For $t=\Theta(\lambda)$ independent query repetitions, the prover time is $\widetilde O(\lambda k)$ and the verifier time is $\widetilde O(\lambda\sqrt{k})$, both are measured in operations over $\mathbb F_{q^m}$. The proof size is $\tilde{O}(\lambda\sqrt{k})$ elements in $\F_{q^m}$}.
\end{theorem}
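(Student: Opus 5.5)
The plan is to assemble Theorem~\ref{thm:ligerocom} from the pieces already established: completeness by design, binding from Lemma~\ref{thm:LigeroSoundness2}, evaluation binding and knowledge soundness from the subsequent paragraphs, and efficiency from the cost accounting. The work is mostly in checking that the parameter choices make every error term negligible in $\lambda$ and that the component results apply under the stated hypotheses.

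\textbf{Completeness.} For an honest prover $U^T\in\cC^{\kappa}$. In the testing phase, $\enc(\boldsymbol{\lambda}\mF^T)$ equals $\boldsymbol{\lambda}U^T$ entrywise, because $\enc$ is $\F_{q^m}$-linear and is applied columnwise. Hence every check $u_j=\langle U(j),\boldsymbol{\lambda}\rangle$ passes. The same linearity argument handles $\bb$ in the evaluation phase. Lemma~\ref{lem:comp} then gives $\textbf{x}_l\odot\bv'^T=f(\alpha)=\beta$, so the verifier accepts.

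\textbf{Binding and knowledge soundness.}
\begin{enumerate}
\item Choose $t=\Theta(\lambda)$ so that $(1-e/n)^t\le 2^{-\lambda}$. This is possible because $e/n$ is bounded below by a constant: take $e=\lfloor (d-1)/2\rfloor\approx\frac{1-\rho}{2}n$, or any $e$ proportional to $n$.
\item Together with the assumption that $\epsilon_e$ is negligible, Theorem~\ref{thm:IGtest} shows that any prover passing the testing phase with non-negligible probability has committed a matrix $U$ with $d(U^T,\cC^{\kappa})\le e<d/2$.
\item Any $\widehat U$ accepted in the opening phase shares the Merkle root with $U$. By collision resistance, except with negligible probability, $\widehat U=U$. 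Since $U^T$ lies within the unique decoding radius of $\cC^{\kappa}$, two distinct polynomials $f_0\ne f_1$ cannot both open the commitment: the map $\mF\mapsto\enc(\mF)$ is injective and the coefficient transform \eqref{eq:coeff} is a bijection. This gives binding.
\item For knowledge soundness, I would invoke the BCS compiler~\cite{BCS16} to extract $\widehat U$ straight-line. Columnwise Gabidulin decoding~\cite{WAS13} recovers $C$, and Moore interpolation (Lemma~\ref{lem:int}) followed by inverting \eqref{eq:coeff} recovers $g$.
\item Evaluation binding holds because $\bu'\ne\bb C$ would force $\rk(\bu'-\bb U^T)>d/2$, so the $t$ spot-checks pass with probability at most $(1-d/2n)^t$. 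It follows that $\beta=g(\alpha)$, i.e.\ $((\mathsf{com},\alpha,\beta),g)\in\mathcal R_{\mathsf{Eval}}$.
\end{enumerate}

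\textbf{Efficiency.}
\begin{itemize}
\item The prover computes $\mF$ with $O(k)$ Frobenius operations, performs $\kappa$ encodings of cost $\widetilde O(n)$ each via~\cite{PW18}, computes $\bv,\bv'$ at cost $O(k)$, and builds the Merkle tree at cost $O(n\kappa)$. Since $n=\kappa/\rho=O(\sqrt k)$, this totals $\widetilde O(k)\subseteq\widetilde O(\lambda k)$.
\item The verifier performs two encodings at cost $\widetilde O(n)$ and $2t$ inner products of length $\kappa$, plus Merkle path checks of $O(t\kappa\log k)$ hashes. This is $\widetilde O(\lambda\sqrt k)$.
\item The proof consists of $\bv,\bv'$ (each $O(\kappa)$ elements) and $t$ opened rows with their authentication paths, for a total of $\widetilde O(\lambda\sqrt k)$.
\end{itemize}

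\textbf{Main obstacle.} The delicate step is the knowledge-soundness reduction. The acceptance probability of the compiled argument must be converted into an IOP prover whose first oracle is exactly the extracted $\widehat U$, and this oracle must be fixed before the challenge $\boldsymbol{\lambda}$, so that Theorem~\ref{thm:IGtest} applies to a fixed $U^*$. I would handle this as in Brakedown~\cite{brakedown}, Section~4.1: condition on the Merkle root, so that the oracle is determined by collision resistance before the verifier's randomness is drawn, and then average the IOPP soundness bound over that randomness.
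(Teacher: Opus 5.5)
Your proposal is correct and follows essentially the same route as the paper. It gets completeness from the $\F_{q^m}$-linearity of $\enc$ and Lemma~\ref{lem:comp}, binding via Theorem~\ref{thm:IGtest} as in Lemma~\ref{thm:LigeroSoundness2}, knowledge soundness via the BCS extractor plus columnwise Gabidulin decoding and Moore interpolation, evaluation binding via $\rk(\bu'-\bb U^T)\ge d-e>d/2$, and uses the same cost accounting. Your explicit remark that $t=\Theta(\lambda)$ needs $e=\Theta(n)$, rather than an arbitrary $0<e\le\lfloor (d-1)/2\rfloor$, usefully clarifies a point the paper uses only implicitly through $\frac{d-1}{2n}=\frac12(1-\rho)$.
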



\begin{remark}
In our $\LRM$-PCS, as in the original Ligero protocol, we use local coordinate queries to test whether a received word $\bu$ is $e$-close to the Gabidulin code $\cC$. Suppose that $d_{\rk}(\bu,\cC)>e$. Then the claimed codeword $\bc=\enc(\bv)$ also satisfies $d_{\rk}(\bu,\bc)>e$. Since the Hamming distance $d_H(\bu,\bc)\ge d_{\rk}(\bu,\bc)$, the words $\bu$ and $\bc$ differ in more than $e$ coordinates. Therefore, after $t$ independent random coordinate queries, the verifier accepts with probability at most $\left(1-\frac{e}{n}\right)^t$.

Alternatively, one can test whether $d_{\rk}(\bu,\cC)\leq e$ using random $\mathbb F_q$-linear queries. Suppose $d_{rk}(\bu,\bc)>e$. Then the error vector $\bu-\bc$ has
$\mathbb F_q$-rank at least $e+1$. Hence, for a uniformly random vector $\ba=(a_1,\ldots,a_n)\in \mathbb F_q^n$,
\[ \Pr_{\ba}\left[\sum_{i=1}^n a_i u_i=\sum_{i=1}^n a_i c_i\right]=\Pr_{\ba}\left[\sum_{i=1}^n a_i(u_i-c_i)=0\right]\le q^{-(e+1)}.
\]
Thus, after $t$ independent random $\mathbb F_q$-linear queries, the soundness error in this step is at most $q^{-t(e+1)}$. However, this approach requires a commitment scheme for the matrix $U$ that supports openings of arbitrary $\mathbb F_q$-linear row combinations of $U$. A natural implementation is to use an $\F_q$-linear commitment to all rows of $U$, which, roughly speaking satisfies $Com(\sum_{i=1}^n a_iU(i))=\sum_{i=1}^na_iCom(U(i))$. But each linear commitment check contributes soundness error roughly
$1/q$. Hence, after $t$ queries, the overall soundness error is governed by $\max\left\{q^{-t},\,q^{-t(e+1)}\right\}=q^{-t}$, which weakens the benefit of the rank-metric linear-query test. Thus, our PCS still uses coordinate queries, while leaving the $\mathbb F_q$-linear combination queries as a direction for future work.
\end{remark}

In summary, our constructions extend code-based succinct-proof techniques from the Hamming metric to the rank metric. However, one limitation is that the soundness error contains the term $\epsilon_e\leq \frac{10q^{n-1}}{q^m}$, which arises from the proximity gaps of Gabidulin codes. To make this term at most $2^{-\lambda}$, the extension degree $m$ must exceed the code length $n$ by approximately $\frac{\lambda+\log_2 10}{\log_2 q}$. Arithmetic over the resulting extension field
$\mathbb F_{q^m}$ may introduce significant computational overhead,
particularly when $q$ is small. This parameter requirement may limit the scalability of the present constructions. Without further optimization, their computational cost is likely to be more suitable for small or moderate-size instances. Identifying concrete applications in these parameter regimes and evaluating their practical performance remain directions for future work.